\documentclass[a4paper,UKenglish,numberwithinsect]{lipics}
\usepackage{amsmath,amssymb,url,tikz}
\usepackage{graphicx}


\newcommand{\nats}{\mathbb{N}}


\def\poly{\mathrm{poly}}
\def\eps{\varepsilon}

\def\DST{\mathrm{DST}}
\def\DSF{\mathrm{DSF}}
\def\rTC{r\mbox{-}\mathrm{MTC}}
\def\TC{\mathrm{MTC}}
\def\SLC{\mathrm{SLC}}
\def\NP{\mathrm{NP}}
\def\ZTIME{\mathrm{ZTIME}}
\def\DTIME{\mathrm{DTIME}}
\def\APX{\mathrm{APX}}

\def\G{\mathcal{G}}


\title{On the Size and the Approximability of Minimum Temporally Connected Subgraphs}
\titlerunning{Size and Approximability of Minimum Temporally Connected Subgraphs}

\author{Kyriakos Axiotis}
\author{Dimitris Fotakis}

\affil{School of Electrical and Computer Engineering, National Technical University of Athens, Greece \\
\texttt{kaxiotis@corelab.ntua.gr}, \texttt{fotakis@cs.ntua.gr}}

\authorrunning{K. Axiotis and D. Fotakis}

\Copyright{Kyriakos Axiotis and Dimitris Fotakis}

\subjclass{F.1.3 Complexity Measures and Classes, F.2.2 Nonnumerical Algorithms and Problems, G.2.2 Graph Theory}
\keywords{Temporal Graphs, Temporal Connectivity, Approximation Algorithms, Graph Theory}

\serieslogo{}
\volumeinfo
  {Billy Editor, Bill Editors}
  {2}
  {Conference title on which this volume is based on}
  {1}
  {1}
  {1}
\EventShortName{}
\DOI{10.4230/LIPIcs.xxx.yyy.p}

\begin{document}

\maketitle

\begin{abstract}
We consider \emph{temporal graphs} with discrete time labels and investigate the size and the approximability of minimum temporally connected spanning subgraphs. We present a family of minimally connected temporal graphs with $n$ vertices and $\Omega(n^2)$ edges, thus resolving an open question of (Kempe, Kleinberg, Kumar, JCSS~64, 2002) about the existence of sparse temporal connectivity certificates. Next, we consider the problem of computing a minimum weight subset of temporal edges that preserve connectivity of a given temporal graph either from a given vertex $r$ ($\rTC$ problem) or among all vertex pairs ($\TC$ problem). We show that the approximability of $\rTC$ is closely related to the approximability of Directed Steiner Tree and that $\rTC$ can be solved in polynomial time if the underlying graph has bounded treewidth. We also show that the best approximation ratio for $\TC$ is at least $O(2^{\log^{1-\eps} n})$ and at most $O(\min\{n^{1+\eps}, (\Delta M)^{2/3+\eps}\})$, for any constant $\eps > 0$, where $M$ is the number of temporal edges and $\Delta$ is the maximum degree of the underlying graph. Furthermore, we prove that the unweighted version of $\TC$ is $\APX$-hard and that $\TC$ is efficiently solvable in trees and $2$-approximable in cycles.
\end{abstract}


\section{Introduction}
\label{s:intro}

Graphs and networks are ubiquitous in Computer Science, as they provide a natural and useful abstraction of many complex systems (e.g., transportation and communication networks) and processes (e.g., information spreading, epidemics, routing), and also of the interaction between individual entities or particles (e.g., social networks, chemical and biological networks). Traditional graph theoretic models assume that the structure of the network and the strength of interaction are time-invariant. However, as observed in e.g., \cite{Ber96,KKK00}, in many applications of graph theoretic models, the availability and the weights of the edges are actually time-dependent. For instance, one may think of information spreading and distributed computation in dynamic networks (see e.g., \cite{CFQS12,DPRSV13,KKK00,KLO10}), of mobile adhoc and sensor networks (see e.g., \cite{DW05}), of transportation networks and route planning (see e.g., \cite{Ber96,FHS11}), of epidemics, biological and ecological networks (see e.g., \cite{HS12,KKK00}), and of influence systems and coevolutionary opinion formation (see e.g., \cite{BGM13,Chaz12}).

Several variants of time-dependent graphs have been suggested as abstractions of such settings and computational problems (see e.g., \cite{CFQS12} and the references therein). No matter the particular variant, the main research questions are usually related either to optimizing or exploiting temporal connectivity or to computing short time-respecting paths (see e.g., \cite{AGMS15,AGMS14,Ber96,EHK15,FHS11,KKK00,MMCS13}). In this work, we adopt the simple and natural model of temporal graphs with discrete time labels \cite{KKK00} (and its extension with multiple labels per edge \cite{MMCS13}), and study the existence of dense minimally connected temporal graphs and the approximability of temporally connected spanning subgraphs with minimum total weight.

\smallskip\noindent{\bf Temporal Graphs and Temporal Connectivity.}
A \emph{temporal graph} is defined on a time-invariant set of $n$ vertices. Each (undirected) edge $e$ is associated with a set of discrete time labels denoting when $e$ is available. If every edge is associated with a single time label, as in \cite{KKK00}, the temporal graph is \emph{simple}. An edge $e$ available at time $t$ comprises a temporal edge $(e, t)$ and there is a positive weight $w(e, t)$ associated with it. A (resp. strict) \emph{temporal} (or \emph{time-respecting}) path is a sequence of temporal edges with non-decreasing (resp. increasing) time labels. So, temporal paths respect the time availability constraints of the edges.

Given a source vertex $r$, a temporal graph is (temporally) \emph{$r$-connected} if there is a temporal path from $r$ to any other vertex. A temporal graph is (temporally) \emph{connected} if there exists a temporal path from any vertex to any other vertex. We study the existence of dense minimally connected temporal graphs and the optimization problems of computing a minimum weight subset of temporal edges that preserve either $r$-connectivity or connectivity. We refer to these optimization problems as (Minimum) Single-Source Temporal Connectivity (or $\rTC$, in short) and (Minimum) All-Pairs Temporal Connectivity (or $\TC$, in short). They arise as natural generalizations of Minimum Spanning Tree and Minimum Arborescence in temporal networks, and to the best of our knowledge, their approximability has not been determined so far (but see \cite{AGMS15,HFL15} for some results on variants or special cases).

\smallskip\noindent{\bf Previous Work and Motivation.}
The model of simple temporal graphs with discrete time labels was introduced in \cite{KKK00}. It is essentially equivalent to the model of scheduled networks \cite{Ber96}, where each edge is available in a time interval. \cite{Ber96,KKK00} investigated how time availability restrictions on the edges affect certain graph properties. Berman \cite{Ber96} presented an algorithm for reachability by temporal paths and proved that an analogue of the max-flow-min-cut theorem holds for temporal graphs. Kempe et al. \cite{KKK00} focused on vertex-disjoint temporal paths and showed that Menger's theorem does not generalize to temporal graphs. They also identified a simple forbidden topological minor for Menger's theorem in temporal graphs. Mertzios at al. \cite{MMCS13} introduced multiple labels per edge and studied the number of temporal edges required for a temporal design to guarantee certain graph properties. Interestingly, they proved that a variant of Menger's theorem, which also takes time into account, holds in all temporal graphs. A key technical tool in \cite{Ber96,KKK00,MMCS13} is the time-expanded version of a temporal graph, which reduces reachability, edge-disjoint path and vertex-disjoint path questions in temporal graphs to similar questions in standard directed graphs.

Our motivation comes from a natural open question in \cite[Section~6]{KKK00}. Attempting an analogy between spanning trees of (standard undirected) graphs and connectivity certificates of temporal graphs, Kempe et al. asked whether any simple temporal graph admits a sparse connectivity certificate. They observed that any $r$-connected temporal graph has a time-respecting arborescence with $n-1$ edges that serves as a sparse $r$-connectivity certificate. For general temporal connectivity, however, minimum temporal connectivity certificates of different graphs have different sizes. Kempe et al. observed that an allocation of time labels to the edges of the hypercube makes it minimally temporally connected. Hence, there are temporal graphs on $n$ vertices with temporal connectivity certificates of $\Omega(n \log n)$ edges. An open question in \cite[Section~6]{KKK00} was to determine the tightest function $c(n)$ for which any temporally connected graph on $n$ vertices has a temporal connectivity certificate with at most $c(n)$ edges. A trivial upper bound on $c(n)$ is $O(n^2)$, since taking $n$ time-respecting arborescences, each rooted at a different vertex, results in a temporally connected subgraph. Kempe et al. observed that if we consider strict temporal paths and allow for the same time label at different edges, $c(n) = \Omega(n^2)$ (e.g., consider $K_n$ with the same time label on all edges). Nevertheless, for connectivity with strict temporal paths and distinct time labels, the best known lower bound on $c(n)$ is $\Omega(n \log n)$ (\cite{AGMS15}, by a labeling of the hypercube).

\smallskip\noindent{\bf Contribution.}
In this work, we resolve the open question of \cite{KKK00} and derive upper and lower bounds on the approximability of Single-Source and All-Pairs Temporal Connectivity.

In Section~\ref{s:lower}, we construct a family of simple temporal graphs with $3n$ vertices and roughly $n(n+9)/2$ edges which are almost minimally temporally connected, in the sense that the removal of any subset of $5n$ edges results in a disconnected temporal graph%
\footnote{Based on Theorem~\ref{th:lower}, we can easily obtain a family of minimally connected temporal graphs with $\Omega(n^2)$ edges (e.g., we remove temporal edges from the graph, as long as connectivity is preserved). For simplicity and clarity, we avoid presenting a tight (but more complicated) construction of dense minimally connected temporal graphs, and stick to almost minimal graphs in
the proof of Theorem~\ref{th:lower}.}
(Theorem~\ref{th:lower}). Hence, we show that $c(n) = \Theta(n^2)$  (i.e., there are graphs with dense minimum temporal connectivity certificates), thus resolving the open question of \cite{KKK00}. Our construction is essentially best possible and can be easily extended to connectivity by strict temporal paths (with distinct time labels on the edges). An interesting feature of our construction (and an indication of its tightness) is that slightly increasing the time label of a single temporal edge results in a temporal connectivity certificate with $O(n)$ edges!

Given the huge gap on the size of temporal connectivity certificates, it is natural to ask about the complexity and the approximability of Single-Source and All-Pairs Temporal Connectivity. Previous work shows that we can decide if a temporal graph is connected in polynomial time (see e.g., \cite{AGMS15,Ber96,KKK00}) and that Single-Source Temporal Connectivity can be solved in polynomial time in the unweighted case. 
Another interesting observation is that if we use the time-expanded version of a temporal graph for Minimum Temporal Connectivity, the resulting optimization problems are quite similar to Group Steiner Tree problems. In fact, this observation serves as the main intuition behind several of our results.

In Section~\ref{s:rooted}, we show that the polynomial-time approximability of Single-Source Temporal Connectivity ($\rTC$) is closely related to the approximability of the classical Directed Steiner Tree problem. Using a transformation from Directed Steiner Tree to $\rTC$ (Theorem~\ref{th:rooted_lb}) and \cite[Theorem~1.2]{HK03}, we show that $\rTC$ cannot be approximated within a ratio of $O(\log^{2-\eps} n)$, for any constant $\eps > 0$, unless $\NP \subseteq \ZTIME(n^{\poly\log n})$. Our transformation also implies that any $o(n^{\eps})$-approximation for $\rTC$ would improve the best known approximation ratio of Directed Steiner Tree. On the positive side, using a transformation from $\rTC$ to Directed Steiner Tree and the algorithm of \cite{Char98}, we obtain a polynomial-time $O(n^\eps)$-approximation, for any constant $\eps>0$, and a quasipolynomial-time $O(\log^3 n)$-approximation for $\rTC$ (Theorem~\ref{th:rooted_up}). We also show that $\rTC$ can be solved in polynomial time 
if the underlying graph has bounded treewidth (Theorem~\ref{th:rooted_twidth}).

In Section~\ref{s:mtcs}, we consider the approximability of All-Pairs Temporal Connectivity ($\TC$). Theorem~\ref{th:rooted_up} implies an $O(n^{1+\eps})$-approximation for $\TC$  (Corollary~\ref{cor:appr-rtc}). An approximation-preserving reduction to Directed Steiner Forest and \cite[Theorem~1.1]{FKN09} imply a polynomial-time $O((\Delta M)^{2/3+\eps})$-approximation for $\TC$, 
where $M$ is the number of temporal edges and $\Delta$ is the maximum degree of the underlying graph (Theorem~\ref{th:appr-dsf}). If $M$ is quasilinear and $\Delta$ is polylogarithmic, we obtain an $O(n^{2/3+\eps})$-approximation. On the negative side, a reduction from Symmetric Label Cover implies that $\TC$ cannot be approximated within a factor of $O(2^{\log^{1-\eps} n})$ unless $\NP \subseteq \DTIME(n^{\poly\log n})$ (Theorem~\ref{th:inappr}, see also \cite[Section~4]{DK99}). We also show that the unweighted version of $\TC$ is $\APX$-hard (Theorem~\ref{th:apx-hard}).

In Section~\ref{s:special}, we show that $\TC$ can be solved optimally, in polynomial time, if the underlying graph is a tree (Theorem~\ref{th:tree}), and that $\TC$ is $2$-approximable if the underlying graph is a cycle (Theorem~\ref{th:cycle}, but it is open whether $\TC$ remains $\NP$-hard for cycles).

For clarity, we focus on connectivity by (non-strict) temporal paths. However, all our results can be extended (with small changes in the proofs and with the same approximation guarantees and running times) to the case of connectivity by strict temporal paths.

\smallskip\noindent{\bf Comparison to Previous Work.}
Akrida et al. \cite{AGMS15} study connectivity by strict temporal paths. Allocating distinct time labels to the hypercube, they obtain a minimal temporally connected graph with $\Omega(n \log n)$ edges. They also show that any allocation of distinct labels to $K_n$ results in a temporal graph that is not minimally connected. However, they do not give any lower bound on the size of temporal connectivity certificates for $K_n$. Our Theorem~\ref{th:lower} 
improves on the lower bound of \cite{AGMS15} from $\Omega(n \log n)$ to $\Omega(n^2)$. \cite{AGMS15} also shows that computing the maximum number of edges that are redundant for temporal connectivity is $\APX$-hard.

Huang et al. \cite{HFL15} consider the Single-Source (but not the All-Pairs) version of Minimum Temporal Connectivity in simple scheduled networks \cite{Ber96}. They show that the problem is $\APX$-hard. Using a transformation to Directed Steiner Tree, they show that the approximation guarantees of \cite{Char98} carry over to Single-Source Temporal Connectivity for scheduled networks. Although the approximation guarantees are the same, the reduction of \cite{HFL15} is slightly different and less general than ours in Theorem~\ref{th:rooted_up} (which we discovered independently). In addition to the approximability result, we present strong inapproximability bounds for $\rTC$ and show that it is polynomially solvable for graphs with bounded treewidth.

Erlebach et al. \cite{EHK15} study the problem of computing a shortest exploration schedule of a temporal graph, i.e., a shortest strict temporal walk that visits all vertices. They prove that it is $\NP$-hard to approximate the shortest exploration schedule within a factor of $O(n^{1-\eps})$, for any $\eps > 0$, and construct temporal graphs whose exploration requires $\Theta(n^2)$ steps. Since the notion of exploration schedules is much stronger than ($r$-)connectivity, their results do not have any immediate implications for $\rTC$ and $\TC$ (e.g., the $\Theta(n^2)$-explorable graphs of \cite[Lemma~4]{EHK15} admit a temporally connected subgraph with $O(n)$ edges).

\section{The Model and Preliminaries}
\label{s:prelim}

%
Throughout, we let $[k] \equiv \{1, \ldots, k\}$, for any integer $k \geq 1$.
%
%
An (edge weighted) \emph{temporal graph} $\G(V, E, L)$ with vertex set $V$, edge set $E$ and lifetime $L$ is a sequence of (undirected edge-weighted) graphs $(G_t(V, E_t, w_t))_{t \in [L]}$, where $E_t \subseteq E$ is the set of edges available at time $t$ and $w_t(e)$ (or $w(e, t)$) is the nonnegative weight of each edge $e \in E_t$. We often write $\G$ or $\G(V, E)$, for brevity. A temporal graph $\G$ is \emph{unweighted} if $w(e, t) = 1$ for all $e \in E_t$ and all $t \in [L]$.
For each edge $e \in E_t$, we say that $(e, t)$ is a \emph{temporal edge} of $\G$. For each edge $e \in E$, $L_e = \{ t \in [L]: e \in E_t \}$ denotes the set of time units (or \emph{time labels}) when $e$ is available. A temporal graph is \emph{simple} if $|L_e| = 1$ for all edges $e \in E$.
We usually let $n$ denote the number of vertices and $M = \sum_e |L_e|$ denote the number of temporal edges of $\G$. For temporal connectivity problems, we can assume wlog. that at least one temporal edge is available in each time unit, and thus, $L \leq M$.
%
%
The (static) graph $G(V, E)$ is the \emph{underlying graph} of $\G$. We say that $\G$ has some (non-temporal) graph theoretic property (e.g., is a tree, a cycle, a clique, has bounded treewidth) if the underlying graph $G$ has this property.

For a vertex set $S$, $G[S]$ (resp. $\G[S]$) is the underlying (resp. temporal) graph induced by $S$. A spanning subgraph $\G'$ of a temporal graph $\G = (G_t(V, E_t, w_t))_{t \in [L]}$ is a sequence of graphs $(G'_t(V, E'_t, w_t))_{t \in [L]}$ such that $E'_t \subseteq E_t$. The total weight of $\G'$ is $\sum_{t \in [L]} \sum_{e \in E'_t} w(e, t)$.

\smallskip\noindent{\bf Temporal Connectivity.}
A \emph{temporal} (or \emph{time-respecting}) path is an alternating sequence of vertices and temporal edges $(v_1, (e_1, t_1), v_2, (e_2, t_2), \ldots, v_k, (e_k, t_k), v_{k+1})$, such that $e_i = \{ v_i, v_{i+1} \} \in E_{t_i}$, for all $i \in [k]$, and $1 \leq t_1 \leq t_2 \leq \cdots \leq t_k$. A temporal path is \emph{strict} if $t_1 < t_2 < \cdots < t_k$. Such a temporal path is from $v_1$ to $v_{k+1}$ (or a temporal $v_1 - v_{k+1}$ path).

A temporal graph $\G$ is (temporally) \emph{$r$-connected}, for a given source $r \in V$, if there is a temporal path from $r$ to any vertex $u \in V$. A temporal graph $\G$ is (temporally) \emph{connected}, if there is a temporal path from $u$ to $v$ for any ordered pair $(u, v) \in V \times V$. If all temporal paths are strict, $\G$ is strictly connected (or strictly $r$-connected). An \emph{($r$-)connectivity certificate} of $\G$ is any spanning subgraph of $\G$ that is also ($r$-)connected.

\smallskip\noindent{\bf Single-Source and All-Pairs Temporal Connectivity.}
Given a temporal graph $\G$ and a source vertex $r$, the problem of (Minimum) \emph{Single-Source Temporal Connectivity} ($\rTC$) is to compute a temporally $r$-connected spanning subgraph of $\G$ with minimum total weight. The optimal solution to $\rTC$ is a simple temporal graph whose underlying graph is a tree (see \cite[Section~6]{KKK00} and Lemma~\ref{l:rtc_tree}).
Given a temporal graph $\G$, the problem of (Minimum) \emph{All-Pairs Temporal Connectivity} ($\TC$) is to compute a temporally connected spanning subgraph of $\G$ with minimum total weight.

\smallskip\noindent{\bf Approximation Ratio.}
An algorithm $A$ has \emph{approximation ratio} $\rho \geq 1$ for a minimization problem if for any instance $I$, the cost of $A$ on $I$ is at most $\rho$ times $I$'s optimal cost.

\smallskip\noindent{\bf Directed Steiner Tree and Forest.}
To understand the approximability of $\rTC$ and $\TC$, we use reductions from and to Directed Steiner Tree and the Directed Steiner Forest.

Given a directed edge-weighted graph $G(V, E)$ with $n$ vertices, a source $r \in V$ and a set of $k$ terminals $S \subseteq V$, the Directed Steiner Tree ($\DST$) problem asks for a subgraph of $G$ that includes a directed path from $r$ to any vertex in $S$ and has minimum total weight. The best known algorithm for $\DST$ is due to Charikar et al. \cite{Char98} and achieves an approximation ratio of $O(k^\eps)$, for any constant $\eps > 0$, in polynomial time, and of $O(\log^3 k)$ in quasipolynomial time. On the negative side, \cite[Theorem~1.2]{HK03} shows that $\DST$ cannot be approximated within a factor $O(\log^{2-\eps} n)$, for any constant $\eps > 0$, unless $\NP \subseteq \ZTIME(n^{\poly\log n})$.

Given a directed edge-weighted graph $G(V, E)$ with $n$ vertices and $m$ edges, and a collection $D \subseteq V\times V$ of $k$ ordered vertex pairs, the Directed Steiner Forest ($\DSF$) problem asks for a subgraph of $G$ that contains an $s-t$ path for each $(s,t) \in D$ and has minimum total weight. \cite{FKN09} presents a polynomial-time $O(n^\eps \min\{ n^{4/5}, m^{2/3} \})$-approximation for $\DSF$, for any constant $\eps > 0$.

\section{A Lower Bound on the Size of Temporal Connectivity Certificates}
\label{s:lower}

In this section, we construct an infinite family of simple temporal graphs with $\Theta(n)$ vertices and lifetime $\Theta(n)$ such that any temporal connectivity certificate has $\Omega(n^2)$ edges. Our construction is essentially best possible, since any temporal graph with $n$ vertices and lifetime $L$ admits a connectivity certificate with $O(\min\{ n^2, nL\})$ edges.

\begin{theorem}\label{th:lower}
For any even $n \geq 2$, there is a simple connected temporal graph with $3n$ vertices, $n(n+9)/2-3$ edges and lifetime at most $7n/2$, so that the removal of any subset of\, $5n$ edges results in a disconnected temporal graph.
\end{theorem}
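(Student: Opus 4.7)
The plan is to construct an explicit temporal graph $\G$ on $3n$ vertices and verify its two required properties. I will partition the vertices into three groups $A = \{a_1,\ldots,a_n\}$, $B = \{b_1,\ldots,b_n\}$, $C = \{c_1,\ldots,c_n\}$ and take the dominant part of $\G$ to be a clique on $C$, contributing $\binom{n}{2} = n(n-1)/2$ edges. The remaining $5n - 3$ edges will form a sparse backbone that links $A$, $B$, and $C$ — for instance, a matching between $A$ and $C$, a matching between $B$ and $C$, and short paths or matchings inside $A$ and $B$. Summing gives $\binom{n}{2} + 5n - 3 = n(n+9)/2 - 3$ edges, as stated. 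The lifetime budget $7n/2$ forces labels to be reused extensively: each label-slot will carry up to $\Theta(n)$ edges of the clique.

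The heart of the proof is the time labeling. I partition the interval $[1, 7n/2]$ into three bands: an early band for $A$-side backbone edges, a middle band of roughly $n$ slots carrying the $\binom{n}{2}$ clique edges, and a late band for $B$-side backbone edges. For each clique edge $e = \{c_i,c_j\}$ I designate a unique witness pair $(s_e, t_e) \in A \times B$ via an injection from the edge set of $K_n$ into $A \times B$, and I set $\tau(e)$ and the surrounding backbone labels so that the only temporal $s_e$-to-$t_e$ path in $\G$ is the three-hop path $s_e \to c_i \xrightarrow{\,e\,} c_j \to t_e$. A schedule in the spirit of $\tau(\{c_i,c_j\}) = (i+j) \bmod n$ (shifted into the middle band) ensures that at each vertex $c_k$ and each time slot, at most one clique edge is available at $c_k$, so alternative temporal detours through other clique edges are forbidden by monotonicity.

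With the construction in place, the proof proceeds in three steps. First, I verify temporal connectivity by exhibiting, for each ordered pair $(u,v) \in V \times V$, an explicit time-respecting path: most pairs are handled by the backbone alone, while pairs of the form $(s_e,t_e)$ use the dedicated three-hop witness path through $e$. Second, I verify essentiality: for each clique edge $e$, removing $e$ leaves no temporal $s_e$-to-$t_e$ path, because any alternative must use a different clique edge at $c_i$ or $c_j$, and the labeling blocks all such alternatives. Third, I count: essentially all $\binom{n}{2}$ clique edges are individually essential, while the $5n-3$ backbone edges can contain at most a handful of redundant edges, so the total number of removable edges of $\G$ is at most $5n-1$. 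Thus any subset of $5n$ edges must contain an essential edge and its removal disconnects $\G$.

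The main obstacle will be the clique-edge labeling. Because each time slot in the middle band must carry $\Theta(n)$ clique edges, I must verify a uniqueness property \emph{globally}, not just per slot: even an adversarial sequence of clique hops that stays within the middle band and respects monotonicity must fail to connect $s_e$ to $t_e$ for the appropriate witness pair. Controlling this requires choosing $\tau$ together with the backbone schedule so that (i) at each vertex $c_k$, the edges incident to $c_k$ appear in an order that allows at most one arrival-then-departure transition per traversal, and (ii) the backbone entry and exit points tightly bracket $\tau(e)$ for each witness pair. Verifying (i)--(ii) is a combinatorial case analysis that I expect to be the most delicate piece of the proof, but it is precisely what rules out temporal detours and yields the $\Omega(n^2)$ lower bound.
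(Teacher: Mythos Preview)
Your high-level architecture (dense clique plus sparse backbone, witness pairs certifying essentiality) matches the paper's, but the specific mechanism you propose cannot work, and the gap is not just ``delicate case analysis.'' You want each clique edge $e=\{c_i,c_j\}$ to have its own witness pair $(s_e,t_e)\in A\times B$ with the three-hop path $s_e\to c_i\to c_j\to t_e$ being the \emph{unique} temporal $s_e$--$t_e$ path. But with only a matching between $A$ and $C$ (and $B$ and $C$), each $c_i$ has a single fixed entry time $\alpha_i$ from $A$ and each $c_j$ a single fixed exit time $\beta_j$ to $B$; these cannot depend on which edge $e$ you are witnessing. Hence the ``tight bracketing'' you need is impossible: the window $[\alpha_i,\beta_j]$ is the same early-to-late span for every one of the $n-1$ edges at $c_i$, and within that window a labeling like $\tau(\{c_i,c_j\})=(i+j)\bmod n$ admits many monotone multi-hop $c_i$--$c_j$ paths through the clique (for most pairs, some $c_i\to c_k\to c_j$ already works). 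Removing a single clique edge therefore does \emph{not} disconnect your witness pair.

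The idea you are missing, and which the paper uses, is to avoid witnessing clique edges one by one. Instead, decompose $K_n$ into $n/2$ edge-disjoint Hamiltonian paths $p_1,\ldots,p_{n/2}$ and give \emph{all} edges of $p_i$ the same label $i$. Attach a pair $h_{2i-1},h_{2i}$ to the two endpoints of $p_i$, also with label $i$. Then any temporal $h_{2i}\to h_{2i-1}$ path that enters and leaves this structure at label $i$ can only use label-$i$ edges, hence must traverse the entire Hamiltonian path $p_i$; removing \emph{any} edge of $p_i$ kills this unique path. Thus $n/2$ witness pairs certify all $\binom{n}{2}$ clique edges simultaneously. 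The genuinely delicate part is then the design of the third vertex class (the interconnecting part), whose job is to make the whole graph temporally connected without creating any alternative $h_{2i}\to h_{2i-1}$ route; this is where the careful label engineering goes, not in the clique labeling itself.
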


\begin{proof}[Proof sketch.]
For any even $n$, we construct a simple connected temporal graph $\G$ with $\Theta(n)$ vertices and $\Theta(n^2)$ edges so that virtually any edge is essential for temporal connectivity. 

\smallskip\noindent\emph{The Construction.}
For any even $n$, $\G$ consists of $3n$ vertices partitioned into three sets $A = \{a_1, \ldots, a_n\}$, $H = \{ h_1, \ldots, h_n\}$ and $M = \{ m_1, \ldots, m_n\}$, with $n$ vertices each.

The underlying graph $G[A]$ is the complete graph $K_n$ and comprises the \emph{dense} part of the construction with $\Theta(n^2)$ edges. The edges of $G[A]$ are partitioned into $n/2$ edge-disjoint paths $p_1, \ldots, p_{n/2}$. Each path $p_i$ has length $n-1$ and spans all vertices in $A$ (see Figure~\ref{fig:constr}.a and Lemma~\ref{l:partition}). All edges of each path $p_i$ have time label $i$.

The vertices of $H$ comprise the \emph{intermediate} part of the construction. There are no edges with both endpoints in $H$. For every $i \in [n/2]$, one endpoint of the path $p_i$ is connected to $h_{2i-1}$ and the other endpoint is connected to $h_{2i}$. Both edges have time label $i$.

The vertices of $M$ form the \emph{interconnecting} part of the construction. For each $i \in [n/2]$, we refer to $m_{2i-1}$ (resp. $m_{2i}$) as the \emph{entry vertex} (resp. the \emph{exit vertex}) for the vertices $h_{2i-1}$ and $h_{2i}$. There are two edges connecting $m_{2i-1}$ to $h_{2i-1}$ and $h_{2i}$ with labels $n/2+2i-1$ and $n/2+2i$, respectively, and two edges connecting $m_{2i}$ to $h_{2i-1}$ and $h_{2i}$ with labels $(n/2+2i-1)\epsilon$ and $(n/2+2i)\epsilon$, respectively, for some fixed $\epsilon \in (0, 1/(4n))$.
We also connect the vertices of $M$ to each other. For every $i \in [n/2-2]$, there are edges connecting $m_{2i-1}$ to $m_{2i+2}$ and to $m_n$, and a single edge connecting $m_{n-3}$ to $m_n$. To allocate time labels to these edges, we order them in decreasing order of their endpoint with higher index, breaking ties by ordering them in increasing order of their endpoint with lower index, i.e., the order is $\{ m_1, m_n \}$, $\{ m_3, m_n \}, \ldots, \{ m_{n-3}, m_n \}$, $\{ m_{n-5}, m_{n-2} \}$, $\{ m_{n-7}, m_{n-4} \}, \ldots, \{ m_{1}, m_{4} \}$. The time label of the $k$-th edge in this order is $1-(k-1)\epsilon$.
Finally, for every $i \in [n/2]$, there are an edge with time label $\epsilon$ connecting the vertex $m_{2i-1}$ to the vertex $a_{2i-1}$ in $A$ and an edge with time label $n+1$ connecting the vertex $m_{2i}$ to the vertex $a_{2i}$ in $A$ (see also Figure~\ref{fig:constr}.b).

The total number of edges is $n(n+9)/2-3$, the number of different time labels is at most $7n/2$, and each edge has a single label (see also Section~\ref{s:app:part-a}).

\begin{figure}[!t]
	\centering%
	\begin{minipage}[t]{.42\textwidth}\centering%
			\includegraphics[width=\textwidth]{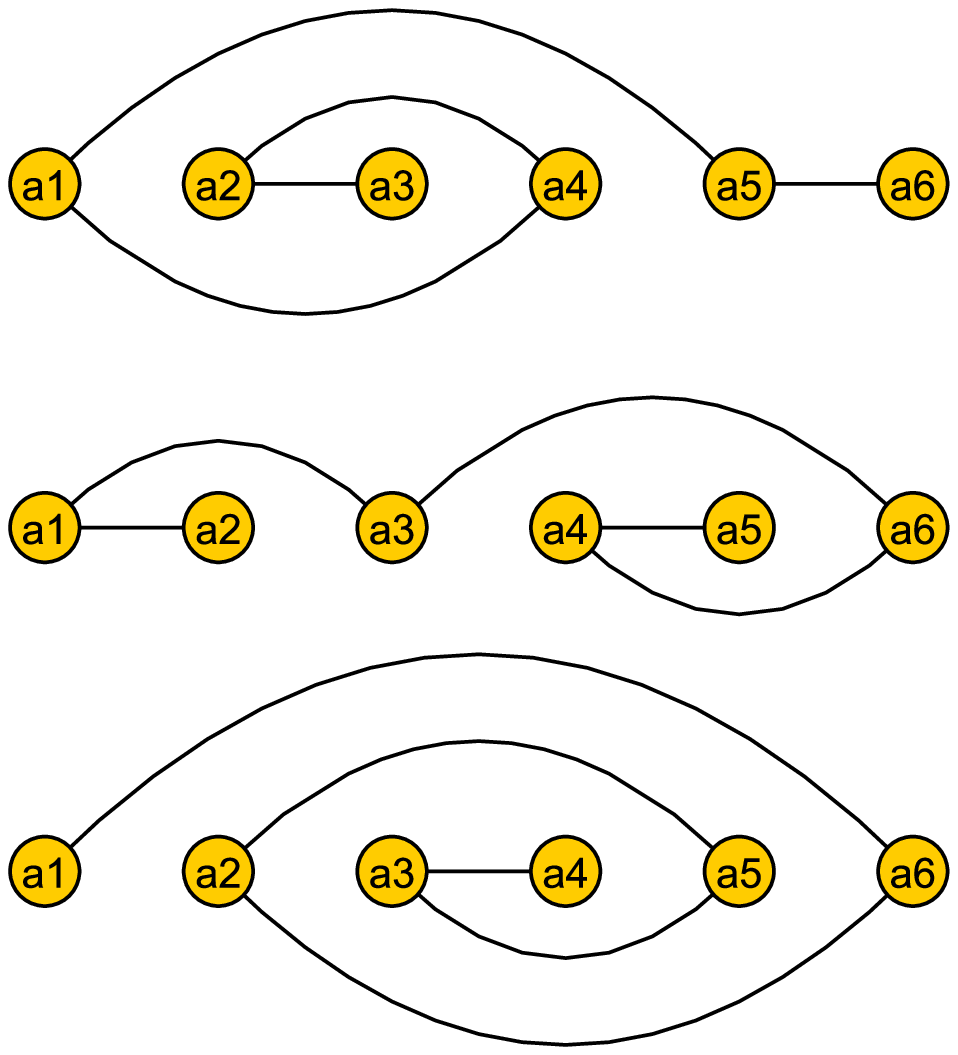}
			\caption*{(a) Partition into $3$ Hamiltonian paths.}
	\end{minipage}%
	\hfill%
	\begin{minipage}[t]{.5\textwidth}\centering%
			\includegraphics[width=\textwidth]{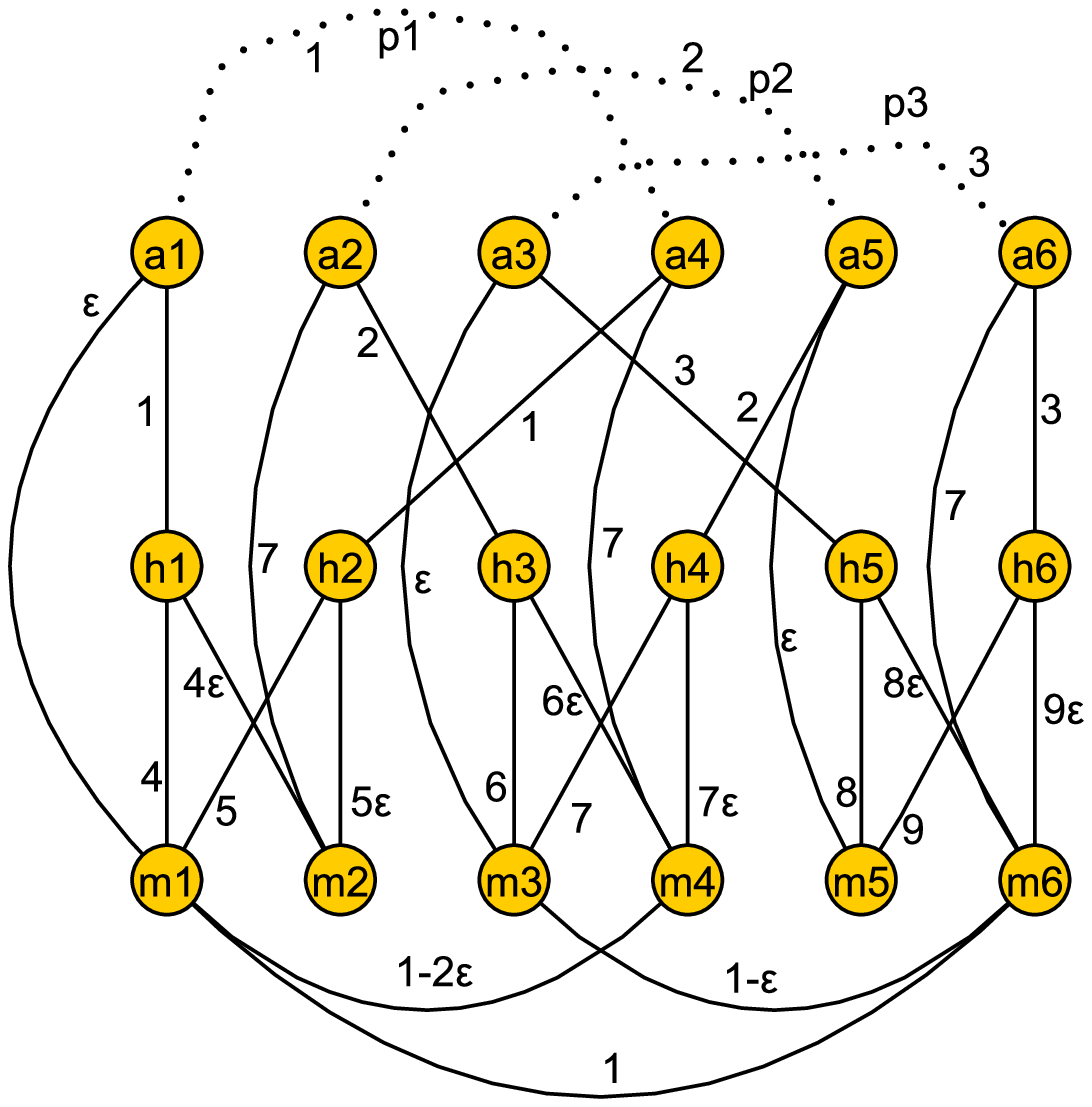}
			\caption*{\hskip15mm (b) Putting the $3$ parts together.}
	\end{minipage}
	\caption{The temporal graph constructed in the proof of Theorem~\ref{th:lower} for $n=6$.}
	\label{fig:constr}
\end{figure}

\smallskip\noindent\emph{Intuition and Main Claims.}
The construction is based on the collection $p_1, \ldots, p_{n/2}$ of $n/2$ edge-disjoint paths, where all edges in each path $p_i$ have label $i$. Extending each path $p_i$ to vertices $h_{2i-1}$ and $h_{2i}$, we get a path that connects $h_{2i}$ to $h_{2i-1}$ (and vice versa) and to all vertices in $A$ at time $i$. Moreover, different time labels make these paths essentially independent of each other, in the sense that if a temporal walk begins and ends at time $i$, it can use only edges with label $i$ (i.e., only edges of this path) to connect $h_{2i}$ to $h_{2i-1}$. In Section~\ref{s:app:part-c}, we formalize this intuition and show that the unique temporal path from $h_{2i}$ to $h_{2i-1}$ is through path $p_i$. Therefore, all edges of $G[A]$ must be present in any temporally connected spanning subgraph of $\G$.
To achieve a dense underlying graph $G[A]$, we observe that the collection of $n/2$ edge-disjoint paths can be defined so that they go through the same $n$ vertices, in a different order each (see Figure~\ref{fig:constr}.a and Lemma~\ref{l:partition}). This describes the main intuition behind our construction and explains how the dense and the intermediate parts work. The only problem now is that $H$-vertices with high indices, e.g., $h_n$, cannot reach $H$-vertices with low indices, e.g., $h_1$. The vertices in the interconnecting part $M$ serve to carefully connect each $h_j$ to each $h_i$, with $j>i+1$, without destroying the property that the only temporal path from $h_{2i}$ to $h_{2i-1}$ is through path $p_i$.

For every vertex pair $h_{2i-1}, h_{2i} \in H$, we introduce a vertex pair $m_{2i-1}, m_{2i} \in M$. As an entry vertex, $m_{2i-1}$ is connected to $h_{2i-1}$ and $h_{2i}$ with ``large'' labels (larger than $n/2$). Hence, starting from the rest of $\G$, we can reach $h_{2i-1}$ and $h_{2i}$ through $m_{2i-1}$, but we cannot continue to the edges of $p_i$ (with label $i \leq n/2$). As an exit vertex, $m_{2i}$ is connected to $h_{2i-1}$ and $h_{2i}$ with ``very small'' labels (at most $1/4$). Thus, starting from $m_{2i}$, we can reach first $h_{2i-1}$ and $h_{2i}$, and then all vertices in $A$ and any vertex $h_j$ with index $j > 2i$. Moreover, to avoid creating a temporal path from $h_{2i}$ to $h_{2i-1}$, the label of the edge $\{ h_{2i}, m_{2i-1}\}$ (resp. $\{ h_{2i}, m_{2i}\}$) is larger than the label of the edge $\{ h_{2i-1}, m_{2i-1}\}$ (resp. $\{ h_{2i-1}, m_{2i}\}$).

It remains now to connect the $M$-vertices to each other, without creating any alternative temporal paths from $h_{2i}$ to $h_{2i-1}$, for any $i \in [n/2]$. For each $i \in [n/2]$, the edges between $M$-vertices should create temporal paths from $m_{2i-1}$ and $m_{2i}$ to any vertex $m_j$ with index $j < 2i-1$. On the other hand, they should not create any temporal $m_{2i} - m_{2i-1}$ paths, since then we would have a new temporal $h_{2i} - h_{2i-1}$ path. We introduce roughly $n$ edges between $M$-vertices and carefully select their ``small'' labels in $[3/4, 1]$. Furthermore, to achieve temporal connectivity between all vertex pairs, we introduce an edge $\{ m_{2i-1}, a_{2i-1} \}$ with the minimum time label $\epsilon$ and an edge $\{ m_{2i}, a_{2i} \}$ with label $n+1$, for each $i \in [n/2]$.

To complete the proof, in Section~\ref{s:app:part-b}, we consider all possible types of ordered vertex pairs and show that the temporal graph $\G$ is indeed connected. Moreover, any subset of at least $5n$ edges includes some edges of $G[A]$. In Section~\ref{s:app:part-c}, we show that the removal of any edge from $G[A]$ with label $i$ destroys the unique temporal path from $h_{2i}$ to $h_{2i-1}$.
\end{proof}

We should highlight that increasing the label of edge $\{ a_1, m_1 \}$ from $\epsilon$ to $1$, in the graph of Theorem~\ref{th:lower}, results in a temporal graph that admits a connectivity certificate of size $\Theta(n)$ (see Lemma~\ref{l:fragile}).
Moreover, it is not hard to modify the construction of Theorem~\ref{th:lower} so that all time labels of the edges are distinct, the temporal graph $\G$ is connected by strict temporal paths, and the removal of any subset of $5n$ edges results in a disconnected temporal graph. Therefore, the quadratic lower bound of  Theorem~\ref{th:lower} also applies to connectivity by strict temporal paths and improves on the lower bound of $\Omega(n \log n)$ in \cite[Theorem~3]{AGMS15}.

\section{The Approximability of Single-Source Temporal Connectivity}
\label{s:rooted}

We proceed to study the approximability of the single-source version of Minimum Temporal Connectivity. We show that the polynomial-time approximability of $\rTC$ is closely related to the approximability of the classical Directed Steiner Tree ($\DST$) problem and that $\rTC$ can be solved in polynomial-time for graphs of bounded treewidth.

\subsection{A Lower Bound on the Approximability of $\rTC$}
\label{s:rooted_lower}

We start with an approximation-preserving transformation from $\DST$ to $\rTC$. 
The intuition is that we can use strict temporal paths to ``simulate'' the directed edges of $\DST$.

\begin{theorem}\label{th:rooted_lb}
Any polynomial-time $\rho(n)$-approximation algorithm for $\rTC$ on simple temporal graphs implies a polynomial-time $\rho(n^2)$-approximation algorithm for $\DST$.
\end{theorem}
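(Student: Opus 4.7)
The plan is to reduce $\DST$ to $\rTC$ with only a quadratic blow-up in the number of vertices. Given a $\DST$ instance $(G=(V,E), r, S, w)$ with $|V|=n$, I would construct a simple temporal graph $\G$ with $|V(\G)| \leq n^2 + 1$ such that the optimal $\rTC$ cost on $\G$ coincides with the optimal $\DST$ cost on $G$, and any $\alpha$-approximate $\rTC$ solution is convertible in polynomial time into an $\alpha$-approximate $\DST$ solution. Composing this reduction with an $\rTC$ algorithm of ratio $\rho(N)$ (on $N$-vertex instances) then yields $\rho(n^2+1)$ for $\DST$.

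The construction is layered. Introduce an auxiliary root $r^*$, keep $n$ copies $v^1,\ldots,v^n$ of every non-terminal $v \in V \setminus S$ (in particular of $r$), and keep a single copy $s^n$ of every terminal $s \in S$. The temporal edges are: (a) the free edge $\{r^*, r^1\}$ at time $0$; (b) free \emph{waiting} edges $\{v^i, v^{i+1}\}$ at time $i$ whenever both endpoints exist; (c) for every $(u,v) \in E$ of weight $w$ and every $i \in [n-1]$ with both $u^i$ and $v^{i+1}$ existing, a \emph{simulator} edge $\{u^i, v^{i+1}\}$ at time $i$ of weight $w$; and (d) for every non-root, non-terminal $v$ and every $i \in [n]$, a free \emph{backchannel} edge $\{r^*, v^i\}$ at a fixed time $T>n$. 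Each undirected edge carries a single label, so $\G$ is simple, and the layering forces any temporal walk from $r^1$ to ``flow upward'' through layers, simulating a directed walk in $G$.

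Next I would prove equivalence in both directions. For the easy direction, given a $\DST$ tree $T$ of cost $W$, I would include in the $\rTC$ solution (i) the simulator $\{p^{d(v)}, v^{d(v)+1}\}$ for every internal non-terminal $v \in T$ at depth $d(v)$ with parent $p$, (ii) the simulator $\{p^{n-1}, s^n\}$ for every terminal leaf $s$ of $T$ with parent $p$, and (iii) the needed waiting and backchannel edges so that every remaining copy is reached. Each internal non-terminal $v$ becomes live at layer $d(v)+1$ via its simulator and is propagated for free up to layer $n-1$ by the waiting edges, which makes the simulator into $s^n$ time-respecting; the total cost is exactly $W$. For the converse, given any $\rTC$ solution $H$, I would trace, for each terminal $s \in S$, the unique temporal path in $H$ from $r^*$ to $s^n$: since backchannels fire only at time $T>n$ while simulator and waiting edges fire at times at most $n-1$, and since $s^n$ has no backchannel, this path reaches $r^*$ entirely via simulator and waiting edges plus the initial $\{r^*,r^1\}$. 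Projecting it back to $G$ (each simulator becomes its $\DST$ edge, each waiting edge is contracted) yields a directed $r$--$s$ walk in $G$. The union of these walks over $s \in S$ is a feasible $\DST$ solution whose cost is bounded by the sum of the simulator weights in $H$, hence by the weight of $H$.

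The main obstacle will be ruling out spurious short-cuts: non-terminal copies are reachable for free via the backchannel, so a priori one might fear that an $\rTC$ solution pays less than the best $\DST$ tree by ``entering'' the simulator network through some non-terminal copy $v^i$ rather than through $r^*$. The late-time $T$ precisely prevents this, since a copy reached by backchannel at time $T$ is a temporal dead end with respect to all simulator edges (which occur before $T$); consequently, any non-terminal copy used as a routing intermediate must be reached by a live simulator chain from $r^1$, exactly mirroring the parent-structure of a $\DST$ tree. Once this liveness/dead-end dichotomy is formalised, the cost equivalence and the polynomial-time extraction of a $\DST$ solution from an $\rTC$ solution follow, and the vertex bound $|V(\G)| \leq n^2+1$ gives the claimed $\rho(n^2)$-approximation transfer.
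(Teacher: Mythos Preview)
Your layered construction is in the same spirit as the paper's reduction: both simulate a directed $\DST$ arc $(u,v)$ by an undirected temporal gadget that is intended to be traversable only from the $u$-side to the $v$-side, and both reach the non-terminal auxiliaries for free via a ``late'' edge from the root. The paper uses, for every $(u,v)\in E$ and every $i\in[n-1]$, a length-two path $u\,\text{--}\,z^u_i\,\text{--}\,v$ with labels $i$ and $i{+}1$; you use a single simulator $\{u^i,v^{i+1}\}$ with label $i$ inside $n$ stacked copies of $G$.

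There is, however, a genuine gap in your backward step. You assert that projecting a temporal $r^*$--$s^n$ path ``yields a directed $r$--$s$ walk in $G$''. Under non-strict temporal connectivity this is false, because all edges between layer $i$ and layer $i{+}1$ share the \emph{same} label $i$, so a temporal path can zig-zag: go $x^i\to v^{i+1}$ forward along the simulator for $(x,v)$, then $v^{i+1}\to u^i$ \emph{backward} along the simulator for $(u,v)$, then $u^i\to w^{i+1}$ forward along the simulator for $(u,w)$, all at time $i$. Concretely, take the $\DST$ instance on $\{r,a,b,c,s\}$ with terminal $s$, arcs $(r,a),(c,a),(c,b),(b,s)$ of weight $1$ and $(r,s)$ of weight $100$; the unique directed $r$--$s$ path costs $100$. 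In your temporal graph the path
\[
r^*,\ r^1,\ a^2,\ c^1,\ b^2,\ b^3,\ b^4,\ s^5 \qquad\text{at times } 0,1,1,1,2,3,4
\]
is time-respecting and uses only the four cheap simulators, so the optimal $\rTC$ cost is at most $4$. The projected arc set $\{(r,a),(c,a),(c,b),(b,s)\}$ contains no directed $r$--$s$ path, so your extraction returns an infeasible $\DST$ solution and the cost transfer collapses. The sentence ``the layering forces any temporal walk from $r^1$ to flow upward'' is exactly the point that fails: layering forces the time to be non-decreasing, but at a fixed time $i$ the walk can bounce freely between layers $i$ and $i{+}1$.

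A smaller issue in the forward direction: since each terminal $s$ receives only the copy $s^n$, your map cannot encode a $\DST$ tree in which one terminal is an internal node (the parent of another terminal), because the needed simulator $\{s^{n-1},\cdot\}$ does not exist. This is repairable by the standard preprocessing that hangs a zero-cost leaf off every terminal, but it should be stated.
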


\begin{proof}[Proof sketch.]
We present an approximation-preserving transformation from the $\DST$ to $\rTC$. Given an instance $I = (G(V,E,w), S, r)$ of $\DST$ with $|V| = n$, we construct a temporal graph $\G'$ with $n^2$ vertices so that (i) any Steiner tree connecting $r$ to $S$ in $G$ can be mapped to an $r$-connected subgraph of $\G'$ with no larger cost; and (ii) given any $r$-connected subgraph of $\G'$, we can efficiently compute a feasible Steiner tree for $I$ with no larger cost.

Each vertex $u \in V$ corresponds to a vertex $u$ in the temporal network $\G'$ of $I'$. For every directed edge $e=(u, v)$ of $G$, we create $n-1$ strict temporal $u - v$ paths of length $2$. Specifically, for every vertex $u \in V$, $\G'$ contains auxiliary vertices $z^u_i$, for all $i \in [n-1]$, and temporal edges $\{ u, z^u_i \}$ with time label $i$ and weight $0$. For every edge $e = (u, v) \in E$, $\G'$ contains temporal edges $\{ z^u_i, v\}$ with time label $i+1$ and weight $w(e)$, for all $i \in [n-1]$. Let $Z = \{ z^u_i \}_{u \in V, i \in [n-1]}$ be the set of all auxiliary vertices. For every vertex $x \in Z \cup (V \setminus S)$, $x \neq r$, $\G'$ contains a temporal edge $\{ r, x \}$ with time label $n+1$ and weight $0$. These edges ensure that $r$ is connected to all non-terminal and auxiliary vertices at no additional cost.

The temporal graph $\G'$ has $\Theta(n^2)$ vertices. Claims (i) and (ii) follow from the construction of $\G'$ and are formally proven in Section~\ref{s:app:rooted_lb}.
\end{proof}

Directed Steiner Tree cannot be approximated within a ratio of $O (\log^{2-\eps} n)$, for any constant $\eps > 0$, unless $\NP \subseteq \ZTIME(n^{\poly\log n})$ \cite[Theorem~1.2]{HK03}. Theorem~\ref{th:rooted_lb} implies that this inapproximability result carries over to $\rTC$. Moreover, any polynomial-time $o(n^{\eps})$-approximation algorithm for $\rTC$ would immediately improve the best known approximation ratio of the notoriously difficult $\DST$ problem.

\subsection{An Approximation Algorithm for $\rTC$}
\label{s:rooted_upper}

In Section~\ref{s:app:rooted_up}, we present an approximation-preserving reduction from $\rTC$ to $\DST$, thus proving Theorem~\ref{th:rooted_up} (see also the more general proof of Theorem~\ref{th:appr-dsf}). Then, we can use the algorithm of \cite{Char98} and approximate $\rTC$ within a ratio of $O(n^\eps)$, for any constant $\eps>0$, in polynomial time, and within a ratio of $O(\log^3 n)$ in quasipolynomial time. The reduction of Theorem~\ref{th:rooted_up} can be easily extended to $r$-connectivity by strict temporal paths.

\begin{theorem}\label{th:rooted_up}
Any polynomial-time $\rho(k)$-approximation algorithm for $\DST$ implies a poly\-no\-mial-time $\rho(n)$-approximation algorithm for $\rTC$ on general temporal graphs
\end{theorem}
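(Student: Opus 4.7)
The plan is to exhibit an approximation-preserving reduction from $\rTC$ to $\DST$ by building the time-expanded version of the input temporal graph, and then invoke the given $\rho(k)$-approximation for $\DST$. Given an $\rTC$ instance $(\G(V,E,L),r)$ with $|V|=n$, I would construct a directed edge-weighted graph $G'$ as follows. The vertex set of $G'$ consists of a node $v_t$ for every $v\in V$ and every $t\in[L]$, together with a collector node $v^{\ast}$ for every $v\in V\setminus\{r\}$. For every temporal edge $(\{u,v\},t)$ of weight $w(e,t)$, insert the two directed copies $(u_t,v_t)$ and $(v_t,u_t)$, each of weight $w(e,t)$. For every $v\in V$ and every $t\in[L-1]$, insert a zero-weight waiting edge $(v_t,v_{t+1})$. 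Finally, for every $v\in V\setminus\{r\}$ and every $t\in[L]$, insert a zero-weight collector edge $(v_t,v^{\ast})$. The $\DST$ instance has source $r_1$ and the $k=n-1$ terminals $\{v^{\ast}:v\in V\setminus\{r\}\}$; its size is polynomial in the $\rTC$ input, since $G'$ has $O(nL)$ vertices and $O(nL+M)$ edges.

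The correspondence between the two instances rests on two matching claims. First, any $r$-connected temporal spanning subgraph of $\G$ of total weight $C$ yields a feasible Steiner tree in $G'$ of cost at most $C$: by Lemma~\ref{l:rtc_tree} the optimum of $\rTC$ is a tree, so every chosen temporal edge $(\{u,v\},t)$ is used in a unique direction away from $r$, which selects one of the two directed copies in $G'$; appending the zero-weight waiting and collector edges then completes a subgraph of $G'$ that reaches every $v^{\ast}$. Second, any Steiner tree $T$ in $G'$ of cost $C$ projects back to a set of temporal edges of total weight at most $C$, by mapping each directed edge $(u_t,v_t)$ used by $T$ to the temporal edge $(\{u,v\},t)$. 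Since waiting edges are the only edges of $G'$ that change the time index while edges between distinct vertices stay at a fixed time $t$, the sequence of time labels along the $r_1\to v^{\ast}$ path in $T$ is non-decreasing, so the projection yields a time-respecting $r$-to-$v$ walk in $\G$, i.e., a temporal path in the sense of the paper.

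Together these two claims give $\mathrm{OPT}_{\DST}=\mathrm{OPT}_{\rTC}$. Running the assumed $\rho(k)$-approximation on $G'$ and projecting the output back produces an $r$-connected temporal subgraph of weight at most $\rho(n-1)\cdot \mathrm{OPT}_{\rTC}\le \rho(n)\cdot \mathrm{OPT}_{\rTC}$, under the mild monotonicity assumption on $\rho$. The main step I would verify carefully is the backward projection: ensuring that every Steiner tree in $G'$ maps to a feasible $r$-connected spanning subgraph of $\G$ without increasing cost. This is handled by the weighting scheme, since each directed copy of a temporal edge carries the full weight $w(e,t)$ (so the total weight of the projected multiset is at most the cost of $T$), together with the time-monotonicity observation above (which guarantees that every projected $r$-to-$v$ walk is time-respecting). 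The construction and projection are both polynomial-time, so the claimed approximation ratio is achieved in polynomial time.
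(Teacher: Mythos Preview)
Your reduction is correct: the standard time-expanded graph with waiting and collector edges works, the forward direction (using Lemma~\ref{l:rtc_tree} to orient each tree edge) and the backward projection are both sound, and the crucial point---that the $\DST$ instance has exactly $k=n-1$ terminals regardless of $M$ or $L$---is what yields the $\rho(n)$ guarantee.

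The paper takes a different but equivalent route. Instead of vertex--time pairs, it creates one Steiner vertex per \emph{temporal edge} $(e,t)$ and puts a directed arc $((e_1,t_1),(e_2,t_2))$ of weight $w(e_2,t_2)$ whenever $e_1,e_2$ share an endpoint and $t_1\le t_2$; the root $r'$ has arcs to all temporal edges incident to $r$, and each terminal $u'$ has zero-weight arcs from all temporal edges incident to $u$. Your construction has $O(nL)$ vertices and $O(nL+M)$ arcs; the paper's has $O(M)$ vertices but potentially $\Theta(M^2)$ arcs. For Theorem~\ref{th:rooted_up} itself this makes no difference, since both are polynomial and both have $n-1$ terminals. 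The paper's ``line-graph'' encoding pays off later, though: its refined version (splitting each temporal-edge vertex into a ``may use'' and ``did use'' copy) is exactly what drives the $\DSF$ reduction in Theorem~\ref{th:appr-dsf}, where the edge count of the auxiliary graph enters the approximation ratio and one wants $O(\Delta M)$ arcs rather than the $\Theta(nL)$ waiting arcs your construction would introduce.
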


\subsection{A Polynomial-Time Algorithm for Graphs with Bounded Treewidth}
\label{s:rooted_twidth}

In Section~\ref{s:app:rooted_twidth}, we prove that $\rTC$ can be solved in polynomial time, by dynamic programming, if the underlying graph has bounded treewidth (see e.g., \cite{DF13} about nice tree decompositions and dynamic programming algorithms for graphs of bounded treewidth).

\begin{theorem}\label{th:rooted_twidth}
Let $\G$ be a temporal graph on $n$ vertices with lifetime $L$, source vertex $r$ and treewidth at most $k$. Then, there is a dynamic programming algorithm which given a nice tree decomposition of $G$, computes an optimal solution to $\rTC$ in time $O(n k^2 3^k (L+k)^{k+1})$.
\end{theorem}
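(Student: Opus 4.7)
The plan is to design a bottom-up dynamic programming algorithm on the given nice tree decomposition $T$ of $G$, exploiting the fact (from Lemma~\ref{l:rtc_tree}) that an optimal solution to $\rTC$ is a spanning tree rooted at $r$ in which each non-root vertex $v$ has an arrival time $t_v \in \{1,\dots,L\}$ equal to the time label of its parent edge $\{p(v),v\}$, with $t_{p(v)} \le t_v$ and $t_r = 0$. I would first add $r$ to every bag of $T$ (raising the width by at most one, which is absorbed into the constants) and convert $T$ to the standard introduce-vertex / introduce-edge / forget / join variant, so that every edge of $G$ is processed at a unique node.

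For each node $i$ of $T$ with bag $X_i$, the table entry $DP[i,\sigma]$ is indexed by a state $\sigma$ recording, for every $v \in X_i$, either an arrival time $t_v \in \{0,1,\dots,L\}$ or a marker $\bot$ meaning that $v$ has not yet been attached to the partial solution, together with a component label (chosen from at most $k+1$ identifiers, one of which is reserved for the $r$-component) for each attached bag vertex. The entry $DP[i,\sigma]$ stores the minimum total weight of a partial subgraph of $\G$ on the vertices below $i$ that uses only temporal edges introduced in the subtree of $i$, that has already connected every already-forgotten vertex to $r$ with the prescribed arrival time, and whose restriction to $X_i$ realises $\sigma$.

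The transitions I would use are in the standard style of tree-decomposition DPs for spanning connectivity: at leaves, initialise $\sigma$ with $r$ alone in the $r$-component and $t_r=0$; at an introduce-vertex node for $v$, for each guess $t_v$ (or $\bot$) extend $\sigma$ by a fresh singleton component for $v$ at no cost; at an introduce-edge node for $\{u,v\}$ available at time $t$, either skip the edge or include it as the parent edge of whichever endpoint has arrival time $t$, requiring the other endpoint to be attached with arrival time $\le t$, paying $w(\{u,v\},t)$ and merging the two component labels while rejecting the step if they already coincided (which would close a cycle); at a forget node for $v$, keep only those states where $v$ is attached and belongs to the $r$-component, since no future edge can reach $v$ once it leaves the bag; at a join node, combine two child states whose $t$-assignments agree on the shared bag and merge their component partitions, again rejecting merges that close a cycle. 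After a final pass of forget nodes empties the root bag, the answer is $DP[\mathrm{root},\emptyset]$, and correctness follows by a routine induction on $T$.

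For the running time, the state space per bag is bounded by $O(3^k (L+k)^{k+1})$: the $(L+k)^{k+1}$ factor absorbs the $\le L+2$ arrival-time choices (including $\bot$) per bag vertex, and the $3^k$ factor bounds the component-labelling information via the standard partition-to-sequence encoding with one label singled out as the $r$-component. Each transition costs $O(k^2)$, dominated by the acyclicity check, so with $O(n)$ nodes in $T$ we obtain the claimed bound. The main obstacle I expect is the join step: the two children's partitions of the common bag must be combined into one consistent partition while forbidding any merge that closes a cycle, and arguing that the $3^k$ encoding is enough (rather than the full Bell number of partitions) requires the usual canonicalisation of component identifiers that appears in tree-decomposition DPs for ordinary spanning connectivity problems.
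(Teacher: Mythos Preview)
Your proposal is in the right spirit—dynamic programming over a nice tree decomposition, with each bag vertex carrying an arrival time—but it diverges from the paper's argument in a way that leaves a genuine gap in the running-time bound.

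You track, for every attached bag vertex, a \emph{component label}, and then assert that these labels contribute only a factor of $3^k$ via ``the standard partition-to-sequence encoding''. This is the step that fails: canonicalising component identifiers reduces the count from $(k+1)^{k+1}$ down to the Bell number $B_{k+1}$, but $B_{k+1}$ is \emph{not} $O(3^k)$ (already $B_7 = 877 > 3^6 = 729$, and asymptotically $B_k = k^{\Theta(k)}$). The techniques that do achieve single-exponential bounds for connectivity on tree decompositions—Cut\&Count, rank-based representative sets—are not ``just canonicalisation'', and none of them delivers the clean $3^k$ in the statement for a weighted minimisation problem. You yourself flag exactly this join-node issue as the expected obstacle, and indeed it is unresolved.

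The paper sidesteps partitions entirely. It exploits the rooted-tree structure of the optimum (Lemma~\ref{l:rtc_tree}) to keep only a binary flag $a_j \in \{0,1\}$ per bag vertex, meaning ``$v_j$ will be reached from $r$ via edges \emph{outside} the current subtree'' (so it acts as a local root) versus ``$v_j$ must be reached from some $a=1$ vertex via edges \emph{inside} the subtree''. Acyclicity is enforced not by component bookkeeping but by the temporal monotonicity $t_{\text{parent}} \le t_{\text{child}}$, augmented with an explicit tie-breaking \emph{ordering} among bag vertices that share the same arrival time; this ordering is packed into the $t_j$ parameters using the extra values $L+1,\dots,L+k$, which is exactly where the $(L+k)^{k+1}$ rather than $L^{k+1}$ comes from. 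The $3^k$ then arises at introduce nodes: each of the at most $k$ other bag vertices is either an entry point ($a_j=1$), or becomes a child of the introduced vertex, or neither—three options per vertex.

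In short, your extra partition data is more than the problem needs and costs you the bound; the paper's key idea is that arrival times plus a same-time ordering already prevent cycles, so a single bit per bag vertex suffices.
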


\section{The Approximability of All-Pairs Minimum Temporal Connectivity}
\label{s:mtcs}

In this section, we study the approximability of the all-pairs version of Minimum Temporal Connectivity in general temporal graphs. Reducing $\TC$ to $\rTC$ and to Directed Steiner Forest, we obtain polynomial-time approximation algorithms for $\TC$, albeit with not so strong guarantees (Corollary~\ref{cor:appr-rtc} and Theorem~\ref{th:appr-dsf}). To justify the poor approximation ratios, we reduce Symmetric Label Cover ($\SLC$) to $\TC$ and show that any $\rho(n)$-approximation for $\TC$ implies a $\rho(n^2)$-approximation for $\SLC$ (Theorem~\ref{th:inappr}). Moreover, using an approximation-preserving reduction from the Steiner Tree problem, we show that the unweighted version of $\TC$ is $\APX$-hard (Theorem~\ref{th:apx-hard}).

\subsection{Approximation Algorithms for $\TC$}
\label{s:tc-appr}

Using every vertex of the temporal graph as a source vertex and taking the union of the solutions obtained by the algorithm of Theorem~\ref{th:rooted_up} for $\rTC$, we obtain the following.

\begin{corollary}\label{cor:appr-rtc}
For any constant $\eps > 0$, there is a polynomial-time $O(n^{1+\eps})$-approximation algorithm for $\TC$ on temporal graphs with $n$ vertices.
\end{corollary}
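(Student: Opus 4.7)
The plan is to run the $\rTC$ approximation of Theorem~\ref{th:rooted_up} independently from each possible source, take the union of the $n$ returned subgraphs, and observe that this union is both temporally connected and only loses a factor of $n$ compared to the optimum of $\TC$.

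More concretely, fix $\eps > 0$. For each vertex $r \in V$, invoke the polynomial-time $O(n^\eps)$-approximation algorithm for $\rTC$ on the input temporal graph $\G$ with source $r$, obtaining an $r$-connected spanning subgraph $\G_r$ of $\G$. Return $\G^\ast = \bigcup_{r \in V} \G_r$, where the union is taken over temporal edges (and weights are those inherited from $\G$). Feasibility is immediate: for every ordered pair $(u,v) \in V \times V$, $\G_u$ already contains a temporal $u$-$v$ path, hence so does $\G^\ast$, so $\G^\ast$ is temporally connected.

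For the cost analysis, let $\G^{OPT}$ be an optimal solution to $\TC$ on $\G$, of total weight $\mathrm{OPT}$. Since $\G^{OPT}$ is temporally connected, it is in particular temporally $r$-connected for every $r \in V$, so it is a feasible solution for each individual $\rTC$ instance. Hence each $\G_r$ returned by the $O(n^\eps)$-approximation of Theorem~\ref{th:rooted_up} has total weight at most $O(n^\eps)\cdot \mathrm{OPT}$. Summing over the $n$ choices of source and using that the union of edge sets has weight at most the sum of the individual weights, the total weight of $\G^\ast$ is at most $n \cdot O(n^\eps) \cdot \mathrm{OPT} = O(n^{1+\eps}) \cdot \mathrm{OPT}$. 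The overall running time is $n$ times the polynomial running time of the $\rTC$ approximation, hence polynomial.

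There is essentially no serious obstacle here; the only care needed is the observation that an optimal $\TC$ solution is simultaneously feasible for every single-source instance, which is what lets one charge each of the $n$ calls against the same $\mathrm{OPT}$ rather than against a larger quantity. The resulting $O(n^{1+\eps})$ bound is of course quite weak, which is precisely what motivates the $\DSF$-based approach of Theorem~\ref{th:appr-dsf} and the matching inapproximability results in the remainder of Section~\ref{s:mtcs}.
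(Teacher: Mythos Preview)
Your proposal is correct and follows exactly the approach the paper takes: run the $\rTC$ approximation of Theorem~\ref{th:rooted_up} from every source, take the union, and charge each of the $n$ calls against the same $\TC$ optimum. The paper states this in a single sentence; your write-up simply fills in the feasibility and cost arguments explicitly.
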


Next, we present a reduction from $\TC$ to Directed Steiner Forest ($\DSF$) that leads to a different algorithm. Although the approximation ratio may be worse than $O(n^{1+\eps})$ in general, this algorithm gives significantly better guarantees if the total number of temporal edges is quasilinear (and if the maximum degree of the underlying graph is polylogarithmic).

\begin{theorem}\label{th:appr-dsf}
Let $\G$ be a temporal graph with $n$ vertices and $M$ temporal edges such that the underlying graph has maximum degree $\Delta$. Then, for any constant $\eps > 0$, there is a polynomial-time  $O(M^{\eps}\min\{M^{4/5},(\Delta M)^{2/3}\})$-approximation algorithm for $\TC$ on $\G$. If $M = O(n\,\poly\log n)$, we obtain an approximation ratio of $O(n^{4/5+\eps})$. If both $M = O(n\,\poly\log n)$ and $\Delta = O(\poly \log n)$, we obtain an approximation ratio of $O(n^{2/3+\eps})$.
\end{theorem}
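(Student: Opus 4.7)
The plan is to give an approximation-preserving reduction from $\TC$ on $\G$ to Directed Steiner Forest ($\DSF$), and then invoke the polynomial-time $O(n^\eps \min\{n^{4/5}, m^{2/3}\})$-approximation of \cite{FKN09}. The reduction would build a time-expanded directed graph $H$ akin to the one used for $\rTC$ in Theorem~\ref{th:rooted_up}. For every $v \in V$, let $T_v = \{t_1^v < \cdots < t_{d_v}^v\}$ be the set of times at which $v$ is incident to some temporal edge, where $d_v := |T_v|$. I would create a copy $v_t$ for each $v \in V$ and $t \in T_v$, and add a source $\alpha_v$ and a sink $\omega_v$ per vertex. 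The edges of $H$ would be (a) for every temporal edge $(\{u,v\},t)$, two directed edges $u_t \to v_t$ and $v_t \to u_t$ of weight $w(\{u,v\},t)$; (b) zero-weight waiting edges $v_{t_i^v} \to v_{t_{i+1}^v}$ for every $v$ and every consecutive pair in $T_v$; and (c) zero-weight source/sink edges $\alpha_v \to v_{t_1^v}$ and $v_{t_{d_v}^v} \to \omega_v$. The demand set would be $D = \{(\alpha_u, \omega_v) : u \neq v\}$.

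Next I would establish a two-way correspondence. Any $\TC$ solution $S$ for $\G$ lifts to the subgraph $H_S$ of $H$ comprising the directed edges corresponding to $S$ together with all zero-weight edges of $H$, whose total weight equals that of $S$; a temporal $u$--$v$ path in $S$ translates into an $\alpha_u \to \omega_v$ directed path in $H_S$ by interleaving waiting edges between consecutive temporal steps and prepending/appending the source/sink edges. Conversely, any feasible $\DSF$ solution $H'$ projects to the set $S$ of temporal edges corresponding to its positive-weight directed edges; the linearly ordered waiting-edge structure forces the labels of the temporal edges used along any $\alpha_u \to \omega_v$ path to be nondecreasing, so $S$ is a feasible $\TC$ solution of weight at most that of $H'$. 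Hence any $\rho$-approximation for $\DSF$ on $H$ yields a $\rho$-approximation for $\TC$ on $\G$.

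A short counting argument gives $|V(H)| = 2n + \sum_v d_v = O(n + M) = O(M)$; a careful tally of the temporal, waiting, and source/sink edges (with $\Delta$ controlling the per-vertex blow-up in the worst case) yields $|E(H)| = O(\Delta M)$. Plugging $n' = O(M)$ and $m' = O(\Delta M)$ into the algorithm of \cite{FKN09} gives the stated approximation ratio $O(M^\eps \min\{M^{4/5}, (\Delta M)^{2/3}\})$; the two corollaries, for $M = O(n\,\poly\log n)$ and additionally $\Delta = O(\poly\log n)$, then follow by direct substitution into the $\min$.

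The main obstacle will be the cost soundness of the reduction: a single temporal edge may be reused by many demand paths yet must be paid for only once in the $\DSF$ objective. This is handled by representing each temporal edge by exactly one pair of directed edges of weight $w(\{u,v\},t)$ in $H$, so that sharing between demand paths is free, and by calibrating the source/sink gadgets via zero-weight waiting edges so that $\alpha_u \to \omega_v$ reachability in $H$ is equivalent to temporal $u \to v$ reachability in $\G$, with no double-counting of costs.
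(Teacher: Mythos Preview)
Your reduction is correct and proves the theorem, but it differs from the paper's in an instructive way. The paper builds an \emph{edge-based} time expansion: each temporal edge $(e,t)$ becomes a pair of nodes $h^1_{(e,t)},h^2_{(e,t)}$, with a single ``buying'' arc $h^1_{(e,t)}\to h^2_{(e,t)}$ of weight $w(e,t)$; all other arcs (waiting along $L_e$, switching from $e$ to an adjacent edge $e'$, and the source/sink hooks) are zero-weight. This makes the cost correspondence exact: a temporal edge is paid for once regardless of direction. Your \emph{vertex-based} expansion pays $w(\{u,v\},t)$ on each of $u_t\to v_t$ and $v_t\to u_t$, so your claim that the lift $H_S$ has ``total weight equal to that of $S$'' is off by a factor of two whenever a temporal edge is traversed in both directions. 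This is harmless for the theorem (it is absorbed into the $O(\cdot)$), but it is exactly the issue your final paragraph flags and does not quite resolve; the paper's single buying arc is the clean fix.

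On the other hand, your construction is more economical in edges than you realize. A direct count gives $2M$ crossing arcs, at most $\sum_v (d_v-1)\le 2M$ waiting arcs, and $O(n)$ source/sink arcs, so $|E(H)|=O(M)$, not $O(\Delta M)$; the $\Delta$ never enters. The paper incurs the $\Delta$ factor because its ``switch'' arcs go from $h^2_{(e_1,t_1)}$ to $h^1_{(e_2,t_2)}$ for every edge $e_2$ adjacent to $e_1$, which is $\Theta(\Delta)$ per temporal edge. Plugging your true edge count into \cite{FKN09} actually yields $O(M^{2/3+\eps})$ unconditionally, i.e.\ a strictly stronger bound than the statement; your $|E(H)|=O(\Delta M)$ is a valid upper bound but undersells your own construction.
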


\begin{proof}
The reduction from $\DSF$ to $\TC$ is a generalized and refined version of the reduction in Theorem~\ref{th:rooted_up}. Let $I$ be an instance of $\TC$ consisting of an underlying graph $G(V, E)$, a finite set of time labels $L_e$ for each edge $e$, and a weight $w(e, t)$ for any temporal edge $(e, t)$. We show how to transform $I$ into an instance $I'$ of $\DSF$ so that (i) any feasible solution of $I$ can be mapped to a feasible solution of $I'$ with no larger total weight; and (ii) given a feasible solution of $I'$, we can compute a feasible solution of $I$ with no larger total weight.

For convenience, we denote $H$ the edge-weighted directed graph of the $\DSF$ instance $I'$. For every temporal edge $(e,t)$ of $\G$, $H$ contains two vertices $h_{(e,t)}^1$ and $h_{(e,t)}^2$. Intuitively, $h_{(e,t)}^1$ indicates that we may use $(e, t)$ and $h_{(e,t)}^2$ indicates that we actually use $(e, t)$.
For each edge $e \in E$, let $l_1(e) < l_2(e) < ... < l_k(e)$ be the time labels in $L_e$. For every $i \in [k-1]$, $H$ contains a directed edge $(h_{(e, l_i(e))}^1, h_{(e, l_{i+1}(e))}^1)$ with weight $0$. Intuitively, these edges indicate that we can wait and use $e$ at some later time up to $l_k(e)$. Moreover, for every $i \in [k]$, $H$ contains a directed edge $(h_{(e,l_i(e))}^1, h_{(e, l_i(e))}^2)$ with weight $w(e, l_i(e))$. This edge indicates that we actually use the temporal edge $(e, l_i(e))$ and incur the corresponding cost.

For every ordered pair of temporal edges $(e_1, t_1)$, $(e_2, t_2)$ of $\G$, such that $e_1 \neq e_2$, $t_2$ is the smallest time label in $L_{e_2}$ such that $t_2 \geq t_1$ ($t_2 > t_1$, for strict  connectivity), and $e_1$ and $e_2$ share a common endpoint, $H$ contains a directed edge $(h^2_{(e_1, t_1)}, h^1_{(e_2, t_2)})$ with weight $0$.

For every vertex $v_i \in V$, $i \in [n]$, $H$ contains a pair of terminal vertices $s_i$ and $t_i$.
For every temporal edge $(e,t)$ incident to $v_i$, $H$ contains a directed edge $(s_i, h_{(e,t)}^1)$ with weight $0$ and a directed edge $(h_{(e,t)}^2, t_i)$ with weight $0$. The set of connection requirements of the $\DSF$ instance $I'$ consists of all pairs $(s_i, t_j)$ for all $i, j \in [n]$ with $i \neq j$.

By construction, any temporal $v_i - v_j$ path, which consists of a temporal edge sequence $((e_1, t_1), (e_2, t_2), \ldots, (e_k, t_k))$, with $e_1$ starting at $v_i$ and $e_k$ ending at $v_j$, corresponds to a directed $s_i - t_j$ path in $H$ of the form
\[   (s_i, h_{(e_1,t_1)}^1), (h_{(e_1,t_1)}^1, h_{(e_1,t_1)}^2), (h_{(e_1,t_1)}^2, h_{(e_2,t''_2)}^1),
   (h_{(e_2, t''_2)}^1,  h_{(e_2, t'_2)}^1), \]
\[ (h_{(e_2, t'_2)}^1,  h_{(e_2, t_2)}^1),
   (h_{(e_2,t_2)}^1, h_{(e_2,t_2)}^2), \ldots,  (h_{(e_k,t_k)}^1, h_{(e_k,t_k)}^2),
   (h_{(e_k,t_k)}^2, t_j) \]
with the same weight and vice versa. Using this observation, we can now establish claims (i) and (ii). Specifically, to show (i), we construct a feasible solution to $I'$ that includes all directed edges of weight $0$ and the directed edges $(h_{(e,t)}^1, h_{(e, t)}^2)$ corresponding to the temporal edges $(e, t)$ used in the feasible solution to $I$. Clearly, the two solutions have the same total weight and any temporal $v_i - v_j$ path in the solution to $I$ corresponds to an $s_i - t_j$ path in the solution to $I'$. To show (ii), we first observe that any directed path from some $s_i$ to some $t_j$ should include some directed edges of the form $(h_{(e,t)}^1, h_{(e,t)}^2)$ with weight $w(e, t)$. So, we construct a feasible solution to $I$ that includes the temporal edges $(e, t)$ corresponding to the positive-weight directed edges $(h_{(e,t)}^1, h_{(e,t)}^2)$ included in the feasible solution to $I'$.

In the resulting $\DSF$ instance $I'$, the total number of vertices is $O(n + M) = O(M)$ and the number of connection requirements is $O(n^2)$. If the maximum degree of the underlying graph is $\Delta$, the total number of edges is dominated by the edges of the form $(h^2_{(e_1, t_1)}, h^1_{(e_2, t_2)})$, which are $O(\Delta M)$. Applying the approximation algorithm of \cite[Theorem~1.1]{FKN09} to the $\DSF$ instance $I'$, we obtain a polynomial-time $O(M^{\eps}\min\{M^{4/5},(\Delta M)^{2/3}\})$-approximation algorithm, for any constant $\eps > 0$. In the special case where the number of temporal edges is $M = O(n \poly\log n)$, we obtain an $O(n^{4/5+\eps})$-approximation, for any constant $\eps > 0$. If both $M = O(n \poly\log n)$ and the maximum degree of the underlying graph $\Delta = O(\poly \log n)$, we obtain a polynomial-time $O(n^{2/3+\eps})$-approximation algorithm for any constant $\eps > 0$.
\end{proof}

\subsection{A Lower Bound on the Approximability of $\TC$}
\label{s:inappr}

In this section, we present an approximation-preserving reduction from Symmetric Label Cover to $\TC$. Our reduction along with standard inapproximability results for Symmetric Label Cover indicate that $\TC$ in general temporal networks is hard to approximate.

\begin{theorem}\label{th:inappr}
$\TC$ on temporal graphs with $n$ vertices cannot be approximated within a factor of $O(2^{\log^{1-\eps} n})$, for any constant $\eps > 0$, unless $\NP \subseteq \DTIME(n^{\poly\log n})$.
\end{theorem}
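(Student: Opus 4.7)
The plan is to give an approximation-preserving reduction from Symmetric Label Cover ($\SLC$) to $\TC$, in the spirit of \cite[Section~4]{DK99}. Recall an $\SLC$ instance consists of a bipartite graph $G(A\cup B, E)$, label sets $L_v$ for each $v\in A\cup B$, and relations $R_{a,b}\subseteq L_a\times L_b$ for each edge $(a,b)\in E$; the task is to select nonempty $S_v\subseteq L_v$ minimizing $\sum_v|S_v|$ such that for every edge $(a,b)\in E$ there exists $(l_a,l_b)\in R_{a,b}$ with $l_a\in S_a$ and $l_b\in S_b$. Under $\NP\not\subseteq\DTIME(N^{\poly\log N})$, $\SLC$ on $N$-vertex instances is not $O(2^{\log^{1-\eps}N})$-approximable. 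Since our reduction will blow up the instance size only polynomially, an $O(2^{\log^{1-\eps}n})$-approximation for $\TC$ on $n=\poly(N)$ vertices would contradict this, establishing the theorem.

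Given an $\SLC$ instance, I would build a temporal graph $\G$ as follows. For every vertex $v\in A\cup B$, introduce a copy of $v$ in $\G$ along with a \emph{label vertex} $x_{v,l}$ for each $l\in L_v$; the only weight-$1$ temporal edge incident to $x_{v,l}$ is the edge $\{v,x_{v,l}\}$, which will represent \emph{selecting} $l$ for $v$. For every SLC edge $(a,b)\in E$ and every consistent pair $(l_a,l_b)\in R_{a,b}$, introduce a small \emph{constraint gadget}: a weight-$0$ time-respecting path from $a$ to $b$ (and symmetrically from $b$ to $a$) whose time labels force it to traverse $x_{a,l_a}$ and then $x_{b,l_b}$, so the gadget supplies a temporal $a$-$b$ path only when both $\{a,x_{a,l_a}\}$ and $\{b,x_{b,l_b}\}$ are present. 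Finally, augment $\G$ with a \emph{backbone} of weight-$0$ temporal edges (using time labels that are pairwise well-separated from those inside the gadgets, so no shortcut is created) that guarantees temporal connectivity between all pairs of vertices that do not participate in any constraint.

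Correctness rests on two matching inequalities. (i) Given an $\SLC$ solution $\{S_v\}$ of cost $C=\sum_v|S_v|$, include in $\G$ every weight-$0$ edge together with the weight-$1$ edges $\{v,x_{v,l}\}$ for $l\in S_v$; a case analysis on pairs of vertices (vertex-to-vertex, vertex-to-label, label-to-label, backbone vertices) shows that the feasibility of $\{S_v\}$ implies temporal connectivity of every ordered pair, and the weight is exactly $C$. (ii) Given a $\TC$-solution of weight $W$, define $S_v=\{l\in L_v:\{v,x_{v,l}\}\text{ is included}\}$; temporal connectivity between $a$ and $b$ for every $(a,b)\in E$ must route through some constraint gadget (the backbone offers no alternative), so it forces a consistent $(l_a,l_b)\in R_{a,b}$ with $l_a\in S_a$ and $l_b\in S_b$, while $\sum_v|S_v|$ is at most the weight $W$ contributed by selected label edges. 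Hence the optimal values coincide and any $\rho(n)$-approximation for $\TC$ yields a $\rho(\poly(N))$-approximation for $\SLC$, which is $O(2^{\log^{1-\eps}N})$ for any constant $\eps>0$ after absorbing the polynomial blow-up into $\eps$.

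The technical heart of the proof is the labeling of the constraint gadget. The gadget must simultaneously (a) become temporally traversable precisely when both required selection edges are present, (b) not conspire with the backbone or with other gadgets to create a cheaper temporal $a$-$b$ path that bypasses label vertices, and (c) coexist with enough weight-$0$ edges to make all non-constraint connectivity achievable for free. Interleaving the time labels of selection edges, gadget edges, and backbone edges into disjoint intervals (in the spirit of the construction used for Theorem~\ref{th:rooted_lb}) handles (b) and (c); arranging two time-ordered halves within each gadget so that a path from $a$ must use $\{a,x_{a,l_a}\}$ \emph{before} $\{x_{b,l_b},b\}$ handles (a). Once this interleaving is in place, the case analysis above is routine.
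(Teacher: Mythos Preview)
Your proposal is correct and follows essentially the same route as the paper: a reduction from $\SLC$ with weight-$1$ label-selection edges, weight-$0$ constraint gadgets that are only traversable when both endpoints' selection edges are present, and a weight-$0$ backbone for all remaining pairs (the paper realizes these concretely via intermediate vertices $(u,w,a,b)\in V_X$ on a four-step path with time labels $1,2,3,4$, and two hub vertices $p,q$ carrying labels $0$ and $5$). One small point worth noting: the paper forces only the $u\to w$ direction through gadgets and gets $w\to u$ for free via the backbone, which sidesteps the tension your symmetric ``and from $b$ to $a$'' gadget would create with having a \emph{single} temporal selection edge $\{v,x_{v,l}\}$ per label.
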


\begin{proof}
We present a polynomial-time approximation-preserving reduction from the Symmetric Label Cover ($\SLC$) problem to $\TC$. In $\SLC$ (see e.g., \cite[Definition~4.1]{DK99}), we are given a complete bipartite graph $H(U, W)$, with $|U| = |W|$, a finite set of colors $C$ and a binary relation $R(u, w) \subseteq C \times C$ for every vertex pair $(u, w) \in U \times W$. We seek to assign a color subset $\sigma(u) \subseteq C$ to each vertex $u \in U \cup W$ so that for every vertex pair $(u, w) \in U \times U$, there are colors $a \in \sigma(u)$ and $b \in \sigma(w)$ with $(a, b) \in R(u, w)$ and $\sum_{u \in U \cup W} |\sigma(u)|$, i.e., the total number of colors used, is minimized.

Given an instance of $\SLC$, we create a temporal graph $\G$ whose vertex set $V$ is partitioned into six sets $V_U$, $V_{C(U)}$, $V_W$, $V_{C(W)}$, $V_X$ and $\{ p, q\}$. There is a correspondence between the vertices of the bipartite graph $H$ and the vertices of $\G$ in the sets $V_U$ and $V_W$. The vertex sets $V_{C(U)} = V_U \times C$ and $V_{C(W)} = V_W \times C$ serve to encode the color assignment to the vertices of $U$ and $W$ in the $\SLC$ instance. Moreover, $V_X$ contains a vertex $(u, w, a, b)$ for every vertex pair $(u, w) \in U \times W$ and every allowable color pair $(a, b) \in R(u, w)$. Intuitively, the vertices of $V_X$ serve to ensure that the color assignment is consistent. Finally, the vertices $p$ and $q$ ensure that the temporal graph $\G$ is connected.

For every $u \in V_U$ and every $(u, a) \in V_{C(U)}$, $\G$ contains a temporal edge $\{ u, (u, a) \}$ with time label $1$ and weight $1$. Similarly, for every $w \in V_W$ and every $(w, b) \in V_{C(W)}$, $\G$ contains a temporal edge $\{ w, (w, b) \}$ with time label $4$ and weight $1$.
For every vertex $(u, w, a, b) \in V_X$, $\G$ contains the temporal edges $\{ (u, a), (u, w, a, b)  \}$ with time label $2$ and weight $0$ and $\{ (u, w, a, b), (w, b) \}$ with label $3$ and weight $0$.
The temporal graph $\G$ contains temporal edges with label $5$ and weight $0$ between $p$ and every vertex in $V_U \cup V_{C(U)} \cup V_{C(W)} \cup V_X$ and between $q$ and every vertex in $V_U$.
Moreover, $\G$ contains temporal edges with label $0$ and weight $0$ between $p$ and every vertex in $V_W$ and between $q$ and every vertex in $V_W \cup V_{C(U)} \cup V_{C(W)} \cup V_X$.
We note that the temporal graph $\G$ is connected and has $O(k^2 c^2)$ vertices, where $k = |U| = |W|$ and $c = |C|$ (in fact, the number of vertices of $\G$ is of the same order as the total size of all binary relations $R(u, w)$).

We next show that this reduction is approximation-preserving. We first show that any feasible solution to the $\SLC$ instance can be mapped to a temporally connected subgraph $\G'$ of $\G$ with at most the same weight. Let us fix any assignment $\sigma$ of a color set to each vertex of $H$ that is feasible for the $\SLC$ instance. We first include in $\G'$ all temporal edges of weight $0$. For every vertex $u \in U$ with assigned colors $\sigma(u)$, we include in $\G'$ the temporal edges $\{ u, (u, a) \}$, for all $a \in \sigma(u)$. The total weight of these edges is $|\sigma(u)|$. Similarly, for every vertex $w \in W$, we include in $\G'$ the temporal edges $\{ w, (w, b) \}$, for all $b \in \sigma(w)$. The total weight of these edges is $|\sigma(w)|$. Therefore, the total weight of the temporal subgraph $\G'$ is equal to the cost of the solution $\sigma$ for the $\SLC$ problem.

It remains to show that $\G'$ is temporally connected. All vertices in $V_U \cup V_{C(U)} \cup V_{C(W)} \cup V_X \cup \{ p \}$ are connected with each other (through $p$) by temporal edges with time label $5$. There are also temporal $p - q$ and $q - p$ paths consisting of edges with time label $5$ through the vertices of $V_U$. Similarly, all vertices in $V_W \cup V_{C(U)} \cup V_{C(W)} \cup V_X \cup \{ q \}$ are connected with each other (through $q$) by temporal edges with time label $0$. Moreover, there is a temporal path (using edges with time labels
$0$ and $5$) from every vertex in $V_W \cup V_{C(U)} \cup V_{C(W)} \cup V_X \cup \{ q \}$ to every vertex in $V_U$. Also, $p$ is connected to every vertex in $V_W$ with temporal edges of time label $0$ and vice versa. All these vertex pairs are connected through temporal paths entirely consisting of $0$-weight edges. The really interesting case concerns vertex pairs $(u, w) \in V_U \times V_W$. By the feasibility of the solution $\sigma$, for every vertex pair $(u, w) \in U \times W$, there are colors $a \in \sigma(u)$ and $b \in \sigma(w)$ such that $(a, b) \in R(u, w)$. Therefore, the temporal $u-w$ path $(u, (u, a), (u, w, a, b), (w, b), b)$ is included in $\G'$. Hence, $\G'$ is temporally connected.

We also need to show that given a temporally connected subgraph $\G'$ of $\G$, we can efficiently compute an assignment $\sigma$ of a color set to each vertex in $U \cup W$ that is feasible for the $\SLC$ instance and has total cost no larger than the total weight of $\G'$. For every $u \in V_U$ and every temporal edge of the form $(\{ u, (u, a) \}, 1)$ included in $\G'$, we include the color $a$ in $\sigma(u)$. Similarly, for every $w \in V_W$ and every temporal edge of the form $(\{ w, (w, b) \}, 4)$ included in $\G'$, we include the color $b$ in $\sigma(w)$. Since these are the only edges of $\G$ (and $\G'$) with positive weight, the total cost of $\sigma$ is equal to the total weight of $\G'$.

It remains to show that $\sigma$ is a feasible solution to the $\SLC$ instance. Let $(u,w)\in U\times W$ of the bipartite graph $H$ in the $\SLC$ instance. The crucial observation is that the only way to connect $u \in V_U$ to $w \in V_W$ in $\G'$ is through a temporal path consisting of the temporal edge sequence $(\{ u, (u, a) \}, 1)$, $(\{ (u, a), (u, w, a, b) \}, 2)$, $(\{ (w, b), (u, w, a, b) \}, 3)$, $(\{ w, (w, b)\}, 4)$, for some colors $a, b$ such that $(a, b) \in R(u, w)$. This claim immediately implies the feasibility of the assignment $\sigma$. To prove this claim, we observe that a temporal $u - w$ path cannot use any temporal edge incident to $p$ or $q$, since all edges between $V_U$ and $\{ p, q\}$ have time label $5$ and all edges between $V_W$ and $\{ p, q\}$ have time label $0$. So, any temporal $u - w$ path in $\G'$ has to move from $u$ to some vertex $(u, a) \in V_{C(U)}$. Such a vertex $(u, a) \in V_{C(U)}$ does not have any neighbors in $V_U$ other than $u$. Hence, the next vertex of any temporal $u - w$ path in $\G'$ must be to visit some $(u, w, a, b) \in V_X$, where $w \in W$ and $(a, b) \in R(u, w)$. Similarly, since such a vertex $(u, w, a, b) \in V_X$ does not have any neighbors in $V_{C(U)}$ other than $(u, a)$ and any neighbors in $V_{C(W)}$ other than $(w, b)$, we conclude that the next step of any temporal $u - w$ path in $\G'$ must be to the vertex $(w, b) \in V_{C(W)}$. But now, the last temporal edge used has time label $3$, which implies that the $u - w$ path cannot use any edges with time labels $1$ and $2$ anymore. Thus, the path cannot return to $V_U \cup V_{C(U)}$. The only choice now is that the path moves to $w$ through the temporal edge $(\{ w, (w, b)\}, 4)$, which establishes the claim about the structure of any temporal $u - v$ path in $\G'$.

The discussion above establishes the correctness of the reduction from $\SLC$ to $\TC$. Using the fact that the number of vertices of the temporal graph $\G$ is quadratic in the number of vertices of the bipartite graph $H$ and standard inapproximability results for $\SLC$ (e.g., that used in \cite{DK99}), we conclude the proof of the theorem.
\end{proof}

Adjusting the proof of Theorem~\ref{th:inappr}, we can get an approximation-preserving reduction from the {\sc MinRep} problem, which is considered in \cite{CHK11} and is similar to $\SLC$, to $\TC$. Thus, we obtain that any polynomial-time $\rho(n)$-approximation algorithm for $\TC$ on simple temporal graphs implies a polynomial-time $\rho(n^2)$-approximation algorithm for {\sc MinRep}. Since the best known approximation ratio for {\sc MinRep} is $O(n^{1/3} \log^{2/3} n)$ \cite[Section~2]{CHK11}, any $O(n^{1/6})$-approximation to $\TC$ would imply an improved approximation ratio for {\sc MinRep}.

\subsection{Inapproximability of Unweighted $\TC$}
\label{s:apx-hard}

In Section~\ref{s:app:apx-hard}, we present an approximation-preserving reduction from the Steiner Tree problem on undirected graphs with edge weights either $1$ or $2$ to $\TC$ on unweighted temporal graphs, where all temporal edges have weight equal to $1$. Since this version of the Steiner Tree problem is known to be $\APX$-hard \cite{BP89}, we obtain the following.

\begin{theorem}\label{th:apx-hard}
$\TC$ on unweighted temporal graphs is $\APX$-hard, and thus it does not admit a PTAS, unless
$\mathrm{P} = \NP$.
\end{theorem}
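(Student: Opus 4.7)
The plan is to give an approximation-preserving reduction from Steiner Tree on undirected graphs with edge weights in $\{1,2\}$ --- which is $\APX$-hard by Bern and Plassmann~\cite{BP89} --- to $\TC$ on unweighted temporal graphs (every temporal edge has weight $1$). Given an instance $I = (G(V,E), S, w)$ with $w \colon E \to \{1,2\}$, I would construct a temporal graph $\G$ in two stages. First, I would encode edge weights by \emph{subdivision}: partition $E = E_1 \cup E_2$ by weight; for each $e = \{u,v\} \in E_1$ add a single temporal edge $\{u,v\}$ at a ``middle'' time label $t_m$; for each $e = \{u,v\} \in E_2$ introduce a fresh vertex $z_e$ and add two temporal edges $\{u, z_e\}$ and $\{z_e, v\}$, both at time $t_m$, so that ``using'' edge $e$ in a Steiner tree corresponds to exactly $w(e)$ unit-weight temporal edges. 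Second, I would handle the all-pairs requirement by adding two hub vertices $p$ and $q$ connected by temporal edges at time $t_m - 1$ (through $p$) and $t_m + 1$ (through $q$) to every non-terminal vertex of $V$ and every $z_e$, with no auxiliary edge incident on a terminal. This would force every terminal-to-terminal temporal path to traverse only time-$t_m$ edges, whose underlying structure is precisely the subdivided $G$.

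The correctness analysis, to be carried out in Section~\ref{s:app:apx-hard}, would proceed in two directions. In the forward direction, starting from any Steiner tree $T$ of cost $W(T)$, I would build a $\TC$-feasible spanning subgraph consisting of (a) the time-$t_m$ edges corresponding to $T$ (both halves of the subdivision for each weight-$2$ tree edge), (b) one half-edge $\{u, z_e\}$ for every $e = \{u,v\} \in E_2 \setminus T$ so that each orphan $z_e$ is reachable, and (c) all $p$- and $q$-incident auxiliary edges together with a single $p$--$q$ connecting edge; a direct check would show the resulting subgraph is temporally connected and has total weight $W(T) + C$, where $C$ is a polynomial-time computable constant depending only on $|V|$ and $|E_2|$. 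In the backward direction, I would argue that any $\TC$-feasible solution of weight $W$ must include the auxiliary edges that provide the only cheap connectivity among non-terminals and $z_e$'s (accounting for the additive $C$); its remaining time-$t_m$ edges then induce a connected subgraph spanning $S$ in the subdivision of $G$, which projects to a Steiner tree of cost at most $W - C$.

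The main obstacle is engineering the time labels and auxiliary structure so that (i) terminal-to-terminal connectivity is truly forced to use only time-$t_m$ edges encoding $G$'s topology, yet (ii) the overhead $C$ is linear in $|V|+|E_2|$ and, after a standard preprocessing of Bern--Plassmann's hard instances so that $|V| = O(|S|)$, is $O(\mathrm{OPT}_{\mathrm{ST}})$. A secondary subtlety is the degenerate case in which the chosen Steiner tree uses only weight-$1$ edges between terminals and contains no internal non-terminal or $z_e$ to act as a bridge between the terminals and the hubs; this must be handled either by preprocessing such instances away or by enriching the construction with a single ``terminal bridge'' edge at time $t_m$ incident to a distinguished terminal. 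Once the overhead is controlled at a constant factor of the Steiner tree optimum, the reduction is strictly approximation-preserving, and the $\APX$-hardness of Steiner Tree with weights in $\{1,2\}$ transfers to unweighted $\TC$, ruling out a PTAS unless $\mathrm{P} = \NP$.
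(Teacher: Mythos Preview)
Your plan is essentially the paper's: reduce from $\{1,2\}$-weighted Steiner Tree (Bern--Plassmann), subdivide the weight-$2$ edges so that all edges have unit weight, put every Steiner-graph edge at a single middle label, and attach two hubs $p,q$ at flanking labels so that terminal-to-terminal temporal paths are forced through the middle-label edges while all other pairs are connected via the hubs. The paper then exploits that in the Bern--Plassmann instances the number of (original and subdivision) non-terminals is $O(|S|)$, so the auxiliary overhead is $O(\mathrm{OPT}_{\mathrm{ST}})$, exactly as you outline.

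The specific gadget you sketch does not work, however, and the repair is the substance of the proof. With $p$ adjacent to every non-terminal at time $t_m-1$ and $q$ at time $t_m+1$, the resulting temporal graph is not connected: once a path is at time $\ge t_m$ it can never return to $t_m-1$, so no terminal can ever reach $p$, and symmetrically $q$ can reach no terminal; a single $p$--$q$ edge at any one label cannot repair both directions simultaneously. The paper uses five labels rather than three and introduces three further auxiliary vertex sets ($A$ and $B$ of size $|S|$ each, plus one vertex $x$), giving $p\!\to\! a_i\!\to\! u_i$ and $u_i\!\to\! b_i\!\to\! q$ paths on labels $(1,5)$ that close exactly these gaps. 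Crucially, it then verifies \emph{edge by edge} that every auxiliary edge is individually necessary in any feasible $\TC$ solution; this is what lets you subtract the full overhead $C$ in the backward direction and obtain $(1+\eps)\Rightarrow(1+12\eps)$. Your backward step assumes this necessity without the gadget supporting it. Two of your side worries disappear once the gadget is right: the ``orphan'' $z_e$'s are non-terminals, hence already attached to $p$ and $q$, and need no extra $t_m$ half-edge; and the degenerate all-terminal-tree case cannot arise because the Bern--Plassmann graph is bipartite with terminals on one side, so every terminal--terminal path already passes through a non-terminal.
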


\section{All-Pairs Temporal Connectivity on Trees and Cycles}
\label{s:special}

We can do better if the underlying graph is either a tree or a cycle. In Section~\ref{s:app:tree}, Lemma~\ref{l:tree}, we show that if the underlying graph is a tree, there is an optimal solution to the $\TC$ problem that uses each edge with at most two time labels. Using this structural property, we can show that $\TC$ can be solved efficiently by dynamic programming if the underlying graph is a tree (see Section~\ref{s:app:tree} for the details).

\begin{theorem}\label{th:tree}
Let $\G$ be a temporal tree on $n$ vertices with lifetime $L$. There is a dynamic programming algorithm that computes an optimal solution to $\TC$ on $\G$ in time $O(n L^{4})$.
\end{theorem}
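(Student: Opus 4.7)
The plan is to exploit Lemma~\ref{l:tree}---some optimal solution uses at most two distinct labels per edge---to design a tree dynamic program with a compact $O(L^2)$-sized state per vertex.

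I would root $\G$ at an arbitrary vertex $r$ and, for each vertex $v$ with induced subtree $T_v$, define a table $f(v, \tau_{\mathrm{in}}, \tau_{\mathrm{out}})$ indexed by $\tau_{\mathrm{in}}, \tau_{\mathrm{out}} \in [L]$. The value $f(v, \tau_{\mathrm{in}}, \tau_{\mathrm{out}})$ is the minimum weight of a temporal spanning subgraph of $\G[T_v]$ that (i) makes $T_v$ internally temporally connected, (ii) admits, for every $u \in T_v$, a temporal $v$-to-$u$ path starting at time $\tau_{\mathrm{in}}$, and (iii) admits, for every $u \in T_v$, a temporal $u$-to-$v$ path ending by time $\tau_{\mathrm{out}}$. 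This pair is a complete summary of what $T_v$ must expose to the outside, because if $v$'s parent edge uses labels $a \leq b$ (at most two by Lemma~\ref{l:tree}), then its interaction with the rest of the tree is feasible iff $a \leq \tau_{\mathrm{in}}$ and $b \geq \tau_{\mathrm{out}}$.

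The recurrence processes $v$'s children $c_1, \ldots, c_d$ in sequence. For each child $c_i$ I would precompute an interface table $h(c_i, \ell, e)$: the minimum of $f(c_i, \tau_{\mathrm{in}}^{(i)}, \tau_{\mathrm{out}}^{(i)})$ plus the cost of the labels $a_i \leq b_i$ chosen on the edge $e_i = \{v, c_i\}$ (with cost $w(e_i, a_i)$ if $a_i = b_i$ and $w(e_i, a_i) + w(e_i, b_i)$ otherwise), taken over all choices such that the effective entry time $\ell = \max\{s \in \{a_i, b_i\} : s \leq \tau_{\mathrm{in}}^{(i)}\}$ and the effective exit time $e = \min\{s \in \{a_i, b_i\} : s \geq \tau_{\mathrm{out}}^{(i)}\}$ take the prescribed values. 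Merging $c_i$ then just aggregates: $\tau_{\mathrm{in}}(v) \leftarrow \min(\tau_{\mathrm{in}}(v), \ell_i)$ and $\tau_{\mathrm{out}}(v) \leftarrow \max(\tau_{\mathrm{out}}(v), e_i)$, with added cost $h(c_i, \ell_i, e_i)$. Whenever $v$ has at least two children, the constraint $\tau_{\mathrm{out}}(v) \leq \tau_{\mathrm{in}}(v)$ must be enforced at the end of the merge---otherwise the cross-subtree pairs in $T_v$ cannot be connected---and the final answer is $\min\{f(r, \tau_{\mathrm{in}}, \tau_{\mathrm{out}}) : \tau_{\mathrm{out}} \leq \tau_{\mathrm{in}}\}$.

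Correctness rests on the observation that in a tree the unique simple $u$--$w$ path is the only route along which a temporal walk from $u$ to $w$ can make progress; hence cross-subtree pairs at an internal vertex $v$ are realizable iff the earliest exit time from every child subtree is at most the latest entry time into every other child subtree, which is precisely $\tau_{\mathrm{out}}(v) \leq \tau_{\mathrm{in}}(v)$. For the running time, the $f$-table has $O(nL^2)$ entries; each interface table $h(c_i, \cdot, \cdot)$ is built in $O(L^4)$ from the $O(L^2)$ child states and the $O(L^2)$ label pairs, and each aggregation step iterates over $(\tau_{\mathrm{in}}(v), \tau_{\mathrm{out}}(v)) \times (\ell_i, e_i)$ in $O(L^4)$; summed over the $n-1$ tree edges the total is $O(nL^4)$. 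The principal obstacle is Lemma~\ref{l:tree} itself: without the two-label bound, the per-vertex state would need to track a much richer label-availability profile on the parent edge, breaking the polynomial state size. I expect the lemma to be proved by an exchange argument, showing that any optimal solution can be pruned so that on each edge only the earliest and the latest labels actually needed to carry its temporal paths are kept, without losing connectivity.
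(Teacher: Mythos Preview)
Your overall plan—a tree DP indexed by a pair $(\tau_{\mathrm{in}},\tau_{\mathrm{out}})$ at each vertex, built on Lemma~\ref{l:tree}—is exactly the paper's approach. However, your cross-subtree connectivity check is wrong, and this is a genuine gap, not a cosmetic one.

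After merging the children of $v$ you set $\tau_{\mathrm{in}}(v)=\min_i \ell_i$, $\tau_{\mathrm{out}}(v)=\max_i e_i$ and then require $\tau_{\mathrm{out}}(v)\le\tau_{\mathrm{in}}(v)$. This forces $e_i\le \ell_j$ for \emph{all} pairs $i,j$, including $i=j$. But cross-subtree connectivity only needs $e_i\le \ell_j$ for $i\ne j$; there is no reason the up-label on a child edge must be at most the down-label on the \emph{same} edge. Concretely, let $v$ have children $c_1,c_2$, with a leaf $d$ below $c_1$, and label sets $L_{\{v,c_1\}}=\{2,5\}$, $L_{\{c_1,d\}}=\{3\}$, $L_{\{v,c_2\}}=\{1,6\}$. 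This temporal tree is connected (e.g.\ $d\to c_1\to v\to c_2$ uses labels $3,5,6$ and $c_2\to v\to c_1\to d$ uses $1,2,3$), and on $\{v,c_1\}$ the only feasible choice is $\ell_1=2$, $e_1=5$, forced by the single label $3$ underneath. Every merge at $v$ then yields $\tau_{\mathrm{out}}(v)\ge 5 > 2\ge \tau_{\mathrm{in}}(v)$, so your DP declares the instance infeasible.

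The paper avoids this by aggregating children sequentially with an \emph{asymmetric} update: when absorbing child $c_j$ with down/up labels $(t'_i,t'_o)$, it recurses on the previously processed children with parameters $(\max(t_i,t'_o),\min(t_o,t'_i))$. Unrolling this produces precisely the constraints $t'_o{}^{(i)}\le t'_i{}^{(j)}$ for $i\ne j$ and never imposes $t'_o{}^{(i)}\le t'_i{}^{(i)}$. If you want to keep a symmetric merge, $(\min_i\ell_i,\max_i e_i)$ alone is not enough state; you would need, for instance, to additionally track the second-smallest $\ell$ and the second-largest $e$, or simply adopt the paper's one-child-at-a-time update. The final answer is then just $f(r,0,\infty)$ (vacuous external constraints), since internal connectivity is already enforced inside the recurrence.
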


In Section~\ref{s:app:cycle}, we observe that if the underlying graph is a cycle $C_n = (v_0, v_1, \ldots, v_{n-1}, v_0)$, any temporally connected subgraph $\G'$ can be partitioned into \emph{sectors}. A sector is a connected part $(v_i, v_{i+1}, \ldots, v_k)$ of the cycle for which there is a vertex $v_j \not\in \{ v_i, \ldots, v_{k-1} \}$ such that the temporal paths $p_{\mathrm{incr}} = (v_i, v_{i+1}, \ldots, v_j)$ and $p_{\mathrm{decr}} = (v_{k}, v_{k-1}, \ldots, v_{j+1})$ are present in $\G'$ (the vertex indices along $C_n$ are taken modulo $n$). Intuitively, any vertex in the sector $(v_i, v_{i+1}, \ldots, v_k)$ can reach every vertex in $C_n$ through the paths $p_{\mathrm{incr}}$ and $p_{\mathrm{decr}}$. Then, in Lemma~\ref{l:cycle}, we show that there is an optimal solution to the $\TC$ problem on $C_n$ where each edge is shared by at most two different sectors. Then, ignoring edges shared by different sectors and using dynamic programming to determine a near optimal partitioning of $C_n$ into sectors, we obtain the following.

\begin{theorem}\label{th:cycle}
There is a polynomial-time $2$-approximation algorithm for the $\TC$ problem on any temporal cycle $C_n$. \end{theorem}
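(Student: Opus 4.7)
The plan is to reduce the problem to a polynomial-time DP over partitions of $C_n$ into sector arcs, using Lemma~\ref{l:cycle} to bound the approximation ratio. Recall that any temporally connected spanning subgraph of $C_n$ decomposes into sectors---arcs $(v_i,\ldots,v_k)$ together with a meeting vertex $v_j$ and monotonically labeled temporal paths $p_{\mathrm{incr}}$ and $p_{\mathrm{decr}}$---whose arcs partition the cycle but whose realizing paths may extend past those arcs. Lemma~\ref{l:cycle} guarantees an optimum in which each underlying edge of $C_n$ is used by at most two sectors. Define $c(A)$ as the minimum total weight needed to realize an arc $A$ as a sector, minimized over the meeting vertex $v_j$ and over the time-label choices along $p_{\mathrm{incr}}$ and $p_{\mathrm{decr}}$ (edges used by both paths of the same sector are counted once). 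Because each underlying edge contributes to at most two sectors, summing $c(A)$ over the sector arcs of the guaranteed optimum yields a value at most $2\cdot\mathrm{OPT}$.

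The algorithm first precomputes $c(A)$ for every arc $A$ of $C_n$. Fixing $v_j$ reduces the realization to two independent minimum-weight monotonically labeled path problems, each a shortest-path computation on a small time-expanded DAG, and taking the best $v_j$ gives $c(A)$ in polynomial time. In the second phase, the algorithm selects a partition of $C_n$ into arcs minimizing $\sum_i c(A_i)$. Fixing an arbitrary vertex $v_0$ and guessing which partition arc contains $v_0$ reduces this to a minimum-cost arc partition of a path, which is solved by a standard left-to-right DP in polynomial time.

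The output---the union of the temporal edges used to realize the selected sector arcs---is temporally connected: each vertex lies in exactly one arc, and from its containing sector it can reach every vertex of $C_n$ via suffixes of the two monotone paths (continuing through adjacent sectors as needed). For the approximation guarantee, the sector decomposition supplied by Lemma~\ref{l:cycle} is a feasible candidate for the partition DP, with sum of standalone realization costs at most $2\cdot\mathrm{OPT}$, so the DP-optimal partition has cost at most $2\cdot\mathrm{OPT}$. The true weight of the output is bounded above by this sum (edges shared by two selected sectors are counted twice in the sum but only once in the union), so the algorithm is a polynomial-time $2$-approximation.

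The main obstacle is Lemma~\ref{l:cycle} itself: showing that some optimum decomposes into sectors whose arcs partition $C_n$ and in which every underlying edge is touched by at most two sectors. This should follow from an exchange argument exploiting cycle topology---every pair of vertices is connected by essentially the two cyclic arcs, so if three sectors' realizations all used a single edge $e$, one of them could be shortened or rerouted, transferring responsibility to an adjacent sector without increasing cost or breaking reachability. A secondary subtlety lies in the per-arc DP's definition of $c(A)$, which must account for temporal edges used by both $p_{\mathrm{incr}}$ and $p_{\mathrm{decr}}$ of the same sector only once; this is handled by jointly enumerating the set of temporal edges realizing the sector rather than summing two path weights independently.
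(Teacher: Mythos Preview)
Your proposal is correct and follows essentially the same approach as the paper: precompute minimum sector costs, run a DP over partitions of the cycle into sector arcs (after guessing a starting vertex), and invoke Lemma~\ref{l:cycle} to bound the sum of sector costs by $2\cdot\mathrm{OPT}$. The paper defines sector cost simply as $\mathrm{inc\_path}+\mathrm{dec\_path}$ (accepting any within-sector double count, which is harmless for the factor-$2$ bound since Lemma~\ref{l:cycle} guarantees each \emph{temporal} edge lies in at most one increasing and at most one decreasing path), so your joint-enumeration refinement is unnecessary; note also that each sector's two paths already cover all of $C_n$, so no ``continuing through adjacent sectors'' is needed for feasibility.
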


\bibliographystyle{plain}
\bibliography{temporal}

\begin{thebibliography}{10}

\bibitem{AGMS15}
E.C. Akrida, L.~G\k{a}sieniec, G.B. Mertzios, and P.G. Spirakis.
\newblock {On Temporally Connected Graphs of Small Cost}.
\newblock In {\em Proc. of the 13th Workshop on Approximation and Online
  Algorithms (WAOA~'15)}, volume 9499 of {\em LNCS}, pages 84--96, 2015.

\bibitem{AGMS14}
E.C. Akrida, L.~G\k{a}sieniec, G.B. Mertzios, and P.G. Spirakis.
\newblock Ephemeral networks with random availability of links: The case of
  fast networks.
\newblock {\em Journal of Parallel and Distributed Computing}, 87:109--120,
  2016.

\bibitem{Ber96}
K.A. Berman.
\newblock {Vulnerability of scheduled networks and a generalization of Menger's
  Theorem}.
\newblock {\em Networks}, 28(3):125--134, 1996.

\bibitem{BP89}
M.W. Bern and P.E. Plassmann.
\newblock {The {Steiner} Problem with Edge Lengths 1 and 2}.
\newblock {\em Information Processing Letters}, 32(4):171--176, 1989.

\bibitem{BGM13}
K.~Bhawalkar, S.~Gollapudi, and K.~Munagala.
\newblock Coevolutionary opinion formation games.
\newblock In {\em Proc. of the 45th {ACM} Symposium on Theory of Computing
  (STOC~'13)}, pages 41--50, 2013.

\bibitem{CFQS12}
A.~Casteigts, P.~Flocchini, W.~Quattrociocchi, and N.~Santoro.
\newblock Time-varying graphs and dynamic networks.
\newblock {\em {International Journal of Parallel, Emergent and Distributed
  Systems (IJPEDS)}}, 27(5):387--408, 2012.

\bibitem{Char98}
M.~Charikar, C.~Chekuri, T.{-}Y. Cheung, Z.~Dai, A.~Goel, S.~Guha, and M.~Li.
\newblock Approximation algorithms for directed steiner problems.
\newblock {\em Journal of Algorithms}, 33(1):73--91, 1999.

\bibitem{CHK11}
M.~Charikar, M.~Hajiaghayi, and H.J. Karloff.
\newblock {Improved Approximation Algorithms for Label Cover Problems}.
\newblock {\em Algorithmica}, 61(1):190--206, 2011.

\bibitem{Chaz12}
B.~Chazelle.
\newblock The dynamics of influence systems.
\newblock In {\em Proc. of the 53rd {IEEE} Symposium on Foundations of Computer
  Science (FOCS~'12)}, pages 311--320, 2012.

\bibitem{DK99}
Y.~Dodis and S.~Khanna.
\newblock {Design Networks with Bounded Pairwise Distance}.
\newblock In {\em Proc. of the 31st {ACM} Symposium on Theory of Computing
  (STOC~'99)}, pages 750--759, 1999.

\bibitem{DF13}
R.G. Downey and M.R. Fellows.
\newblock {\em Fundamentals of Parameterized Complexity}.
\newblock Texts in Computer Science. Springer, 2013.

\bibitem{DPRSV13}
C.~Dutta, G.~Pandurangan, R.~Rajaraman, Z.~Sun, and E.Viola.
\newblock {On the Complexity of Information Spreading in Dynamic Networks}.
\newblock In {\em Proc. of the 24th {ACM-SIAM} Symposium on Discrete Algorithms
  (SODA~'13)}, pages 717--736, 2013.

\bibitem{EHK15}
T.~Erlebach, M.~Hoffmann, and F.~Kammer.
\newblock On temporal graph exploration.
\newblock In {\em Proc. of the 42nd International Colloquium on Automata,
  Languages and Programming (ICALP~'15)}, volume 9134 of {\em LNCS}, pages
  444--455, 2015.

\bibitem{FKN09}
M.~Feldman, G.~Kortsarz, and Z.~Nutov.
\newblock {Improved approximation algorithms for Directed Steiner Forest}.
\newblock {\em Journal of Computer and System Sciences}, 78(1):279--292, 2012.

\bibitem{FHS11}
L.~Foschini, J.~Hershberger, and S.~Suri.
\newblock {On the Complexity of Time-Dependent Shortest Paths}.
\newblock {\em Algorithmica}, 68(4):1075--1097, 2014.

\bibitem{HK03}
E.~Halperin and R.~Krauthgamer.
\newblock Polylogarithmic inapproximability.
\newblock In {\em Proc. of the 35th {ACM} Symposium on Theory of Computing
  (STOC~'03)}, pages 585--594, 2003.

\bibitem{HS12}
P.~Holme and J.~Saram\"aki.
\newblock {Temporal Networks}.
\newblock {\em Physics Reports}, 519(3):97--125, 2012.

\bibitem{HFL15}
S.~Huang, A.W.{-}C. Fu, and R.~Liu.
\newblock {Minimum Spanning Trees in Temporal Graphs}.
\newblock In {\em Proc. of the 2015 {ACM} International Conference on
  Management of Data (SIGMOD~'15)}, pages 419--430, 2015.

\bibitem{KKK00}
D.~Kempe, J.M. Kleinberg, and A.~Kumar.
\newblock {Connectivity and Inference Problems for Temporal Networks}.
\newblock {\em Journal of Computer and System Sciences}, 64(4):820--842, 2002.

\bibitem{KLO10}
F.~Kuhn, N.A. Lynch, and R.Oshman.
\newblock Distributed computation in dynamic networks.
\newblock In {\em Proc. of the 42nd {ACM} Symposium on Theory of Computing
  (STOC~'10)}, pages 513--522, 2010.

\bibitem{MMCS13}
G.B. Mertzios, O.~Michail, I.~Chatzigiannakis, and P.G. Spirakis.
\newblock Temporal network optimization subject to connectivity constraints.
\newblock In {\em Proc. of the 40th International Colloquium on Automata,
  Languages and Programming (ICALP~'13)}, volume 7966 of {\em LNCS}, pages
  657--668, 2013.

\bibitem{DW05}
R.~O'Dell and R.~Wattenhofer.
\newblock Information dissemination in highly dynamic graphs.
\newblock In {\em Proc. of the {DIALM-POMC} Joint Workshop on Foundations of
  Mobile Computing (DIALM-POMC~'05)}, pages 104--110, 2005.

\end{thebibliography}

\appendix\section*{Appendix}
\makeatletter
\edef\thetheorem{\expandafter\noexpand\thesection\@thmcountersep\@thmcounter{theorem}}
\makeatother
\section{The Structure of Solutions to Single-Source Temporal Connectivity}
\label{s:app:rtc_tree}

The following lemma is implicit in \cite[Section~6]{KKK00} and shows that any feasible solution to $\rTC$ can be transformed to a simple temporal tree without increasing its total weight. Thus, we can always assume that the optimal solution to $\rTC$ is a simple temporal tree.

\begin{lemma}\label{l:rtc_tree}
Given a feasible solution $T'$ for $\rTC$, we can obtain a feasible solution $T$ such that (i) $T$ is a simple temporal graph, (ii) the total weight of $T$ does not exceed the total weight of $T'$, and (iii) the set of edges in $T$ form a tree in the underlying graph.
\end{lemma}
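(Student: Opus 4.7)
The plan is to produce $T$ as a temporal BFS arborescence rooted at $r$ inside $T'$. Since $T'$ is $r$-connected, I would first run a temporal BFS-style algorithm on $T'$ that computes, for each vertex $v$, an earliest arrival time $a(v)$ from $r$ together with a predecessor $p(v)$ and a time label $t_v = a(v)$ such that $\{p(v), v\} \in E_{t_v}$ in $T'$ and $a(p(v)) \leq t_v$; such a pair exists because the last edge of any temporal $r$-$v$ path achieving arrival time $a(v)$ is of precisely this form. When several candidate predecessors exist (in particular, when ties in $a(\cdot)$ occur), I break them by the order in which vertices are settled by the BFS, so that $p(v)$ is always settled strictly before $v$. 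Let $T$ be the temporal subgraph of $T'$ consisting of the $n-1$ temporal edges $(\{p(v), v\}, t_v)$ for $v \neq r$.

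Next I would verify the three properties in the order (iii), (i), (ii). For~(iii), each non-root vertex contributes exactly one underlying edge, giving at most $n-1$ distinct underlying edges; since $p(v)$ is settled strictly before $v$, the parent pointers are acyclic, so these $n-1$ edges on $n$ vertices necessarily form a spanning tree. For~(i), each underlying edge is the parent edge of at most one of its endpoints, hence appears in $T$ with exactly one time label, matching the definition of a simple temporal graph. For~(ii), the set of temporal edges of $T$ is a subset of that of $T'$, and since edge weights are nonnegative, the total weight of $T$ cannot exceed that of $T'$.

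It remains to check that $T$ is still $r$-connected. For every $v \neq r$, the chain $v, p(v), p^{2}(v), \ldots$ terminates at $r$ within at most $n-1$ steps by the settlement-order argument, and reading this chain in reverse yields a sequence of temporal edges of $T$ with nondecreasing time labels $a(r) \leq \cdots \leq a(p(v)) \leq a(v)$, i.e., a temporal $r$-$v$ path inside $T$. The main obstacle is the tie-breaking used to ensure that the parent pointers form a genuine arborescence when several vertices share the same earliest arrival time and are mutually reachable at that exact time; a fixed BFS settlement order handles this cleanly and is the only delicate point in the argument.
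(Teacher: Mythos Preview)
Your proposal is correct and follows essentially the same earliest-arrival-arborescence idea as the paper's proof: both select, for each vertex, an earliest-arriving temporal $r$-path and keep only the resulting tree of edges. Your exposition is in fact more explicit than the paper's (you spell out the Dijkstra/BFS settlement order and the tie-breaking that makes the parent map acyclic, and you obtain properties~(i)--(iii) in one shot rather than first getting the tree and then pruning extra labels), but the underlying argument is the same.
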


\begin{proof}
We consider any instance of $\rTC$ and any feasible solution $T'$. If $T'$ does not include a unique temporal path from $r$ to any other vertex $u$, we keep the path reaching $u$ earliest and remove all other $r-u$ paths from $T'$. So, we do not increase the total weight of $T'$ and obtain a feasible solution $T$ that satisfies (ii) and (iii).

As for (i), let $e = \{u, v\}$ be an edge in $T$. The removal of $e$ partitions the underlying tree into two connected components with vertex sets $V_1$ and $V_2$. Let $r, u \in V_1$ and $v \in V_2$. Thus, any temporal path from $r$ to a vertex in $V_2$ goes through $e$. Let $t_e$ be the earliest time that a path from $r$ goes from $u$ to $v$ through $e$. Since there is a temporal $r-v$ path that reaches $v \in V_2$ at time $t_e$, any path from $r$ to a vertex $v' \in V_2$ can use this path to reach $v$ at $t$ and then proceed to $v'$ using the temporal $v - v'$ path in $T$. The temporal $v - v'$ path must use time labels at least $t_e$, since $t_e$ is the earliest time that we reach $v$ from $r$ in $T$. Thus, we can keep the temporal edge $(e, t_e)$ and remove any other time label associated with $e$ in $T$.
\end{proof}

\section{Proof of Theorem~\ref{th:lower}: Omitted Technical Details}
\label{s:app:lower}

\subsection{Partitioning a Complete Graph into Hamiltonian Paths}
\label{s:app:partition}

We first show that for any even $n \geq 2$, the edges of the complete graph $K_n$ can be partitioned into $n/2$ Hamiltonian paths, each visiting the vertices of $K_n$ in a different order.

\begin{lemma}\label{l:partition}
For any even $n \geq 2$, the edges of the complete graph $K_n$ can be partitioned into $n/2$ Hamiltonian paths.
\end{lemma}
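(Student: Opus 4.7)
The plan is to give an explicit Walecki-style zigzag decomposition. I would label the vertices of $K_n$ as $v_1, v_2, \ldots, v_n$ with indices read modulo $n$ (using representatives in $\{1, \ldots, n\}$), and for each $i \in \{1, \ldots, n/2\}$ define the candidate Hamiltonian path
\[
P_i \;=\; v_i,\ v_{i+1},\ v_{i-1},\ v_{i+2},\ v_{i-2},\ \ldots,\ v_{i+n/2-1},\ v_{i-(n/2-1)},\ v_{i+n/2}.
\]
In other words, the vertex at position $2j-1$ of $P_i$ is $v_{i+j}$ and the vertex at position $2j$ is $v_{i-j}$; equivalently, $P_i$ is the cyclic shift of $P_1$ by $i-1$.

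First I would check that each $P_i$ is actually a Hamiltonian path, which reduces to verifying that the $n$ offsets $0, +1, -1, +2, -2, \ldots, +(n/2-1), -(n/2-1), +n/2$ are pairwise distinct modulo $n$; this is a short residue calculation. Next, by inspecting consecutive pairs I would observe that the $k$-th edge of $P_i$ connects vertices whose indices differ by $(-1)^{k-1} k$ modulo $n$, so $P_i$ uses exactly two edges of ``cyclic distance'' $d$ for each $d \in \{1, \ldots, n/2-1\}$ and exactly one edge of distance $n/2$. Summed over $i \in \{1, \ldots, n/2\}$ this already matches the edge counts of $K_n$ on the nose, so it suffices to prove that the $P_i$ are pairwise edge-disjoint.

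The main obstacle is this edge-disjointness claim at the ``diametric'' distance $n/2$. The $k$-th edges of $P_i$ and $P_{i'}$ differ by a cyclic shift of $s = i - i'$, and two shifted vertex pairs $\{v_a, v_b\}$ and $\{v_{a+s}, v_{b+s}\}$ coincide as edges only when $s \equiv 0$ or $2s \equiv 0 \pmod n$, i.e.\ $s \in \{0, n/2\} \pmod n$. The case $s \equiv 0$ is ruled out by $i \ne i'$, but the case $s \equiv n/2$ is the genuine risk since distance-$n/2$ edges are invariant under a shift by $n/2$. This is precisely where the restriction $i, i' \in \{1, \ldots, n/2\}$ is used: for distinct $i, i'$ in this range one has $|i - i'| \le n/2 - 1 < n/2$, so $s \not\equiv n/2 \pmod n$. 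With edge-disjointness established, the count $(n/2)(n-1) = \binom{n}{2} = |E(K_n)|$ closes the argument.
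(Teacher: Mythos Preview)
Your construction is the same Walecki zigzag the paper uses (up to a sign convention and index shift), so the overall approach matches. Two small issues, though.

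First, your position formula ``the vertex at position $2j-1$ of $P_i$ is $v_{i+j}$ and the vertex at position $2j$ is $v_{i-j}$'' does not match the path you wrote out: in $v_i, v_{i+1}, v_{i-1}, \ldots$ the first vertex is $v_i$, not $v_{i+1}$. This is purely cosmetic.

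Second, and more substantively, your edge-disjointness argument only compares the $k$-th edge of $P_i$ with the $k$-th edge of $P_{i'}$ (same position, shifted by $s=i-i'$). That is not enough for pairwise disjointness: the $k$-th edge of $P_i$ could in principle coincide with the $\ell$-th edge of $P_{i'}$ for $\ell \neq k$. By your own distance observation this forces $\ell = n-k$, and you never rule that case out. If you actually compute the endpoints of the $k$-th edge of $P_i$ and the $(n-k)$-th edge of $P_{i'}$ and set them equal as unordered pairs, you again get $i - i' \equiv n/2 \pmod n$, so your final inequality $|i-i'|\le n/2-1$ still closes the case --- but that step has to be written down. Once you add it, the proof is complete and essentially identical in spirit to the paper's (which, for the record, is even terser on this point).
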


\begin{proof}
Let $a_0, a_1, \ldots, a_{n-1}$ be the vertices%
\footnote{For convenience, we index the vertices of $K_n$ from $0$ to $n-1$, instead of indexing them from $1$ to $n$, as in the proof of Theorem~\ref{th:lower}. After partitioning the vertices into $n/2$ Hamiltonian paths, we can increase all indices by one and become consistent with the notation in the construction of Theorem~\ref{th:lower}.}
of $K_n$. For every $i \in \{ 0, 1,..., (n-2)/2 \}$, we consider the Hamiltonian path
\( p_i = (a_i, a_{i-1}, a_{i+1}, a_{i-2}, a_{i+2}, \ldots, a_{i-n/2+1}, a_{i-n/2-1}, a_{i-n/2}) \),
where all indices are taken modulo $n$.

We claim that for any $j \neq i$, the paths $p_j$ and $p_i$ do not have any edges in common. To this end, we observe that for any $k \in [n-1]$, the absolute difference between the indices of the $k$-th vertex and $(k+1)$-th vertex in any such path is exactly $k$. Therefore, the edge set of each path $p_i$ is uniquely determined by its first vertex $a_i$. Using this observation inductively, we conclude that for any $j \neq i$, the edge sets of the paths $p_j$ and $p_i$ are disjoint.
\end{proof}

\subsection{The Number of Edges and the Lifetime of $\G$}
\label{s:app:part-a}

The total number of edges in $G$ is $n(n+9)/2-3$. Specifically, we have $n(n-1)/2$ edges between vertices in $A$, $2n$ edges connecting vertices in $H \cup M$ to vertices in $A$, $2n$ edges connecting vertices in $M$ to vertices in $H$, and $2(n/2-2)+1$ edges between $M$-vertices.

The total number of different time labels is at most $7n/2$ and each edge has a single label (so the temporal graph is simple). Specifically, we use $n/2$ labels for the edges in $G[A]$ and for the edges connecting vertices in $H$ to vertices in $A$, $2n$ labels for the edges connecting vertices in $M$ to vertices in $H$, $2$ labels for the edges connecting vertices in $M$ to vertices in $A$, and at most $n-2$ labels for the edges in $G[M]$. We also note that the use of non-integral labels in our construction is just for simplicity and without loss of generality.

\subsection{The Temporal Graph $\G$ is Connected}
\label{s:app:part-b}

We proceed to show that the temporal graph $\G$ constructed in the proof of Theorem~\ref{th:lower} is indeed connected. To this end, we consider all possible types of ordered vertex pairs $(u, v)$ and show that $\G$ contains a temporal path from $u$ to $v$. We need to distinguish between several different cases:

\begin{description}
\item[$u, v \in A$:] We move along the path $p_1$ of $\G[A]$, where all edges have label $1$.

\item[$u \in A$, $v = h_i \in H$:] We move from $u$ to $a_i$ using the path $p_1$, where all edges have label $1$. Then, we follow the edge with label $\lfloor i/2 \rfloor$ from $a_i$ to $h_i$.

\item[$u \in A$, $v = m_{2i-1} \in M$:] We first move to $h_{2i-1}$, as in the previous case, and then follow the edge with label $n/2+2i-1$ from $h_{2i-1}$ to $m_{2i-1}$.

\item[$u \in A$, $v = m_{2i} \in M$:] We move from $u$ to $a_{2i}$ using the path $p_1$, where all edges have label $1$. Then, we follow the edge with label $n+1$ from $a_{2i}$ to $m_{2i}$.

\item[$u = h_i \in H$, $v \in A$:] We move from $h_i$ to $a_i$ using the edge with label $\lceil i/2 \rceil$ and proceed to $v$ along the path $p_{n/2}$, where all edges have label $n/2$.

\item[$u = h_i \in H$, $v = h_j \in H$ and $\lceil i/2 \rceil \leq \lceil j/2 \rceil$:]
We first move from $h_i$ to $a_i$, using the edge with label $\lceil i/2 \rceil$, and then on the path $p_{\lceil i/2 \rceil}$ to $a_j$. Then, we move to $h_j$ using the edge with label
$\lceil j/2 \rceil$.

\item [$u = h_i \in H$, $v = h_j \in H$ and $\lceil i/2 \rceil > \lceil j/2 \rceil$:]
We let $\ell_i = 2 \lceil i /2 \rceil$ and $\ell_j = 2 \lceil j /2 \rceil$. We first move from $h_i$ to $m_{\ell_i}$ using the edge with ``very small'' label $(n/2+i)\epsilon$. We then move to $m_{\ell_i-3}$, to $m_n$ and to $m_{\ell_j-1}$ using the edges between $M$-vertices with ``small'' labels. Finally, we move from $m_{\ell_j-1}$ to $h_j$ using the edge with ``large'' label $n/2+j$. We note that the labels of the temporal edges between $M$-vertices are increasing due to the particular ordering that we use.

\item[$u = h_i \in H$, $v = m_{2j-1} \in M$:]
If $\lceil i /2 \rceil \leq j$, we move from $h_i$ to $h_{2j}$ as in the corresponding case, using edges with labels at most $n/2$. Otherwise, we move from $h_i$ to $h_{2j}$ as in the previous case, using edges with labels at most $n/2+2j$. Finally, we move from $h_{2j}$ to $m_{2j-1}$ using the edge with label $n/2+2j$.

\item[$u = h_i \in H$, $v = m_{2j} \in M$:]
We move from $h_i$ to $a_{2j}$ as in the corresponding case, using labels at most $n/2$, and from $a_{2j}$ to $m_{2j}$ using the edge with label $n+1$.

\item[$u = m_{2i-1} \in M$ to any other vertex $v$:]
We move from $m_{2i-1}$ to $a_{2i-1}$ using the edge with $\epsilon$ and from $a_{2i-1}$ to any other vertex, as in the first four cases, since $\epsilon$ is the minimum time label.

\item[$u = m_{2i} \in M$, $v \in A$:]
We move from $m_{2i}$ to $h_{2i}$, using the edge with label $(n/2+2i)\epsilon$, next to the second endpoint of path $p_i$, using the edge with label $i$, and next to any vertex in $A$, along the path $p_i$, where all edges have label $i$.

\item[$u = m_{2i} \in M$, $v = m_{2j} \in M$:]
We move from $m_{2i}$ to $a_{2j}$ as in the previous case, using labels at most $j$, and next to $m_{2j}$, using the edge with label $n+1$.

\item[$u = m_{2i} \in M$, $v = h_j \in H$:]
If $i \leq \lceil j /2 \rceil$, we move from $m_{2i}$ to $h_{2i}$ using the edge with label $(n/2+2i)\epsilon$, next to the second endpoint of path $p_i$, using the edge with label $i$, next to vertex $a_j$ along the path $p_i$, where where all edges have label $i$, and finally to $h_j$ using the edge with label $\lceil j /2 \rceil$.
Otherwise, we move from $m_{2i}$ to $m_{2i-3}$, to $m_n$ and to $m_{2\lceil j /2 \rceil-1}$, using the edges between $M$-vertices, with ``small'' labels, and finally from $m_{2\lceil j /2 \rceil-1}$ to $h_j$, using the edge with label $n/2 + j$.

\item[$u = m_{2i} \in M$, $v = m_{2j-1} \in M$:]
If $i \leq j$, we move to $h_{2j-1}$, as in the first walk of the previous case, using labels at most $j$, and next to $m_{2j-1}$ using the edge with label $n/2+2j-1$. Otherwise, we follow the second walk of the previous case up to $m_{2j-1}$. \qed
\end{description}

\subsection{Removing a Linear Number of Edges Disconnects $\G$}
\label{s:app:part-c}

Since the total number of edges in $\G$ is $n(n+9)/2-3$ and the dense part $\G[A]$ includes $n(n-1)/2$ of them, removing any subset of at least $5n$ edges results in the removal of at least one edge $e$ between $A$-vertices. Let us assume that $e$ has label $i$. We next show that the removal of any edge with label $i$ results in a temporal graph $\G'$ with no temporal path from $h_{2i}$ to $h_{2i-1}$. Hence, the removal of any set of at least $5n$ edges results in a disconnected temporal graph.

To reach a contradiction, let us assume that there is a temporal path $p$ in $\G'$ from $h_{2i}$ to $h_{2i-1}$. Since an edge of label $i$ has been removed, the path $p_i$ is not present in $\G'$. Therefore, $p$ should use some edges with labels other than $i$. We distinguish between two cases:

\begin{description}
\item[The first edge of $p$ has label $i$ and the last edge has label larger than $i$.]
The only edge incident to $h_{2i-1}$ that can be used as the last edge of $p$ has label $n/2+2i-1$. We note that any edges with labels at most $1$ cannot be used by $p$. The same is true for the edges with label $n+1$, which connect vertices $m_{2j}$ and vertices $a_{2j}$, for every $j \in [n/2]$, since all other edges incident to vertices $m_{2j}$ have labels at most $1$ and all other edges incident to $a_{2j}$ have labels at most $n/2$. If we ignore all edges with labels at most $1$ and $n+1$, we have no edges between $A$-vertices and $M$-vertices and no edges between $M$-vertices. Therefore, $p$ has to visit vertex $a_{2i}$ first and then to move through the vertices of $A$. At some point, a vertex $m_j \in M$ must be visited, before $p$ finally reaches $h_{2i-1}$. But this leads to a contradiction, since there are no usable edges between $A$-vertices and $M$-vertices. Hence, no such path $p$ from $h_{2i}$ to $h_{2i-1}$ exists.

\item[Otherwise.]
The only alternative for $p$ is to begin with label $(n/2 + 2i)\epsilon$ and to end with either label $i$ or label $n/2+2i-1$. Then, the first edge brings $p$ to $m_{2i}$. At that point, any edge with the minimum label $\epsilon$ cannot be used by $p$. As before, the same is true for the edges with label $n+1$, which connect vertices $m_{2j}$ and vertices $a_{2j}$, for every $j \in [n/2]$, since all other edges incident to vertices $m_{2j}$ have labels at most $1$ and all other edges incident to $a_{2j}$ have labels at most $n/2$. So, we ignore all edges with labels either $\eps$ or $n+1$. Then, $p$ can only visit vertices $m_{2j-1}$, with $j < i$, or the vertex $m_n$ (through some edge with a `small'' label in $[3/4, 1]$). From $m_n$, $p$ cannot proceed any further, since all edges that connect $m_n$ to some vertex in $H$ have ``very small'' labels in $(0, 1/4]$. Otherwise, if the last $M$-vertex visited by $p$ is $m_{2j-1}$, for some $j < i$, the next edge of $p$ must have either label $n/2+2j-1$ (vertex $h_{2j-1}$) or label $n/2+2j$ (vertex $h_{2j}$), and then, $p$ cannot proceed any further. Again, we reach a contradiction. Hence, no such path $p$ from $h_{2i}$ to $h_{2i-1}$ exists. \qed
\end{description}

\subsection{Tightness of the Construction: Increasing a Single Label}
\label{s:app:part-d}

We next show that if in the temporal graph $\G$ constructed in Theorem~\ref{th:lower}, we increase the label of edge $\{ a_1, m_1 \}$ from $\epsilon$ to $1$, the resulting temporal graph $\G'$ admits a connectivity certificate with $\Theta(n)$ edges.

\begin{lemma}\label{l:fragile}
Let $\G$ be the temporal network constructed in Theorem~\ref{th:lower} and let $\G'$ be the temporal network obtained from $\G$ if we increase the label of edge $\{ a_1, m_1 \}$ from $\epsilon$ to $1$. Then, $\G'$ admits a temporal connectivity certificate with $\Theta(n)$ edges.
\end{lemma}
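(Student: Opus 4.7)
The plan is to exhibit an explicit spanning subgraph $\G'' \subseteq \G'$ of size $O(n)$ and then verify pairwise temporal connectivity by a case analysis over ordered vertex pairs. The observation driving the construction is that relabeling $\{a_1, m_1\}$ from $\epsilon$ to $1$ makes it available at exactly the same time as (i) every edge of the Hamiltonian path $p_1$ in $A$, (ii) the label-$1$ attachments from $h_1, h_2$ to the endpoints of $p_1$, and (iii) the largest-labeled intra-$M$ edge $\{m_1, m_n\}$ (also labeled $1$). Together these edges form a temporally connected ``backbone'' at time $1$ spanning $A \cup \{h_1, h_2, m_1, m_n\}$ that ties the dense $A$-structure directly to the $M$-structure through the single vertex~$a_1$.

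The certificate $\G''$ consists of: (a) the $n{-}1$ edges of $p_1$; (b) for each $i \in [n]$, the edge $\{h_i, v_i\}$ to the corresponding $A$-endpoint $v_i$ at label $\lceil i/2 \rceil$; (c) the modified edge $\{a_1, m_1\}$; (d) all $n{-}3$ intra-$M$ edges; (e) for each $i \in [n]$, the exit-vertex edge $\{h_i, m_{2\lceil i/2\rceil}\}$ at label $\Theta(\epsilon)$; (f) for each $k \in [2, n/2]$, the entry edge $\{h_{2k-1}, m_{2k-1}\}$ at label $n/2 + 2k - 1$ together with the back edge $\{m_{2k-1}, a_{2k-1}\}$ at label $\epsilon$; and (g) for each $k \in [n/2]$, the edge $\{a_{2k}, m_{2k}\}$ at label $n+1$. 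A direct count gives $11n/2 - O(1) = \Theta(n)$ edges.

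Temporal connectivity is established by arguing that every pair is linked by a walk that funnels through $a_1$ at time exactly $1$. From any source, first reach $a_1$ by time $1$: this is trivial for sources in $A \cup \{h_1, h_2, m_1\}$ via the backbone; from $m_{2k-1}$ (entry, $k \geq 2$) via (f) at label $\epsilon$ and then $p_1$; from $m_n$ directly via $\{m_1, m_n\}$ at label $1$; from $m_{2k}$ (exit, $2 \leq k < n/2$) via the intra-$M$ walk $m_{2k} \to m_{2k-3} \to m_n \to m_1$ whose labels $1-(n-2-k)\epsilon,\ 1-(k-2)\epsilon,\ 1$ are increasing in $(3/4, 1]$ (using $k \leq n/2$); and, critically, from $h_i$ with $i \geq 3$, by first taking the exit edge~(e) to jump to $m_{2\lceil i/2\rceil}$ at time $\Theta(\epsilon)$ and continuing as in the $m$-case. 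Once at $a_1$ at time $1$, traverse $p_1$ to any target $A$-vertex and, if needed, use the attachment edge from (b), (f), or (g) to reach any $h_j$ (at time $\lceil j/2 \rceil$), any $m_{2k-1}$ (via $h_{2k-1}$ at time $n/2 + 2k - 1$), or any $m_{2k}$ (via $a_{2k}$ at time $n+1$). The two vertices $m_2$ and $m_{n-1}$, which lie outside the intra-$M$ subgraph, are handled respectively by the exit edge $\{h_1, m_2\}$ from (e) and by the pair $\{h_{n-1}, m_{n-1}\}, \{m_{n-1}, a_{n-1}\}$ from (f).

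The main obstacle is the bookkeeping that verifies monotonicity of the intra-$M$ label sequence along the crucial route, which requires unpacking the specific interleaved ordering in which intra-$M$ labels are assigned in the construction of Theorem~\ref{th:lower}. Beyond that, the remaining cases all instantiate the same ``funnel through $a_1$ at time $1$'' template.
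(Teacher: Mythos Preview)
Your proposal is correct and follows essentially the same approach as the paper: both keep the Hamiltonian path $p_1$ together with the $O(n)$ peripheral edges (the $H$--$A$, $M$--$H$, $M$--$A$, and intra-$M$ edges) and route every pair through the walk $m_{2k}\to m_{2k-3}\to m_n\to m_1\to a_1$ followed by $p_1$, exploiting that the relabeled edge $\{a_1,m_1\}$ now matches the label of $p_1$. The only difference is presentational: the paper simply deletes the paths $p_2,\ldots,p_{n/2}$ from~$\G'$, reuses the connectivity verification already done in Section~\ref{s:app:part-b}, and patches just the four cases that previously used those paths; you instead build a slightly smaller explicit certificate (dropping a few unneeded entry edges) and redo the full case analysis from scratch.
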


\begin{proof}
We show that the temporal graph $\G'$ remains connected if we remove the edges of all paths $p_i$ with $i > 1$. Clearly, this results in a temporal graph with $\Theta(n)$ edges. The proof shows how to modify all temporal $u - v$ paths that are described in Section~\ref{s:app:part-b} and use some edges of a path $p_i$, with $i > 1$. Again, we need to consider several cases for the initial vertex $u$ and the final vertex $v$:

\begin{description}
\item[$u = h_i \in H$, $v \in A$:]
We move from $h_{i}$ to $m_{2i}$, to $m_{2i-3}$, to $m_n$, and to $m_1$, using only edges with labels at most $1$. Using the edge $\{ m_1, a_1 \}$, with label $1$, we can visit any vertex in $A$ through the path $p_1$, where all edges have label $1$.

%

\item[$u = h_i \in H$, $v = h_j \in H$:]
As in the first case, we move from $h_{i}$ to the endpoint of the path $p_{\lceil j/2 \rceil}$ which is connected to $h_j$. This part uses labels at most $1$. For the last step, we use the edge from the endpoint of $p_{\lceil j/2 \rceil}$ to $h_{j}$, which has label $\lceil j/2 \rceil \geq 1$.

\item[$u = m_{2i} \in M$, $v \in A$:]
Similarly to the first case, we move from $m_{2i}$ to $m_{2i-3}$, to $m_n$, and to $m_1$, using only edges with labels at most $1$. Using the edge $\{ m_1, a_1 \}$, with label $1$, we can visit any vertex in $A$ through the path $p_1$, where all edges have label $1$.

\item[$u=m_{2i} \in M$, $v=h_j \in H$:]
As in the previous case, we from $m_{2i}$ to the endpoint of the path $p_{\lceil j/2 \rceil}$ which is connected to $h_j$. This part uses labels at most $1$. For the last step, we use the edge from the endpoint of $p_{\lceil j/2 \rceil}$ to $h_{j}$, which has label $\lceil j/2 \rceil \geq 1$.
\end{description}
\end{proof} 

\section{Single-Source Temporal Connectivity}
\label{s:app:alg}

\subsection{Proof of Theorem~\ref{th:rooted_lb}: Omitted Technical Details}
\label{s:app:rooted_lb}

We next prove that (i) any feasible solution $T$ of the $\DST$ instance $I = (G(V,E,w), S, r)$ can be mapped to a temporally $r$-connected subgraph $T'$ of $\G'$ with no larger cost; and (ii) given a temporally $r$-connected subgraph $T'$ of $\G'$, we can efficiently compute a directed Steiner tree $T$ which connects $r$ to all terminal vertices in $S$ and has no larger cost than $T'$.

We first show (i). Let $T$ be any feasible solution to $\DST$. For every vertex $u$, let $d(u)$ be the number of edges on the $r - u$ path in $T$ (we consider the $r-u$ path with the smallest number of edges, if there are many). To transform $T$ into an $r$-connected subgraph $T'$ of $\G'$, we first include in $T'$ all $0$-cost edges incident to $r$. Next, for each directed edge $(u,v)$ in $T$, we include in $T'$ the temporal path $u - z^u_{d(u)} - v$, with time labels $d(u)$ and $d(u)+1$ and weight $w(e)$. Clearly, the total weight of $T'$ does not exceed the total weight of $T$. Moreover, there is a temporal path from $r$ to any vertex $u$ of $\G'$. If $u \in Z \cup (V \setminus S)$, we use the direct edge $\{r, u\}$ with time label $n+1$. If $u \in S$, there is a directed $r - u$ path in $T$. For the $i$-th edge of this path, $i \leq n-1$, we have added a temporal path of length $2$ with labels $i$ and $i+1$. The union of these paths defines a temporal path from $r$ to $u$.

We next show (ii). Let $T'$ be any $r$-connected subgraph of $\G'$. Without increasing the cost of $T'$, we can assume that any edge $\{ z^u_i, v \}$ with positive weight $w(u, v)$ in $T'$ belongs to a temporal $u - v$ path of length $2$, where $u, v \in V$. To transform $T'$ into a feasible solution $T$ of the $\DST$ instance $I$, we include in $T$ the edge $(u, v) \in E$ of any temporal path $u - z^u_i - v$ in $T'$. Clearly, the total weight of $T$ does not exceed the total weight of $T'$. To show that $T$ is a feasible solution, we observe that all edges incident to a terminal vertex $u \in S$ have time labels at most $n$. Therefore, the maximum label in any temporal $r - u$ path is at most $n$. Hence, for any $u \in S$, $T'$ includes a temporal $r-u$ path consisting of temporal paths of length $2$, i.e., of paths of the form $v_k - z^{v_k}_{i} - v_{k+1}$. Each of these paths becomes a directed edge $(v_k, v_{k+1})$ in $T$. These directed edges together form a directed $r-u$ path in $T$. \qed

\subsection{The Proof of Theorem~\ref{th:rooted_up}}
\label{s:app:rooted_up}

Let $I$ be an instance of $\rTC$ consisting of an underlying graph $G(V, E)$, a source $r \in V$, a finite set of time labels $L_e$ for each edge $e$, and a cost $w(e, t)$ for any temporal edge $(e, t)$. We show how to transform $I$ into an instance $I' = (H, S, r')$ of $\DST$ so that (i) any feasible solution of $I$ can be mapped to a feasible solution of $I'$ with no larger cost; and (ii) given a feasible solution of $I'$, we can compute a feasible solution of $I$ with no larger cost.

The directed graph $H$ of the $\DST$ instance contains a non-terminal vertex for each temporal edge $(e, t)$ of $\G$. For every (ordered) pair of temporal edges $(e_1, t_1)$, $(e_2, t_2)$ of $\G$, such that $e_1 \neq e_2$, $t_1 \leq t_2$ (or $t_1 < t_2$, for strict $r$-connectivity), and $e_1$ and $e_2$ share a common endpoint, there is a directed edge $((e_1, t_1), (e_2, t_2))$ with weight $w(e_2, t_2)$ in $H$.
%
%
$H$ also contains a root vertex $r'$ corresponding to the source $r$. For every temporal edge $(e, t)$ incident to $r$ in $\G$, there is a directed edge $(r', (e, t))$ with weight $w(e, t)$ in $H$. For every vertex $u \in V \setminus\{ r\}$, $H$ contains a terminal vertex $u'$. The terminal set $S$ of $I'$ consists of all these terminal vertices $u'$. Moreover, for every temporal edge $(e, t)$ incident to $u$ in $\G$, $H$ contains a directed edge $((e, t),u')$ with weight $0$. The directed graph $H$ has $O(M)$ vertices, $n-1$ of which are terminal vertices.

We first show (i). Let $T$ be any feasible solution of the $\rTC$ instance $I$. By  Lemma~\ref{l:rtc_tree}, we can assume that $T$ is a simple temporal graph and that its underlying graph is a tree (otherwise, we can obtain such a solution from $T$ without increasing its total weight). Thus, $T$ contains a unique temporal path from $r$ to any vertex $u \in V \setminus \{ r \}$. To obtain a feasible solution $T'$ for the $\DST$ instance $I'$, we do the following, starting from $r$, in a BFS order: For every pair of vertices $u, v \in V$ such that $u$ is the immediate predecessor of $v$ on the unique temporal $r - v$ path in $T$, $u$ is reached through the temporal edge $(e_1, t_1)$, and $v$ is reached from $u$ through the temporal edge $(e_2, t_2)$, we add the directed edge $((e_1, t_1), (e_2, t_2))$ to $T'$ (this edge exists in $H$, because both $e_1$ and $e_2$ have $u$ as one of their endpoints and $t_1 \leq t_2$). For every vertex $v \in V$ reachable in $T$ directly from $r$ through the temporal edge $(e, t)$, we add the directed edge $(r', (e, t))$ to $T'$. Moreover, we add all directed edges of weight $0$ to $T'$. Clearly, the cost of $T'$ does not exceed the cost of $T$. As for the feasibility of $T'$, we observe that any temporal $r -u$ path in $T$, which consists of a temporal edge sequence $((e_1, t_1), (e_2, t_2), \ldots, (e_k, t_k))$, with $e_1$ starting at $r$ and $e_k$ ending at $u$, corresponds to a directed $r' - u'$ path $(r', (e_1, t_1), (e_2, t_2), \ldots, (e_k, t_k), u')$ in $T'$

We next show (ii). Let $T'$ be any feasible solution of the $\DST$ instance $I'$. Thus, $T'$ includes a directed path from $r'$ to any vertex $u' \in S$. For each Steiner vertex $(e, t)$ used by some path in $T'$, we add the corresponding temporal edge to the solution $T$ of the $\rTC$ instance $I$. Clearly, the total cost of $T$ is no more than the total weight of $T'$. We also need to show that $T$ is indeed a feasible solution for the $\rTC$ instance $I$. For every vertex $u \in V \setminus \{ r \}$, there is a directed path $(r', (e_1, t_1), (e_2, t_2), \ldots, (e_k, t_k), u')$ in $T'$. By the construction of $I'$, $t_1 \leq t_2 \leq \cdots \leq t_k$ and $(e_1, e_2, \ldots, e_k)$ form an $r - u$ path in the underlying graph of $\G$. Therefore, since the temporal edges $(e_1, t_1), (e_1, t_2), \ldots, (e_k, t_k)$ are present in $T$, $T$ includes a temporal $r - u$ path for every vertex $u \in V \setminus \{ r \}$.
\qed

\def\time{\mathrm{time}}

\subsection{The Proof of Theorem~\ref{th:rooted_twidth}}
\label{s:app:rooted_twidth}

In this section, we present an exact polynomial-time algorithm for $\rTC$ on temporal graphs with treewidth at most $k$. The algorithm is based on dynamic programming and its time complexity is $O(n k^2 3^k (L+k)^{k+1})$.
To this end, we consider a temporal graph $\G(V, E)$ on $n$ vertices with lifetime $L$, source vertex let $r$ and treewidth $k$. Below, we use the standard notation and terminology employed in dynamic programming algorithms for graph theoretic problems on graphs of bounded treewidth (see e.g., \cite[Chapter~10]{DF13}).

We assume a \emph{nice tree decomposition} of $\G$ with $O(n k)$ bags, where the root bag consists only of the vertex $r$. Also, for each bag $i$, consider an enumeration $v_1^i, v_2^i, \ldots, v_p^i$ of the vertices in bag $i$, where $p\leq k+1$. For each bag $i$, we define $f(i, a_1, t_1, \ldots, a_{k+1}, t_{k+1})$, where $a_j \in \{0,1\}$, $t_j\in [L+k+1]$, as follows:

\begin{itemize}
\item Let $H$ be the subgraph of $\G$ induced on the vertices of bag $i$ and all of its descendants in the tree decomposition.

\item Let $T$ be a minimum cost temporal subgraph of $H$ such that for every vertex $v$ of $H$, there exists a vertex $v_j^i$ with $a_j=1$ and a temporal path from $v_j^i$ to $v$ in $T$, starting no sooner than $t_j$.

\item Moreover, if $v=v_l^i$ for some $l$, then this temporal path ends at time $t_l$.
\end{itemize}

Note that $H$ is always a temporal forest. Intuitively, $a_j=1$ means that the $j$-th vertex of the bag is currently connected to the root, while $a_j=0$ means that it is not connected yet, and we should connect it via some temporal path in $H$. Note that some of the arguments of $f$ can be undefined, as some bags may have less than $k+1$ vertices. The cost of the optimal solution to $\rTC$ is then $f(1,1,1)$, since we start at bag $1$ (which only contains the source $r$) and we are at $r$ at time $1$. To break the symmetry introduced by same-label components, for each set of vertices in bag $i$ with the same time label $t$, we introduce an \textit{ordering} between these vertices. This is done to avoid cyclic dependencies between parent links, which would lead to an infeasible solution. To express such an ordering $x_1,\ldots ,x_k$, suppose that the vertices $v_{x_1}^i, \ldots, v_{x_k}^i$ all share the same time label $t$ (the temporal path from $r$ to each of them ends at time $t$). Then, we set $t_{x_1}=t, t_{x_2} = L + x_1, \ldots, t_{x_k}=L+x_{k-1}$. Thus, we express the required information with space $(L+k)^{k+1}$.

Below, we let $c_1(i)$ and $c_2(i)$ denote the left and right children of bag $i$ respectively. The recurrence relation for computing $f$ is the following:
\begin{description}
\item[Case 1: $i$ is an introduce vertex.] Suppose wlog that $v_p^i$ is introduced. If $a_p=0$, then there must exist a temporal path from some vertex $v_j^i$ with $a_j=1$ to $v_p^i$ in $T$, the second-to-last vertex of the path being in bag $i$. This is because $v_p^i$ exists in bag $i$ but none of its descendants, so this is the last chance to connect it to vertices with $a_j=1$. It is also the last chance to connect vertices with $a_j=0$ directly to $v_p^i$ (as its children). To this end, we try all possible vertices as parents of $v_p^i$, and also all possible subsets of vertices with $a_j=0$ as the set of children of $v_p^i$. Of course, we adhere to the time labels of the vertices, as well as the ordering introduced above. Thus, we have that
\begin{align*}
 f(i,a_1,t_1,\ldots,0,t_p) =
   \min_{j\neq i}&\!\left\{ f(c_1(i),a_1,t_1,\ldots,1,t_p) +
                w_{\time(p)}(v_j^i,v_p^i)\right.\,:\\
      & \ \ \ \time(j)<\time(p)\,\lor\\
      & \ \ \ (\time(j)=\time(p)\land \mbox{$j$ is before $p$ in the ordering}) \Bigg\}
\end{align*}
where $\time(j)$ is the time label that the temporal path from $r$ to $v_j^i$ ends. (If $t_j\leq L$, then $\time(j)=t_j$, otherwise $\time(j) = \time(t_j-L)$, where $t_j-L$ is the index of the predecessor of $v_j^i$ in the ordering of vertices with $\time(k) = \time(j)$).
Similarly,
\begin{align*}
 f(i,a_1,t_1,\ldots,1,t_p) =
   \min_S&\!\left\{ f(c_1(i), a_1^\ast, t_1, \ldots, a_{p-1}^\ast, t_{p-1}) +
            \sum_{v_j^i\in S} w_{\time(j)}(v_p^i,v_j^i)\right.\,:\\
      & \ \ \ S \subseteq \{v_j^i\in\{v_1^i,\ldots,v_{p-1}^i\} : a_j=0\,\land\\
      & \ \ \ (\time(p)<\time(j)\,\lor\\
      & \ \ \ (\time(p)=\time(j)\land \mbox{$p$ is before $j$ in the ordering})) \}\Bigg\}
\end{align*}
where $a_j^\ast=0$ if $a_j=0$ and $v_j^i \not\in S$, and $a_j^\ast=1$ otherwise.
Therefore, the complexity of computing all values of $f$ that fall to this case is $O(n k^2 3^{k} (L+k)^{k+1})$. This breaks down into $O(nk)$ for the size of the tree, $O(k)$ for iterating over vertices of $S$ and computing $\time(j)$, $O(3^{k+1})$ for choices over vertices with $a_j=1$ as well as vertices with $a_j=0$ and $v_j^i\in S$, and $(L+k)^{k+1}$ for choices of $t_j$.

\item[Case 2: $i$ is a forget vertex.] Suppose wlog that $v_{p+1}^{c_1 (i)}$ is forgotten, and iterate over all possible values $t_{p+1}$. $v_{p+1}^i$ is either assigned a new time label, or is inserted in the ordering between vertices with the same time label. We have that
%
\begin{align*}
 f(i,a_1,t_1,\ldots,a_p,t_p) =
   \min_{t_{p+1}\in [L+k]}&\Big\{ f(c_1 (i),a_1,t_1,\ldots,a_{j'},L+p+1,\ldots,a_p,t_p,0,t_{p+1}) \,:\\
      & \ \ \ t_{p+1}\leq L\lor t_{p+1}-L=j\in [p]\,\Big\}
\end{align*}
where $v_{j'}^i$ was the successor of $v_j^i$ (if it had one) in the ordering, which now becomes the successor of $v_{p+1}^i$ in the case that $t_{p+1}-L=j$ ($v_{p+1}^i$ is inserted between
$v_j^i$ and $v_{j'}^i$ in the ordering). So, in these cases, the total time complexity is $O(n k^2  2^{k} (L+k)^{k+1})$.

\item[Case 3: $i$ is a join vertex.] Then some vertices with $a_j=0$ will be connected in $c_1(i)$ and some others in $c_2(i)$. Suppose that $P=\{ v_j^i \in \{v_1^i,\ldots,v_p^i\} : a_j=0 \}$. Therefore, trying every possible partition of the vertices with $a_j=0$, we have that
\begin{align*}
  f(i,a_1,t_1,\ldots,a_p,t_p) = \min_{S_1,S_2}& \Big\{ f(c_1(i),a_1^\ast,t_1,\ldots,a_p^\ast,t_p)
             + f(c_2(i),a_1^{\ast\ast},t_1,\ldots,a_p^{\ast\ast},t_p) \,: \\
        &\ \ S_1\cup S_2 = P \mbox{ and } S_1\cap S_2 = \emptyset \,\Big\}
\end{align*}
Here we have that $a_j^\ast=0$ and $a_j^{\ast\ast}=1$, if $a_j=0$ and $v_j^i\in S_2$, that $a_j^\ast=1$ and $a_j^{\ast\ast}=0$, if $a_j=0$ and $v_j^i\in S_1$, and that $a_j^\ast=a_j^{\ast\ast}=1$, if $a_j=1$.

Here, the total time complexity is $O(n k^2 3^k (L+k)^{k+1})$. We observe that the vertex order is introduced in the recurrence due to Case~3, since without such an ordering, it would not be possible to solve the two subproblems independently. 
\end{description}

Overall we can compute $f$, and the corresponding minimum cost $r$-connected temporal subgraph, in time $O(n k^2 3^k (L+k)^{k+1})$. \qed

\def\ST{\mathrm{ST}}
\def\OPT{\mathrm{OPT}}

\section{APX-hardness of Unweighted $\TC$: The Proof of Theorem~\ref{th:apx-hard}}
\label{s:app:apx-hard}

In this section, we show that $\TC$ on unweighted temporal graphs is $\APX$-hard and does not admit a PTAS, unless $\mathrm{P} = \NP$, using an approximation-preserving reduction from Steiner Tree on undirected graphs with edge weights $1$ and $2$ to $\TC$. 

Given an undirected edge-weighted graph $G(V, E)$ and a set of terminals $S \subseteq V$, the Steiner Tree $(\ST)$ problem asks for a connected subgraph of $G$ that spans $S$ and has minimum total weight. It is easy to see that the optimal solution is always a tree. If the weight of each edge is either $1$ or $2$, we get the Steiner Tree problem with weights $1$ and $2$ ($\ST(1, 2)$).

In \cite{BP89}, $\ST(1, 2)$ is shown $\APX$-hard by a reduction from Vertex Cover on graphs with bounded maximum degree. If $n$ and $m$ are the number of vertices and edges of the Vertex Cover instance respectively, then the $\ST(1, 2)$ instance has $m$ terminals and $n$ non-terminals. Furthermore, all the terminals have degree $2$ and the graph is bipartite, the one part consists of the terminal vertices and the other part consists of the non-terminal vertices. This means that the total number of edges is at most $2m$. In order to use uniform weights for the edges of the temporal graph, we substitute each edge of weight $2$ with a $2$-edge path, where each edge has weight $1$ and the newly added vertex is a non-terminal vertex. Since at most one non-terminal vertex per edge is added, the resulting number of non-terminal vertices is at most $n + 2m \leq 3m$, assuming that the initial graph is connected and not a tree. So we obtain an instance of the Steiner Tree problem with uniform edge-weights, a set $S$ of $m$ terminal vertices and a set $T$ of at most $3m$ non-terminal vertices.

To reduce this special case of the Steiner Tree problem to $\TC$ with uniform temporal edge weights, we create a temporal variant of the above graph, where each existing edge has label $3$. To ensure temporal connectivity, we add some new elements to our temporal graph. We add new vertices $p$ and $q$ and connect every non-terminal vertex in $T$ to $p$ with time label $4$ and to $q$ with time label $2$. These edges ensure that there is a temporal path from every vertex in $S \cup T$ (terminal and non-terminal vertices, respectively) to every other vertex in the same set. 

It remains to ensure temporal connectivity between vertices $p$ and $q$ and to the rest of the temporal graph. Suppose we enumerate the elements of $S$ as $u_1, u_2,\ldots,u_m$. We add $m$ new vertices $a_1,a_2,\ldots,a_m$, composing vertex set $A$. To create a path from $p$ to any vertex of $S$, add temporal edges with label $1$ from $p$ to every vertex in $A$, and for each $i\in [m]$, add a
temporal edge between $a_i$ and $u_i$ with label $5$. Similarly, to connect $S$ to $q$, add a new vertex set $B$ consisting of vertices $b_1,b_2,\ldots,b_m$, for every $i$ add a temporal edge between $u_i$ and $b_i$ with label $1$, and add an edge between $q$ and every vertex in $B$ with label $5$. Finally, to create a temporal path from $p$ to $q$, add a new vertex $x$, which is connected to $p$ with label $1$ and to $q$ with label $5$. Then, we solve $\TC$ on the resulting temporal graph $\G$.

We are ready to show the following. Note that Theorem~\ref{th:apx-hard} follows from Lemma~\ref{l:apx-hard} below and the fact that Steiner Tree with edge weights $1$ and $2$ is $\APX$-hard. 
\begin{lemma}\label{l:apx-hard}
An $(1+\eps)$-approximate solution to $\TC$ implies an $(1+12\eps)$-approximate solution to the Steiner Tree instance with uniform weights.
\end{lemma}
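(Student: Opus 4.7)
The plan is to establish an exact identity $\OPT_{\TC}=F+\OPT_{\ST}$, where $F$ is a ``mandatory cost'' of order $O(m)$ that depends only on the fixed gadget, and then convert a $(1+\eps)$-approximation for $\TC$ into a $(1+12\eps)$-approximation for $\ST$ by a short arithmetic argument. The proof decomposes into (i) identifying all mandatory edges of a feasible $\TC$-solution, (ii) showing that the remaining label-$3$ edges must form a feasible Steiner tree, and (iii) the approximation calculation.

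For (i), each auxiliary vertex $a_i$, $b_i$, $x$ has only two incident temporal edges in $\G$, and both must be used so that the vertex can reach, and be reached from, the rest of the graph at compatible times; this contributes $4m+2$ mandatory edges. The more delicate point is that for every non-terminal $t\in T$, the edge $(p,t)$ at label $4$ is the unique way $p$ can reach $t$: every alternative first step from $p$ (through $a_i$ at label $1$ or through $x$ at label $1$) terminates at $u_i$, $a_i$, $b_i$, $q$, or $x$ at a time label from which $t$ is no longer accessible. Symmetrically, $(q,t)$ at label $2$ is the unique way $t$ can reach $q$, because every alternative route through $p$ (reached only at label $4$) or along label-$3$ Steiner edges (reaching various $u_i, t'$ at labels $\geq 3$) can never rewind to a label-$1$ edge incident to $b_i$ or $x$, which are the only other approaches to $q$. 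This yields $2|T|$ additional mandatory edges, for $|F|=4m+2+2|T|$ in total.

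For (ii), consider any two terminals $u_i,u_j\in S$ and observe that any temporal $u_i\to u_j$ path must begin with a label-$3$ Steiner edge: the alternatives $(a_i,u_i)$ at label $5$ and $(u_i,b_i)$ at label $1$ either lead directly to a dead end at $a_i$, or reach $q$ at label $5$, after which only $b_k$ and $x$ are reachable and $u_j$ is not. The same case analysis propagates inductively, forcing the entire $u_i\to u_j$ path to lie in the label-$3$ subgraph; hence the set of chosen Steiner edges is connected and spans $S$, so its cost is at least $\OPT_{\ST}$. Conversely, after adding any feasible Steiner tree to the mandatory edges, a routine pairwise check (routing through the $p$-hub at labels $1$ and $4$, through the $q$-hub at labels $2$ and $5$, and along the Steiner tree at label $3$) verifies that $\G$ becomes temporally connected. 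This yields $\OPT_{\TC}=4m+2+2|T|+\OPT_{\ST}$.

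For (iii), a $\TC$-solution of cost at most $(1+\eps)\OPT_{\TC}$ contains a set of label-$3$ edges of cost at most $\OPT_{\ST}+\eps(4m+2+2|T|+\OPT_{\ST})$, and any spanning tree of this set is a feasible Steiner tree of no greater cost. Using $|T|\leq 3m$ from the construction and the simple lower bound $\OPT_{\ST}\geq m$ for $m\geq 2$ (the bipartite structure of the $\ST(1,2)$ instance makes $S$ an independent set, so any Steiner tree on $m$ terminals needs at least $m$ edges), the resulting ratio is at most $1+\eps+\eps(10m+2)/m\leq 1+12\eps$. The main difficulty will be step (i): the verification that $(p,t)$ and $(q,t)$ are mandatory for every non-terminal $t$ is the place where it is easiest to overlook a back-door route, and relies delicately on the specific label allocation so that every candidate alternative temporal path stalls at the wrong time.
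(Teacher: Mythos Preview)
Your proposal is correct and follows essentially the same approach as the paper: both arguments establish that every non-Steiner edge is mandatory via the same case analysis, that temporal $S$--$S$ paths are confined to label-$3$ edges (hence yield a feasible Steiner tree), and that the mandatory cost $F=4m+2+2|T|\le 10m+2\le 11\,\OPT_{\ST}$ converts a $(1+\eps)$-approximation for $\TC$ into a $(1+12\eps)$-approximation for $\ST$. The only cosmetic difference is that you explicitly invoke $\OPT_{\ST}\ge m$ (from the bipartite independence of $S$), which the paper uses implicitly in writing $11m\le 11\,\OPT$.
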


\begin{proof}
The proof consists of two main claims. The first claim is that any feasible solution to $\TC$ on the resulting temporal graph $\G$ uses all the edges that are not in the Steiner Tree instance and that none of them can be part of some temporal path between vertices of $S$. The second claim is that taking all the edges not in the Steiner Tree instance together with the set of temporal edges corresponding to a feasible solution to the Steiner Tree instance results in a temporally connected subgraph of $\G$. 

Having established these claims and assuming that the number of the additional (not in the Steiner Tree instance) edges is $k \leq 4m + 6m + 2 \leq 11m \leq 11\,\OPT$, where $\OPT$ is the optimal
solution to the Steiner Tree instance, we obtain that the $\TC$ solution has cost at most $(1+\eps)(\OPT + k)$. Deleting these $k$ edges, we have a feasible solution to Steiner Tree of cost at most $(1+\eps)\OPT + \eps k \leq (1+12\eps)\OPT$.

Let us first show that any feasible solution to $\TC$ on the temporal graph $\G$ resulting by the reduction above uses all the edges that are not in the Steiner Tree instance. We examine several cases:
\begin{description}
\item[Edge between $u_i$ and $a_i$ is missing for some $i$:]
Then there is no temporal path from $u_i$ to $a_i$, since $a_i$ is only adjacent to a temporal edge with label $1$, and there is no $1$-labeled path from $u_i$ to $a_i$.

\item[Edge between $u_i$ and $b_i$ is missing for some $i$:]
Then there is no temporal path from $b_i$ to $u_i$, since $b_i$ is only adjacent to a temporal edge with label $5$, and there is no $5$-labeled path from $b_i$ to $u_i$.

\item[Edge between $a_i$ and $p$ is missing for some $i$:]
Then there is no temporal path from $a_i$ to $p$, since $a_i$ is only adjacent to a temporal edge with label $5$, and there is no $5$-labeled path from $a_i$ to $p$.

\item[Edge between $b_i$ and $q$ is missing for some $i$:]
Then there is no temporal path from $q$ to $b_i$, since $b_i$ is only adjacent to a temporal edge with label $1$, and there is no $1$-labeled path from $q$ to $b_i$.

\item[Edge between $p$ and $t\in T$ is missing:]
Then there is no temporal path from $p$ to $t$: Going from $p$ to $A$ or $x$ doesn't help, since we then have to use label $5$, which is a dead-end. Using another edge with label $4$ is also a dead-end.

\item[Edge between $q$ and $t\in T$ is missing:]
Then there is no temporal path from $t$ to $q$: Reasons similar as above.

\item[Edge between $p$ and $x$ or edge between $q$ and $x$ is missing:]
Then there is no temporal path from $x$ to $p$, or from $q$ to $x$ respectively.
\end{description}

So, we have shown that all temporal edges not present in the Steiner Tree instance are necessary for a feasible solution to the $\TC$ instance. Furthermore, every temporal path between two vertices of $S$ can consist only of temporal edges with label $3$: If an edge between $S$ and $A$ is traversed, then we can no longer proceed, since it has label $5$. If an edge between $S$ and $B$ is traversed, then an edge between $B$ and $q$ must be traversed, but then we can no longer proceed to $S$, since we have traversed an edge with label $5$. This means that the edges with label $3$ in some feasible solution define a feasible solution to the Steiner Tree instance.

Let us now show that taking all the edges not in the Steiner Tree instance together with the set of temporal edges corresponding to a feasible solution to the Steiner Tree instance results in a temporally connected subgraph of $\G$. For this, it suffices to prove that the vertices in $S \cup T \cup \{p\} \cup \{q\}$ are connected to each other with temporal paths, since the rest of the vertices are connected to them with both minimum and maximum labels ($1$ and $5$). Hence, temporal connectivity is implied. We already have that there is a $3$-labeled path between each pair of vertices in $S$. Consider the following cases:
\begin{description}
\item[From $u_i\in S$ to $t\in T$:] Move from $u_i$ to some $t'\in T$ with label $3$, and then to $t$ with label $4$ via $p$. (This also connects $u_i$ to $p$, and $t'$ to $t$).

\item[From $u_i\in S$ to $q$:] $(u_i,b_i,q)$

\item[From $t\in T$ to $u_i \in S$:]
Move from $t$ to some $t'$ (with label $2$ via $q$) that is connected to $u_j\in S$ with label $3$, and then move on the $3$-labeled path to $u_i$. (This also connects $t$ to $q$, $q$ to $t$ and $q$ to $u_i$).

\item[From $t\in T$ to $p$:]
Move on edge with label $4$. (This also connects $p$ to $t$).

\item[From $p$ to $u_i\in S$:] $(p, a_i, u_i)$

\item[From $p$ to $q$:] $(p,x,q)$

\item[From $q$ to $p$:] $(q, t, p)$, for some $t \in T$. 
\end{description}

Since we have shown all the required facts, the claim is proven.
\end{proof}

\section{Temporal Connectivity on Trees and Cycles: Omitted Proofs}
\label{s:app:special}

\subsection{Temporal Connectivity on Trees: A Polynomial-Time Algorithm}
\label{s:app:tree}

$\TC$ can be solved exactly in polynomial time if the underlying graph is a tree.

\begin{lemma}\label{l:tree}
Let $\G$ be a temporal tree. Then, there is an optimal solution to $\TC$ on $\G$ that uses at most two time labels of each edge.
\end{lemma}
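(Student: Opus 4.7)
The plan is to argue edge by edge: for every edge $e = \{u, v\}$ of the underlying tree, we show that any feasible solution $F$ for $\TC$ can be modified so that $e$ carries at most one label used for $u \to v$ crossings and at most one for $v \to u$ crossings, with no change to the labels of other edges and no increase in total weight. Summing over the two directions then yields the claimed bound of at most two labels per edge.

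Fix $e = \{u, v\}$ and let $T_u, T_v$ be the two subtrees produced by removing $e$. Because the underlying graph is a tree, any temporal path from $a \in T_u$ to $b \in T_v$ traverses $e$ exactly once; moreover, any temporal walk revisiting a vertex can be shortcut without violating the non-decreasing time constraint, so we may assume all temporal paths are simple. In particular, for $a \in T_u$, every temporal $a \to u$ path in $F$ uses only edges of $T_u$, so it is well defined to set $\mathrm{arr}(a)$ to be the earliest time any such path arrives at $u$ (with $\mathrm{arr}(u) = -\infty$). Symmetrically, for $b \in T_v$, let $\mathrm{dep}(b)$ be the latest starting time of a temporal $v \to b$ path in $F$ using only edges of $T_v$. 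The crucial feature is that both quantities depend only on $F$ restricted to the two subtrees, and not at all on the labels of $e$ in $F$.

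A temporal $a \to b$ path in $F$ with $a \in T_u, b \in T_v$ must cross $e$ at some time $t \in F(e)$ satisfying $\mathrm{arr}(a) \leq t \leq \mathrm{dep}(b)$. Setting $A = \max_{a \in T_u} \mathrm{arr}(a)$ and $B = \min_{b \in T_v} \mathrm{dep}(b)$, and letting $a^*, b^*$ realise these extrema, feasibility of $F$ applied to the pair $(a^*, b^*)$ forces the existence of some $t \in F(e)$ with $A \leq t \leq B$; in particular, $A \leq B$ and $F(e) \cap [A, B] \neq \emptyset$. Any single label $t^{u \to v}$ from this non-empty intersection simultaneously serves every pair in $T_u \times T_v$, so all other labels of $e$ used for $u \to v$ crossings may be discarded. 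A symmetric argument for the $v \to u$ direction produces a label $t^{v \to u}$ of $e$. Replacing $F(e)$ by $\{t^{u \to v}, t^{v \to u}\}$ only deletes temporal edges, so the total weight does not increase; and pairs lying wholly in $T_u$ or wholly in $T_v$ never use $e$, so feasibility is preserved.

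Finally, we apply this reduction to each edge of the tree in turn. The only point of care is that modifying $F(e)$ can change the arrival and departure quantities associated with other edges $e' \neq e$, because the subtrees obtained by removing $e'$ may themselves contain $e$. But this is harmless: the argument for $e'$ requires only that the \emph{current} $F$ be feasible, which holds after every step since we have preserved feasibility throughout. Iterating over all edges yields an optimal solution of $\TC$ in which every edge carries at most two labels, establishing the lemma.
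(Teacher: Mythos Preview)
Your proof is correct and follows essentially the same approach as the paper: fix an edge $e$, split the tree into two subtrees, and argue that one label of $e$ suffices for each crossing direction via a max--min argument, then iterate over edges. Your formulation through the explicit quantities $\mathrm{arr}(\cdot)$ and $\mathrm{dep}(\cdot)$ and your care in justifying the edge-by-edge iteration are, if anything, a bit more explicit than the paper's version, but the underlying idea is the same.
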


\begin{proof}
Let $T$ be the underlying tree of $\G$. For some edge $e$, denote by $T_A$, $T_B$ the connected components in which the $T$ is partitioned after the deletion of $e$. For some fixed node $u$ in $T_A$ and every node $v$ in $T_B$, there is a temporal path in the optimal solution that connects $u$ to $v$.
Among all these temporal paths, let $(u, (e_1,t_1), (e_2,t_2), \ldots, (e,t), \ldots, v)$ be the path with the minimum traversal time for edge $e$. Note that for all nodes $v'$ of $T_B$ other than $v$, there exists a path $(u, (e_1',t_1'), (e_2',t_2'), \ldots, (e,t'), p', v')$ from $u$ to $v'$.
Since $t'\geq t$, the path $(u, (e_1,t_1), (e_2,t_2), \ldots, (e,t), p', v')$ is also a valid temporal $u - v'$ path. This means that in some optimal solution to $\TC$, one time label $t(e, u)$ on edge $e$ can be used by all paths from $u$ to vertices in $T_B$. If we keep as the time label of $e$ the maximum time label $t(e, v)$, over all vertices $v \in T_A$, this can be used by any temporal path from some vertex in $T_A$ to some vertex in $T_B$. Hence, for all paths from $T_A$ to $T_B$, one label suffices for $e$.
By symmetry, the same holds for all paths from $T_B$ to $T_A$. Overall, there is an optimal solution to $\TC$ where at most two time labels are used for any edge $e$.
\end{proof}

Now, $\TC$ on temporal trees can be solved as follows: We root the tree arbitrarily at some vertex $r$. For some vertex $u$, let $f(u,t_i,t_o)$ denote the minimum cost of $\TC$ on the temporal tree $T_u$
rooted at $u$, with the additional constraint that for every node $v$ in $T_u$, there must exist a temporal path from $u$ to $v$ using only temporal edges with time label at least $t_i$ and a temporal path from $v$ to $u$ using only temporal edges with label at most $t_o$. Denote by $c_j$, for $i\in [1,x_u]$ the children of $u$ in the tree, where $x_u$ is the number of children. Also, let  $g(u,j,t_i,t_o)$ denote the minimum cost for $\TC$ on the temporal tree rooted at $u$ but with no children other than $c_1,\ldots,c_j$, with the constraints concerning $t_i, t_o$ as stated above, but restricted to the subtrees of the first $j$ children. Obviously, $g(u,0,t_i,t_o)=0$ and $g(u,x_u,t_i,t_o)=f(u,t_i,t_o)$. We now observe that
\begin{align*}
 g(u, j, t_i, t_o) = \min_{t'_i, t'_o \in \nats}
      & \big\{\,g(u, j-1, \max(t_i,t'_o), \min(t_o,t'_i)) +
                 f(c_j, t'_i, t'_o) + q(u, c_j, t'_i, t'_o)\,: \\
      &\ \ t'_i \geq t_i \mbox{ and } t'_o \leq t_o\,\big\}
\end{align*}
In the equation above, $t'_i$ denotes the time that the edge $\{u,c_j\}$ is traversed from $u$ to $c_j$ and $t'_o$ denotes the time that $\{u,c_j\}$ is traversed from $c_j$ to $u$. Moreover,
\[
 q(u, c_j, t_i', t_o') = \left\{\begin{array}{ll}
  w((u,c_j),t_i') + w((u,c_j),t_o') \ \ \ & \mbox{if $t_i'\neq t_o'$}\\
  w((u,c_j),t_i') & \mbox{otherwise}
  \end{array}\right.
 \]

If $L$ is the total number of temporal edges, then $f$ and $g$ can be computed in time $O(n L^4)$ using dynamic programming. The optimal value of $\TC$ is then $f(r,0,\infty)$.
In order to get the actual $\TC$ solution, we just follow the parent links between states of the recurrence relation.

\subsection{Temporal Connectivity on Cycles: 2-Approximation Algorithm}
\label{s:app:cycle}

When the underlying graph is a cycle, we have a 2-approximation algorithm for $\TC$.
In order for the temporal graph to be connected, for every vertex in the cycle there must exist temporal paths to every other vertex.
Specifically, if we mark with $v_0,v_1, v_2, \ldots, v_{n-1},v_n\equiv v_0$ the nodes of the cycle, then for
a solution to be feasible,
for every node $v_i$ there must
exist a node $v_j$, such that there are temporal paths $v_i,v_{i+1},\ldots,v_{j-1},v_j$ and $v_i,v_{i-1},\ldots,v_{j+2},v_{j+1}$ in the solution
(vertex indices are taken modulo $n$). See also Figure~\ref{fig:cycle}.a.
\begin{figure}[t]
	\begin{minipage}[t]{.45\textwidth}\centering
		\includegraphics[width=.6\textwidth]{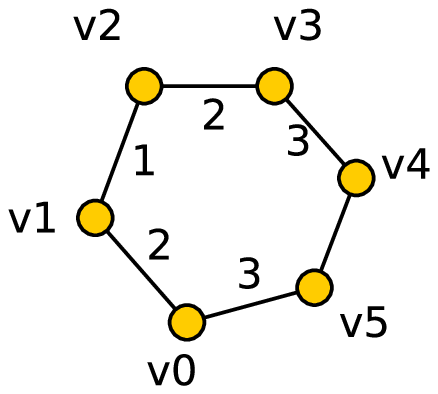}
		\caption*{(a) $v_1$ is connected to any other vertex through temporal
			paths $v_1,v_2,v_3,v_4$ and $v_1,v_0,v_5$.}
	\end{minipage}\hfill%
	\begin{minipage}[t]{.45\textwidth}\centering
		\includegraphics[width=.8\textwidth]{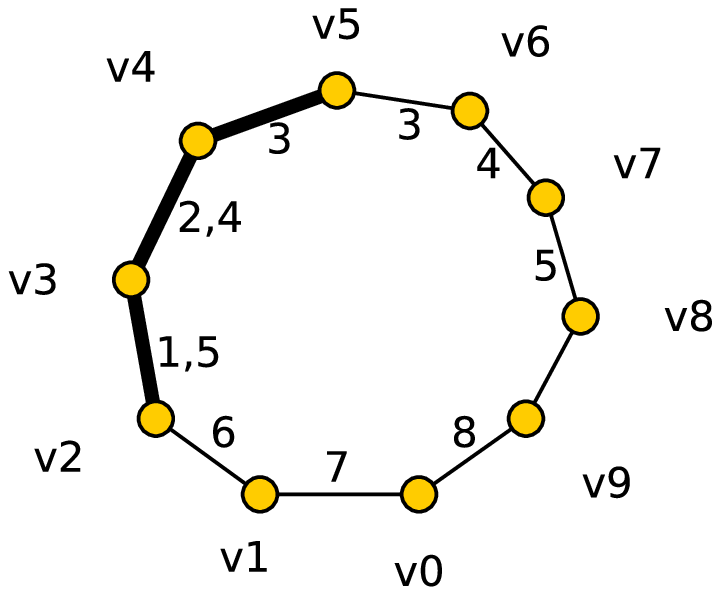}
		\caption*{(b) The sector $v_2,v_3,v_4,v_5$. Its increasing and decreasing paths are
			$v_2,v_3,\ldots,v_8$ (using labels $1,2,3,3,4,5$) and $v_5,v_4,\ldots,v_9$
				(using labels $3,4,5,6,7,8$) respectively.}
	\end{minipage}
	\caption{Connectivity in temporal cycles.}
	\label{fig:cycle}
\end{figure}

\ifx true false
\begin{center}
\begin{tikzpicture}
        \node[draw,circle,minimum width=2cm] at (0,0) (A) {};
        \draw[>=latex,->] (A.west) node[anchor=east]{$v_i$} to [in=180, out=90] (0,1.5) to [in=90,out=0](A.north east);
        \draw[>=latex,->] (A.west)
	to [in=180, out=-90] (0,-1.5) to [in=225,out=0] (1.2,-0.9) to [in=0,out=45] (A.north east) node[anchor=south west]{$v_j$};
\end{tikzpicture}
\end{center}
\fi

\begin{lemma}
Suppose that in some temporally connected subgraph $\G'$ of the temporal cycle, there is a temporal path $v_i,v_{i+1},\ldots,v_j$ and for some $k\in \{i,i+1,\ldots,j\}$ there is also a temporal path $v_k,v_{k-1},\ldots,v_{j+2},v_{j+1}$. Then every vertex $v_p \in \{i,i+1,\ldots,k\}$ is connected to every other vertex in $\G'$.
\end{lemma}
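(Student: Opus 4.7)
The plan is to construct, for every $v_p$ in the sector $\{v_i, \ldots, v_k\}$ and every other vertex $v_q$ of the cycle, an explicit temporal $v_p \to v_q$ path that uses only edges of the two given paths $P_1 = (v_i, v_{i+1}, \ldots, v_j)$ and $P_2 = (v_k, v_{k-1}, \ldots, v_{j+1})$. The key observation is that because $i \le p \le k \le j$, the vertex $v_p$ lies \emph{in the interior} of both paths: it appears on $P_1$ at position $p-i$ and on $P_2$ at position $k-p$. Hence the suffix of $P_1$ starting at $v_p$ and the suffix of $P_2$ starting at $v_p$ are themselves temporal paths (a contiguous tail of a non-decreasing sequence of labels is non-decreasing), so there is no need to splice the two paths together and no label-compatibility issue to worry about at $v_k$ or anywhere else.

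Next, I would split the vertex set of the cycle into two arcs sharing only the endpoint $v_p$: the forward arc $v_p, v_{p+1}, \ldots, v_j$, which is a suffix of $P_1$, and the backward arc $v_p, v_{p-1}, \ldots, v_0, v_{n-1}, \ldots, v_{j+1}$ (indices modulo $n$), which is a suffix of $P_2$. A routine index check confirms that these two arcs together cover every vertex of the cycle exactly once apart from their common endpoint $v_p$.

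For any target $v_q$ on the forward arc, I would take the portion of $P_1$ from $v_p$ to $v_q$; for any target $v_q$ on the backward arc, I would take the portion of $P_2$ from $v_p$ to $v_q$. Each is a temporal $v_p \to v_q$ path in $\G'$ by the interior-suffix observation above, which is exactly what the lemma requires.

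There is essentially no obstacle beyond noticing the interior-suffix fact; the lemma is then a direct consequence of the sector definition and does not invoke any further structural property of $\G'$. The point of stating it is to license, in the subsequent dynamic-programming $2$-approximation for temporal cycles, the intuition that each sector ``takes care of'' the connectivity of all of its vertices to the rest of the cycle through just the two paths $P_1$ and $P_2$.
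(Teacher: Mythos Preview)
Your proposal is correct and follows essentially the same approach as the paper: both take the suffix of $P_1$ starting at $v_p$ to reach the forward arc $v_p,\ldots,v_j$ and the suffix of $P_2$ starting at $v_p$ to reach the backward arc $v_p,\ldots,v_{j+1}$, observing that these two arcs cover the whole cycle. The paper's proof is a three-sentence version of exactly this; your write-up is more explicit about why suffixes of temporal paths are temporal paths and why the two arcs cover everything, but there is no substantive difference.
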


\begin{proof}
Since the temporal path $v_i,v_{i+1},\ldots,v_j$ exists in $\G'$, so does the path $v_p,v_{p+1},\ldots,v_j$. Also, since the path $v_k,v_{k-1},\ldots,v_{j+2},v_{j+1}$ exists in $\G'$, so does the path $v_p,v_{p-1},\ldots,v_{j+2},v_{j+1}$. So there are paths from $v_p$ to every other vertex in $\G'$.
\end{proof}

This leads us to the following definition (see also Figure~\ref{fig:cycle}.b for an example).

\begin{definition}
Consider a temporal cycle $\G$ and a temporally connected subgraph $\G'$ of $\G$. A \textit{sector}
of $\G$ with respect to $\G'$ is a contiguous sequence of vertices $(v_i, v_{i+1}, \ldots ,v_j)$ (it may be $i = j$) such that there exists a vertex $v_k \notin \{i,i+1,\ldots,j-1\}$ and the temporal paths
$p_{\mathrm{incr}} = (v_i, v_{i+1}, \ldots, v_k)$ and $p_{\mathrm{decr}} = (v_{j}, v_{j-1}, \ldots, v_{k+1})$ are present in $\G'$. We refer to $p_{\mathrm{incr}}$ as the \textit{increasing path} and to $p_{\mathrm{decr}}$ as the \textit{decreasing path} of the sector $(v_i, v_{i+1}, \ldots ,v_j)$. The \textit{cost} of a sector $S = (v_i, v_{i+1}, \ldots ,v_j)$ is the minimum, over all choices of increasing and decreasing paths pairs for $S$ (in an increasing and decreasing path pair, if the increasing path ends at $k$, the decreasing ends at $k+1$), of the total cost of all temporal edges present in these increasing and decreasing paths.
\end{definition}

Note that the vertices of a temporal cycle $\G$ with a temporally connected subgraph $\G'$ can always be partitioned into sectors with respect to $\G'$.

\begin{lemma}\label{l:cycle}
For a temporal cycle $\G$, there exists an optimal $\TC$ solution $\G'$, such that no two sectors' increasing (resp. decreasing) paths use the same temporal edge (the same edge at the same time).
\end{lemma}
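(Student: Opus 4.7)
The plan is an extremal argument. Fix any optimal $\TC$ solution $\G'$ and consider the set of all valid \emph{sector decompositions} of $\G'$, that is, partitions of $V$ into sectors $S_1, \ldots, S_m$ (in cyclic order) together with, for each $S_i = (v_{a_i}, \ldots, v_{b_i})$, a choice of $k_i$, an increasing path $p^i_{\mathrm{incr}}$, and a decreasing path $p^i_{\mathrm{decr}}$ consistent with the sector definition. Among all such decompositions, I would pick one that minimizes the number of sectors $m$, and claim it satisfies the lemma.

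Suppose for contradiction that two increasing paths $p^i_{\mathrm{incr}}$ and $p^j_{\mathrm{incr}}$ (with $i<j$ in cyclic order) share a temporal edge $(\{v_r,v_{r+1}\},t)$. Both paths traverse this edge forward at time $t$, which forces $p^i_{\mathrm{incr}}$ to pass through $v_{a_j}$ en route from $v_{a_i}$ to $v_r$. I would merge $S_i, S_{i+1}, \ldots, S_j$ into a single new sector $S'=(v_{a_i},\ldots,v_{b_j})$, picking $k' \in \{k_i, k_j\}$ and constructing the increasing path of $S'$ by splicing at the shared edge: follow the prefix of one of $p^i_{\mathrm{incr}}, p^j_{\mathrm{incr}}$ up to $(\{v_r,v_{r+1}\},t)$, then switch to the suffix of the other up to $v_{k'}$. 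Because both original paths use this edge at exactly time $t$, the splice yields a valid temporal path; the decreasing path of $S'$ is inherited from the corresponding $p^j_{\mathrm{decr}}$ or $p^i_{\mathrm{decr}}$.

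All temporal edges used by $S'$'s paths are already present in $\G'$ (as parts of the original paths of $S_i$ or $S_j$), so the new decomposition is valid for the same optimal $\G'$. The number of sectors strictly decreases (from at least two to one among the merged ones), contradicting the minimality of $m$. The symmetric argument, splicing at a shared decreasing-edge instead, handles the case where two decreasing paths share a temporal edge, so the same minimum-$m$ decomposition witnesses the lemma for both path types.

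The main obstacle I anticipate is the case analysis verifying that at least one of $k' = k_i$ or $k' = k_j$ satisfies the sector-validity constraint $v_{k'} \notin \{v_{a_i}, \ldots, v_{b_j - 1}\}$ for the merged sector. This reduces to a cyclic-geometry check: from the original validity of $S_i$ and $S_j$ we have $v_{k_i} \notin \{v_{a_i}, \ldots, v_{b_i - 1}\}$ and $v_{k_j} \notin \{v_{a_j}, \ldots, v_{b_j - 1}\}$, and the sharing of the edge $(\{v_r,v_{r+1}\},t)$ with $a_j \leq r$ constrains how far forward each increasing path can reach, ruling out the pathological case in which both $v_{k_i}$ and $v_{k_j}$ fall inside the merged interior $\{v_{a_i+1}, \ldots, v_{b_j - 1}\}$; in each subcase the ``outside'' endpoint can be chosen as $k'$ and the splice performed accordingly.
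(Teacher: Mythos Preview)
Your extremal framework (fix an optimal $\G'$, take a sector decomposition with the minimum number of sectors, and derive a contradiction by merging) is exactly the approach of the paper. The gap is in the merge itself. You always form the merged sector as $S'=(v_{a_i},\ldots,v_{b_j})$, the arc from the start of $S_i$ forward to the end of $S_j$, and you assert that at least one of $k'\in\{k_i,k_j\}$ lies outside the interior $\{v_{a_i},\ldots,v_{b_j-1}\}$. This assertion is false, and your claim that sharing the edge forces $p^i_{\mathrm{incr}}$ to pass through $v_{a_j}$ is exactly the case you are missing: it can instead be $p^j_{\mathrm{incr}}$ that wraps around and passes through $v_{a_i}$.

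Concretely, take $n=8$, $S_i=(v_0,v_1)$ with increasing path $v_0,v_1,v_2$ (so $k_i=2$), and $S_j=(v_4,v_5)$ with increasing path $v_4,v_5,v_6,v_7,v_0,v_1$ (so $k_j=1$), both traversing the edge $\{v_0,v_1\}$ at the same time. Your merged sector is $(v_0,\ldots,v_5)$, whose validity requires $k'\notin\{0,\ldots,4\}$; but $k_i=2$ and $k_j=1$ are both inside. No splice of the two increasing paths starting at $v_0$ can reach beyond $v_2$, so the ``pathological case'' you hoped to rule out actually occurs.

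The paper handles precisely this situation by merging in the \emph{other} direction: it forms the sector $(v_{a_j},\ldots,v_{b_i})$ (in the example, $(v_4,v_5,v_6,v_7,v_0,v_1)$), whose increasing path is obtained by following $p^j_{\mathrm{incr}}$ to the shared edge and then splicing onto the suffix of $p^i_{\mathrm{incr}}$, ending at $k'=k_i$; this $k'$ now lies outside the new interior, and $p^i_{\mathrm{decr}}$ supplies the decreasing path. The full argument requires a short case split on where $k_i$ sits relative to $S_j$ (the paper's two cases), because the correct merge direction depends on which of the two increasing paths overruns the other's starting vertex. Once you add that case distinction and allow the merge to go either way around the cycle, your proof goes through.
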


\begin{proof}
Suppose that $\G'$ is partitioned to $s$ disjoint sectors and that $s$ is minimum among all optimal solutions. Furthermore, suppose that there are two (disjoint) sectors, $S_0: v_0,v_1,\ldots,v_j$ and $S_1: v_k,v_{k+1},\ldots,v_l$, defined by this partition, whose increasing paths contain a common temporal edge. Note that disjointness implies $j<k$ and $l<n(\G)$. We will show that this contradicts the minimality of $s$. Suppose the increasing path of sector $S_0$ is $v_0, v_1, \ldots, v_a$ and the increasing path of sector $S_1$ is $v_k,v_{k+1}, \ldots, v_b$.
We distinguish between two cases:

\begin{description}
\item[{$a \in [j, k-1]$}:] In order for the two increasing paths to have a common temporal edge, the increasing path of $S_1$ must be of the form $v_k,v_{k+1}, \ldots , v_z,v_{z+1}, \ldots, v_m$, where $z \in \{0,1,\ldots,a-1\}$ and the edge between $v_z$ and $v_{z+1}$ is visited at the same time by the two increasing paths. This means that we can merge these two paths, creating a path $v_k,v_{k+1},\ldots,v_a$. Combining this with the decreasing path of $S_0$, which is $v_j, v_{j-1}, \ldots, v_{a+1}$, this means that we can replace all sectors that contain vertices $v_k, v_{k+1} \ldots , v_j$ with a single sector consisting of exactly those vertices. Additionally, we have not added extra edges so the cost of the solution will not increase. Also note that the case $b\in [l,n(\G)-1]$ is symmetric to the above case.

\begin{center}
\begin{tikzpicture}
\node[draw,circle,minimum width=2cm] at (0,0) (A) {};
\draw [line width=4pt,domain=160:200] plot ({cos(\x)}, {sin(\x)}) node[anchor=south east]{$S_0$};
\draw [line width=4pt,domain=-20:20] plot ({cos(\x)}, {sin(\x)}) node[anchor=north west]{$S_1$}
	node[anchor=east]{$v_k$};

\draw[>=latex,->] ({cos(200)},{sin(200)}) node[anchor=west]{$v_0$} to [in=180, out=-220] (0,1.5) to [in=90,out=0](A.north east)
	node[anchor=south west]{$v_a$};
\draw[>=latex,->] ({cos(160)},{sin(160)}) node[anchor=west]{$v_j$}
	to [in=180, out=220] (0,-1.5)
	to [in=225,out=0] (1.2,-0.9) to [in=0,out=45] (A.north east);
\end{tikzpicture}
\end{center}

\item[{$a\in [k,n(\G)-1]$ and $b\in [0,k-1]$}:]
Let the common temporal edge of the two paths be connecting $v_z$ and $v_{z+1}$. If $z\in \{0,1,\ldots,k-1\}$, then merging the two increasing paths we obtain a new path
$v_k, v_{k+1},\ldots,v_z,v_{z+1},\ldots,v_k, \ldots, v_a$, which
means that we can replace all sectors with a single sector
with no extra cost. As for the case $z\in \{k,k+1,\ldots,a-1\}$, we can combine temporal
paths $v_0,v_1,\ldots,v_a$ with $v_l,v_{l-1},\ldots,v_{a+1}$ to merge
vertices $v_0,v_1,\ldots,v_l$ into a single sector with no extra cost.

\begin{center}
\begin{tikzpicture}
\node[draw,circle,minimum width=2cm] at (0,0) (A) {};
\draw [line width=4pt,domain=160:200] plot ({cos(\x)}, {sin(\x)}) node[anchor=south east]{$S_0$};
\draw [line width=4pt,domain=-20:20] plot ({cos(\x)}, {sin(\x)}) node[anchor=north west]{$S_1$}
	node[anchor=east]{$v_k$};

\draw[>=latex,->] ({cos(200)},{sin(200)})  node[anchor=west]{$v_0$} to [in=180, out=-220] (0,1.5)
to [in=135, out=0] (1.2,0.9) to [in=0,out=-45] (A.south east)
	node[anchor=north west]{$v_a$};
\draw[>=latex,->] ({cos(160)},{sin(160)}) node[anchor=west]{$v_j$}
to [in=180, out=220] (0,-1.5) to [in=-90,out=0](A.south east);
\end{tikzpicture}
\end{center}

\end{description}
\end{proof}

The above lemma implies that in some $\TC$ solution, every temporal edge is used at most twice by different sectors.
Thus, ignoring the mutual dependence between sectors we can get a $2$-approximate solution to our problem.

\begin{theorem}
There is an algorithm that runs in polynomial time and outputs a feasible $\TC$ solution of a temporal cycle $\G$, with cost at most $2$ times
the optimal.
\end{theorem}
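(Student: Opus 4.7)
The plan is to combine the structural guarantee of Lemma~\ref{l:cycle} with a dynamic programming algorithm that finds a minimum-cost partition of $C_n$ into sectors.

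First, for every ordered pair $(i,j)$ of vertex indices of $C_n$, I would precompute the minimum cost $c(i,j)$ of a sector $(v_i, v_{i+1}, \ldots, v_j)$. This amounts to choosing a split index $k \notin \{ i, i+1, \ldots, j-1 \}$ and then selecting a minimum-cost temporal path $(v_i, v_{i+1}, \ldots, v_k)$ together with a minimum-cost temporal path $(v_j, v_{j-1}, \ldots, v_{k+1})$, each obtainable by a standard shortest-temporal-path computation on the corresponding time-expanded graph. Taking the minimum over all valid $k$ yields $c(i,j)$ in polynomial time.

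Next, I would run a dynamic program to find a partition of $V$ into consecutive sectors $S_1, \ldots, S_m$ minimizing $\sum_\ell c(S_\ell)$. The cyclic wrap-around can be handled in the standard way, either by fixing an arbitrary reference vertex and enumerating over the sector that contains it, or by cutting the cycle at each possible position and running a linear DP on the resulting path. The algorithm then outputs the union of the temporal edges realizing the selected sectors. This union is a temporally connected spanning subgraph of $\G$ by the definition of a sector, since every vertex of each $S_\ell$ reaches every other vertex of $C_n$ through $S_\ell$'s increasing and decreasing paths, and the $S_\ell$'s cover $V$.

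For the approximation guarantee, let $\G^\ast$ be an optimal $\TC$ solution of weight $\mathrm{OPT}$. By Lemma~\ref{l:cycle}, we may assume that no two sectors induced by $\G^\ast$ share a temporal edge along their increasing paths, and likewise for decreasing paths. Hence every temporal edge of $\G^\ast$ contributes to the sum of sector costs at most twice---once in an increasing-path role and once in a decreasing-path role---so the sector partition induced by $\G^\ast$ has total sector cost at most $2\,\mathrm{OPT}$. Since the DP returns the minimum-cost sector partition, the total weight of the output subgraph is at most $\sum_\ell c(S_\ell) \leq 2\,\mathrm{OPT}$. The main obstacle is setting up the sector-cost recurrence and the cyclic book-keeping of the DP so that the output is guaranteed to be feasible and its weight is upper bounded by $\sum_\ell c(S_\ell)$; once these details are in place, the $2$-approximation follows directly from the charging argument above via Lemma~\ref{l:cycle}.
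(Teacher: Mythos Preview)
Your proposal is correct and follows essentially the same approach as the paper: precompute sector costs via a split point $k$ and minimum-cost monotone temporal paths, run a DP over consecutive sector partitions (guessing the start of the first sector to handle the cyclic wrap-around), and invoke Lemma~\ref{l:cycle} to charge each temporal edge of the optimal solution at most twice. The paper's proof differs only in presentation, giving explicit recurrences (\(\mathrm{inc\_path}\), \(\mathrm{dec\_path}\), \(\mathrm{sector\_cost}\), and \(c(i)\)) and the running time \(O(n^3 + nM)\), and phrasing the charging argument via an edge-doubling trick rather than your direct increasing/decreasing role count; the substance is the same.
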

\begin{proof}
Denote the vertices of the cycle as $v_0,v_1,\ldots,v_n\equiv v_0$.
First, we can try all possible vertices as the start of a sector. Let this be vertex $v_0$ of the cycle.
Let $p(i)$ denote the partition of $v_0,v_1,\ldots,v_i$ into sectors with minimum cost (the cost of a set of disjoint sectors is defined as
the sum of costs of the individual sectors) and $c(i)$ its cost.

Furthermore, define as
$\mathrm{inc\_path}(i,j)$ the minimum cost of a temporal path that visits $v_i,v_{i+1},\ldots,v_j$ in this order and
$\mathrm{dec\_path}(i,j)$ the minimum cost of a temporal path that visits $v_i,v_{i-1},\ldots,v_j$ in this order.
$\mathrm{inc\_path}$ and $\mathrm{dec\_path}$ can be precomputed in time $O(n M)$ and space $O(n^2+M)$, where $M$ is the number of temporal edges.

Also, define $\mathrm{sector\_cost}(i,j)$ to be the cost of the sector $v_i,v_{i+1},\ldots,v_j$. This can be precomputed in time $O(n^3)$ and space $O(n^2)$, as

\begin{align*}
 \mathrm{sector\_cost}(i,j) =
   \min_{k\in\{0,\ldots,n-1\}\setminus\{i,i+1,\ldots,j-1\}} \Big\{\,\mathrm{inc\_path}(i,k)+\mathrm{dec\_path}(j,k+1)\,\Big\}
\end{align*}

\ifx true false
\begin{align*}
 f(i,a_1,t_1,\ldots,a_p,t_p) =
   \min_{t_{p+1}\in [L+k]}&\!\left\{ f(c_1 (i),a_1,t_1,\ldots,a_{j'},L+p+1,\ldots,a_p,t_p,0,t_{p+1})\right.\,:\\
      & \ \ \ t_{p+1}\leq L\lor t_{p+1}-L=j\in [p]\Bigg\}
\end{align*}
\fi

Then, we have that $c(0)=\mathrm{sector\_cost}(0,0)$
and that for $i>0$,
\begin{align*}
  c(i) =
    \min \!\left\{\mathrm{sector\_cost}(0,i), \min_{j\in \{0,\ldots,i-1\}} \{c(j) + \mathrm{sector\_cost}(j+1,i)\}\right\}
\end{align*}
$p(i)$ can also be easily computed this way.

Finally, for the choice of $v_0$ that achieves the minimum $c(n-1)$,
the algorithm outputs the union of the optimal increasing and decreasing paths of the sectors defined in $p(n-1)$.

The temporal graph that the algorithm produces is a feasible solution, because every vertex belongs to some sector and is thus
connected to every other vertex.

To show that it is $2$-approximate of the optimal solution, suppose that we double every temporal edge of the graph. The optimal cost
$OPT$ will not change.
Yet, we can transform an optimal solution in which every temporal edge is used by at most $2$ sectors to a solution in which every temporal
edge is used by at most $1$ sector in the doubled graph, with cost at most $2\cdot OPT$. Either this or a smaller cost solution will be
considered and produced by our algorithm.
Overall, our algorithm uses $O(n^3+n\cdot M)$ time and $O(n^2 + M)$ space.
\end{proof}

\end{document}